\newcommand{\cc}{{\sc CliqueAgg}\xspace}
\newcommand{\ccr}{{\sc CliqueAggRec}\xspace}
\newcommand{\cci}{{\sc CliqueAggImpl}\xspace}
\newcommand{\ccri}{{\sc CliqueAggRecImpl}\xspace}
\newcommand{\obj}{{aggregator}}
\newcommand{\objs}{{aggregators}}
\newcommand{\dens}{\rho}
\newcommand{\pruningcode}{\color{teal}}
\newcommand{\Shweta}[1]{{\color{purple} Shweta says: #1}}
\newcommand{\Kuba}[1]{{\color{brown} Kuba says: #1}}
\newcommand{\Blair}[1]{{\color{red} Blair says: #1}}
\definecolor{green1}{rgb}{0.9, 1, 0.9} 
\definecolor{green2}{rgb}{0.8, 1, 0.8} 
\definecolor{green3}{rgb}{0.6, 1, 0.6} 
\definecolor{green4}{rgb}{0.4, 1, 0.4} 
\definecolor{green5}{rgb}{0.2, 1, 0.2} 
\newcommand{\shadecell}[1]{
    \ifdim #1 pt > 0.9pt \cellcolor{white} #1
    \else\ifdim #1 pt > 0.09pt \cellcolor{green1} #1
    \else\ifdim #1 pt > 0.009pt \cellcolor{green2} #1
    \else\ifdim #1 pt > 0.0009pt \cellcolor{green3} #1
    \else\ifdim #1 pt > 0.00009pt \cellcolor{green4} #1
    \else \cellcolor{green5} #1
    \fi\fi\fi\fi\fi
}
\begin{document}

\sloppy

%

\title{Aggregating maximal cliques in real-world graphs}


\author{Noga Alon}
\affiliation{%
  \institution{Princeton University}
}
\email{nalon@math.princeton.edu}

\author{Sabyasachi Basu}
\affiliation{%
  \institution{Microsoft Research}
}
\email{sabyasachi.basu@microsoft.com}

\author{Shweta Jain}
\affiliation{%
  \institution{University of Utah}
}
\email{shweta.jain@utah.edu}

\author{Haim Kaplan}
\affiliation{%
  \institution{Tel Aviv University, Google Research}
}
\email{haimk@tauex.tau.ac.il}

\author{Jakub Łącki}
\affiliation{%
  \institution{Google Research}
}
\email{jlacki@google.com}

\author{Blair D. Sullivan}
\affiliation{%
  \institution{University of Utah}
}
\email{sullivan@cs.utah.edu}

\date{}

\begin{abstract} 
\lineskiplimit=0pt
{\small 
Maximal clique enumeration is a fundamental graph mining task, but its utility is often limited by computational intractability and highly redundant output. To address these challenges, we introduce \emph{$\rho$-dense aggregators}, a novel approach that succinctly captures  maximal clique structure. Instead of listing all cliques, we identify a small collection of clusters with edge density at least $\rho$ that collectively contain every maximal clique.

In contrast to maximal clique enumeration, we prove that for all $\rho < 1$, every graph admits a $\rho$-dense aggregator of \emph{sub-exponential} size, $n^{O(\log_{1/\rho}n)}$, and provide an algorithm achieving this bound. For graphs with bounded degeneracy, a typical characteristic of real-world networks, our algorithm runs in near-linear time and produces near-linear size aggregators. We also establish a matching lower bound on aggregator size, proving our results are essentially tight. In an empirical evaluation on real-world networks, we demonstrate significant practical benefits for the use of aggregators: our algorithm is consistently faster than the state-of-the-art clique enumeration algorithm, with median speedups over $6\times$ for $\rho=0.1$ (and over $300\times$ in an extreme case), while delivering a much more concise structural summary.
}
\end{abstract}

\maketitle
\makeatletter 
\gdef\@ACM@checkaffil{}
\makeatother






\section{Introduction}\label{sec:introduction}
Finding maximal cliques is a fundamental graph mining task with many applications in social network analysis~\cite{faust1995social, scott1992network}, community detection~\cite{wen2016maximal, d2023clique}, bioinformatics~\cite{zhang2008pull, topfer2014viral}, and analysis of financial networks~\cite{boginski2005statistical,boginski2006mining} among other areas. However, despite its wide applicability, its practical adoption is impeded by two significant challenges.

The first is computational feasibility. Finding the largest clique, or even approximating its size within a factor of $n^{1-\epsilon}$ is NP-hard~\cite{karp2009reducibility, Hastad1999Clique, Zuckerman2006Linear}, and for a given $k$, even detecting if a graph has a clique of size $k$ is  W[1]-hard~\cite{DowneyFellows1995W1,DowneyFellows1999Parameterized}.
Note that both of these problems are trivially solved if one can list all maximal cliques.
Furthermore, the number of maximal cliques can grow exponentially with the number of vertices, making complete enumeration intractable for many graphs.
A classic example is the Moon-Moser graph~\cite{Moon1965}, which contains $3^{n/3}$ maximal cliques. The influential work of Eppstein, L{\"o}ffler and Strash~\cite{eppstein2013listing} showed that maximal clique enumeration is feasible in multiple real-world sparse graphs with millions of edges.
However, the problem's difficulty is highly sensitive to the graph's structure and so it remains intractable for other graphs of similar size and density.

The second challenge relates to the utility of the output.
Even when enumeration is computationally tractable, the resulting list of maximal cliques may be highly redundant and fail to provide a succinct structural summary.
A single large, dense region in a graph can give rise to a vast number of highly overlapping maximal cliques and a single vertex may be contained in hundreds of thousands or even millions of maximal cliques.
For applications aimed at identifying dense communities, a more useful output might be a single cluster representing the union of these highly overlapping cliques, rather than an exhaustive and repetitive list of all of them. 
Although numerous relaxations of cliques and dense subgraph mining have been studied in the literature, these methods typically suffer from poor scalability and do not guarantee concise representation of all cliques~\cite{rahman2024fast, jain2020provably, pattillo2011clique, chen2021higher, conte2018discovering,d2025maximal}.

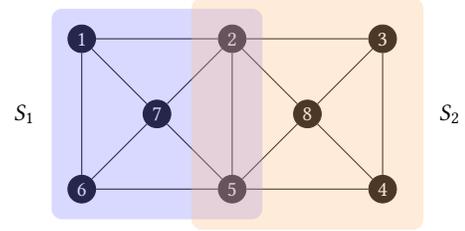
\begin{figure}
    \centering
    \usetikzlibrary{shapes, positioning, fit, calc}

    \begin{tikzpicture}[node distance=1.5cm,
    mynode/.style={circle, draw, fill=black, inner sep=1.5pt, font=\small, text=white},
    setarea/.style={rounded corners, fill opacity=0.3, draw=none}]

    \node[mynode] (1) at (0,2) {1};
    \node[mynode] (6) at (0,0) {6};
    \node[mynode] (7) at (1,1) {7};

    \node[mynode] (2) at (2,2) {2};
    \node[mynode] (5) at (2,0) {5};

    \node[mynode] (3) at (4,2) {3};
    \node[mynode] (4) at (4,0) {4};
    \node[mynode] (8) at (3,1) {8};

    \draw (1) -- (7) -- (6) -- (1); 
    \draw (1) -- (2);
    \draw (6) -- (5);
    \draw (2) -- (8);
    \draw (7) -- (2);
    \draw (7) -- (5);
    \draw (2) -- (5) -- (8) -- (3) -- (2); 
    \draw (5) -- (4) -- (3);
    \draw (8) -- (4);

    \node[setarea, fill=blue!50, fit=(1) (6) (7) (2) (5), inner sep=6pt] (S1_area) {};
    \node[setarea, fill=orange!50, fit=(2) (5) (3) (4) (8), inner sep=10pt] (S2_area) {};

    \node[left=0.1cm of S1_area.west, font=\bfseries] {$S_1$};
    \node[right=0.1cm of S2_area.east, font=\bfseries] {$S_2$};

\end{tikzpicture}

    \caption{A graph with $8$ maximal cliques (each maximal clique is a triangle), and an $0.8$-dense aggregator $S=\{S_1, S_2\}$, where $S_1 = \{1, 2, 5, 6, 7\}$ and $S_2 = \{2, 3, 4, 5, 8\}$. $density(S_1)$ and $density(S_2) \geq 0.8$, and for every clique $C$ in the graph, either $C \subseteq S_1$ or  $C \subseteq S_2$. }
    \label{fig:aggregator-example}
\end{figure}

In this work, we address these limitations by introducing a new problem that formalizes the idea of succinctly capturing the maximal clique structure.
Instead of listing every maximal clique, we aim to find a small collection of dense clusters that collectively contain all of them.
We formalize this notion as a $\dens$-dense aggregator: a set of clusters where every maximal clique is fully contained within at least one cluster, and every cluster has edge density of at least $\dens$.
This idea is illustrated in  \Cref{fig:aggregator-example}.
Here, we consider the normalized notion of \emph{edge density} which for a subgraph  $S$ is defined as $|E(S)| / {|V(S)| \choose 2}$.
Thus, when the density threshold is set to $\dens = 1.0$, our problem becomes equivalent to maximal clique enumeration, as every cluster in the aggregator must be a clique.

To illustrate the utility of our new formulation, let us consider the extreme case of a Moon-Moser graph~\cite{moon1965cliques} -- an $n$ vertex graph that has $3^{n/3}$ maximal cliques, each of size $n/3$.
This graph is obtained from a complete graph by removing $n$ edges\footnote{$n/3$ disjoint triangles to be precise.}.
The output of maximal clique enumeration un this case is both exponential in size, and highly redundant, as two randomly chosen cliques have, in expectation, $n/9$ vertices in common.
At the same time, the density of the entire graph is $1 - 2 / (n-1)$, and so for any  $\dens \leq 1 - 2 / (n-1)$ there exists a trivial $\dens$-dense \obj{} with a single cluster containing all vertices.

While the Moon-Moser graph is admittedly very different from real-world graphs, the fact that the aggregator size is much smaller than the number of maximal cliques is a more general phenomenon.
Namely, we show that for any constant $\rho < 1$ every graph admits a $\dens$-dense aggregator of sub-exponential size upper-bounded by $n^{O(\log_{1/\dens}n)}$.
In \Cref{sec:degeneracy-based}  we also give an algorithm for computing such an aggregator whose running time satisfies the above bound.

We further show that the running time of our algorithm (and also the maximum aggregator size) is almost-linear for graphs with polylogarithmic \emph{degeneracy} (\Cref{lem:cctime-const-density}).
The degeneracy of a graph $G$ is the smallest number $d$, such that every subgraph of $G$ contains a vertex of degree at most $d$.
It has been observed that the degeneracy of real-world graphs is typically much smaller than their maximum degree~\cite{malvestio2020interplay}.

To show the tightness of our arguments, we also provide a lower bound on the aggregator size.
Namely, we show that for $\dens \geq c/\log n$ (for some constant $c$), there exist graphs where every $\rho$-dense \obj{} has size $\Omega^*(n^{\log n / \log \log n})$ where $\Omega^*$ hides factors sublinear in $n$ (\Cref{thm:lower-bound}). This bound matches our upper bound up to sublinear factors.

Finally, we conduct an empirical evaluation of our algorithm on sixteen real-world graphs and demonstrate the efficiency of our approach in comparison to \textsf{QuickCliques}~\cite{eppstein2013listing}, the state-of-the-art algorithm for listing maximal cliques.
Across more than 10 datasets we study, we observe that computing a $0.9$-dense aggregator is typically over $30\%$ faster than listing maximal cliques.
For $0.1$-dense aggregators, the median speedup is over $6x$.
In the most extreme case, our clique aggregator is over $170$ times faster than the baseline, both at $0.1$ and $0.9$ density thresholds.

We also observe that the output size of the clique aggregator is consistently smaller than the number of maximal cliques, on average by almost $20 \times$ and $2 \times$ for the density thresholds of $0.1$ and $0.9$, respectively. Finally, we measure the maximum number of clusters a single vertex is contained in and observe that in the solution produced by our clique aggregator algorithm this quantity decreases by $66\times$, and $5.2 \times$ for the density thresholds of 0.1 and 0.9, respectively, relative to the number of maximal cliques. Overall, our clique aggregator algorithm runs faster than the state-of-the-art algorithm for enumerating maximal cliques, produces a smaller output, and significantly reduces the maximum number of clusters a single vertex is contained in.

To understand how common clustering formulations differ from the task of computing a clique aggregator, we also study how popular techniques, such as modularity clustering, perform in terms of finding dense clusters that contain all maximal cliques.
We find that, perhaps unsurprisingly, once we tune the methods to find dense clusters, they perform poorly at the coverage task, typically covering $20\%$--$30\%$ of maximal cliques.
Finally, we also provide a theoretical justification for this observation.
In \Cref{thm:shatter}, we show that there exist graphs for which any clustering into non-overlapping clusters of at least constant density fails to cover almost all maximal cliques.
This holds even if covering only a constant fraction of a clique is sufficient.
\Cref{thm:shatter} formalizes the intuitive fact that covering maximal cliques with dense clusters requires overlapping clustering.

\subsection{Related Work}\label{sec:related-work}

Maximal clique enumeration has a long history, starting with the seminal work of Bron and Kerbosch~\cite{bron1973algorithm}. Their recursive enumeration algorithm and its pivoting-enhanced variant~\cite{tomita2006worst} significantly improved the time required to enumerate maximal cliques, both in theory and practice. More efficient algorithms (and especially output-sensitive algorithms) have since been developed for sparse graphs, with near-optimal bounds relative to the number of maximal cliques~\cite{eppstein2010listing, dalirrooyfard2024towards, manoussakis2018output}. Recently, clique enumeration has also been shown to be fixed-parameter tractable for $c$-closed graphs (a graph class modeling real-world graphs in which any two vertices with $c$ or more common neighbors are assumed to be connected)~\cite{fox2020finding}. Despite these advances, maximal clique enumeration remains computationally challenging due to the potential exponential number of maximal cliques, motivating alternative approaches.

One line of work addresses this challenge by observing that many maximal cliques tend to be concentrated in dense regions that resemble cliques with a few missing edges. They thus aim to enumerate relaxations of cliques such as $k$-plexes, $k$-clubs, near-cliques, quasi-cliques, etc. However, these methods do not scale well for large values of $k$, particularly on large graphs~\cite{jain2020provably, pattillo2011clique, chen2021higher, conte2018discovering}. Approaches for enumerating all subgraphs exceeding a specified edge-density threshold have also been proposed~\cite{rahman2024fast, uno2010efficient}, but these too are typically effective only within limited density regimes and fail to scale to large graphs.

    Another line of work that explores coverage strategies and helped inspire our algorithm is the combinatorics literature on container theorems. The container method, introduced by Balogh, Morris, and Samotij~\cite{balogh2015independent} and independently by Saxton and Thomason~\cite{saxton2015hypergraph}, allows for the approximation of large families of subgraphs using a much smaller collection of ``containers'' that cover all relevant subsets while satisfying bounded density properties.
    While foundational, this line of work has been purely theoretical, focusing on combinatorial results rather than practical algorithmic applications. Another line of work explores clique summarization techniques~\cite{d2025maximal} that aim to output a ``meaningful'' subset of cliques that preserve certain properties. However, by design, these approaches are lossy and can miss important cliques.

    The rest of the paper is organized as follows. We start with some notation and necessary definitions in ~\Cref{sec:preliminaries}. In~\Cref{sec:degeneracy-based}, we describe our algorithm for finding clique \objs{} and analyze the performance of the algorithm in general graphs as well as in bounded-degeneracy graphs. In~\Cref{sec:lower-bound}, we describe a lower bound on the worst-case size of an \obj{}. In~\Cref{sec:implementation}, we discuss the implementation of our algorithm and the optimizations that we have applied. In~\Cref{sec:results}, we describe the results of our experiments on real-world graphs and demonstrate the effectiveness of our algorithm.

\section{Preliminaries}\label{sec:preliminaries}

Let $G=(V,E)$ be an undirected graph with $n=|V|$ vertices and $m=|E|$ edges. We will use $N_G(v)$ to denote the neighborhood of vertex $v$ in $G$.
A \emph{cluster} in $G$ is a nonempty subset $C \subseteq V$ and a \emph{clustering} of $G$ is a collections of clusters.
Note that we allow clusters in a clustering to have non-empty overlap.
A clustering in which every vertex belongs to exactly one cluster is called \emph{non-overlapping}. 
For $C \subseteq V$, we will denote by $G[C]$ the subgraph of $G$ induced by $C$, and where the meaning is clear from the context, we will abuse notation slightly and use $C$ and $G[C]$ interchangeably. 
We define the \emph{density} of a graph $G$ to be $m / \binom{n}{2}$, or $1$, if the graph has only $1$ vertex. We let $0 < \dens \leq 1$ represent a density threshold.

\begin{definition}[\cite{seidman1983network}]
For a graph $G = (V, E)$, a \textit{degeneracy ordering} of $G$ is a permutation of $V$ given as $v_1, v_2, . . . , v_n$ such that: for each $i \leq n$, $v_i$ is the minimum degree vertex in the subgraph induced by $v_i, v_{i+1}, . . . , v_n$ (we can enforce uniqueness of the ordering by breaking ties by vertex id). The degree of $v_i$ in $G|_{v_i,...,v_n}$ is the \textit{core number} of $v_i$. The largest core number is called the \textit{degeneracy} of $G$, denoted $\alpha(G)$.
\end{definition}

It is well-known that there exists a linear-time procedure~\cite{matula1983smallest} to calculate the degeneracy ordering of $G$. The algorithm simply removes the lowest degree vertex (and its adjacent edges) from the graph iteratively, and the degeneracy ordering is the order in which the vertices are removed from the graph.

\begin{definition} 
Let $G=(V, E)$ be an undirected graph. A $\dens$-dense \obj{} of $G$ is a collection of clusters $C_1, \ldots, C_k$, such that: 
\begin{enumerate} 
    \item for any clique $H$ in $G$, there exists a subset $C_i$, such that $H \subseteq C_i$, and 
    \item for any $C_i$, the density of $G[C_i]$ is at least $\dens$.
\end{enumerate}
We will say that an \obj{} $S$ is \textit{inclusion-maximal} if we don't have distinct $S_i, S_j \in S$ such that $S_i \subseteq S_j$.
\end{definition}

Figure~\ref{fig:aggregator-example} gives an example of an inclusion-maximal $\dens$-dense \obj{} with $\dens=0.8$ for a small graph $G$.

\section{Main algorithm} \label{sec:degeneracy-based}

In this section, we present our algorithm, Algorithm~\ref{alg:cc}, that computes an \obj{} whose size is subexponential in $n$. The algorithm is a combination of the algorithm of Chiba-Nishizeki~\cite{chiba1985arboricity}, which uses backtracking with degeneracy ordering to recursively enumerate cliques, and the algorithm of Bron-Kerbosch~\cite{bron1973algorithm}, which maintains a set of processed vertices to prune redundant branches. The algorithm also draws inspiration from the graph containers  of Kleitman and Winston~\cite{KLEITMAN1982167} studied in the context of independent sets, which were recently generalized to hypergraphs~\cite{BMS18} and used algorithmically in several lines of work~\cite{BS23, BS24, Zamir23, JPP23}.

We first give a high-level description of our procedure. We say that a set of vertices $K$ \emph{extends} a set of vertices $I \subseteq V(G)$ if $K$ is a clique that is disjoint from $I$, and every vertex of $K$ is adjacent to every vertex of $I$. We say that a clique $K$ has been \textit{acquired} by an \obj{} $S$ if there exists $S_i \in S$ such that $K \subseteq S_i$. Note that if a clique $K$ has been acquired by $S$, then every clique that is a subgraph of $K$ has also been acquired by $S$.

Similar to existing clique enumeration algorithms, Algorithm~\ref{alg:cc} recursively grows clusters. Each recursive call of our algorithm is associated with a clique $C$ that the algorithm has discovered and is aiming to grow, and a subset $H \subseteq V(G)$  of common neighbors of $C$ which the algorithm can use to grow $C$. The goal of the recursive call is to return a partial $\dens$-dense \obj{} that has acquired all the cliques in $G[C \cup H]$.

If the density of $G[C \cup H]$ is at least $\dens$, we simply add the cluster $\{C \cup H\}$ to the \obj{}, acquiring all cliques in $G[C \cup H]$. If the density of $G[C \cup H] < \dens$, then also the density of $G[H] < \dens$ (since $C$ is a clique and each vertex of $C$ has an edge to each vertex of $H$), and so there exists a vertex $v \in H$ of degree at most $\dens \cdot |H|$. In this case, we recursively acquire cliques of $G[C \cup H]$ as follows:
\begin{itemize}
\item[(a)] we make a recursive call with $C := C\cup \{v \}$ and $H := N_H(v)$, to find $\dens$-dense clusters of $G[C \cup H]$ that include $v$.
\item[(b)] we make a recursive call on $C$ and $H\coloneqq H\setminus \{v \}$ to find $\dens$-dense clusters of $G[C \cup H]$ that do not include $v$. 
\end{itemize}

Since problem (a) is smaller by a factor of $\dens$, we can derive a much smaller bound on the size of the \obj{}, compared to the upper bound on the number of maximal cliques. Note that in our pseudocode (and in practice), we perform (b) iteratively rather than recursively for efficiency.

To further improve the efficiency of our algorithm and ensure that it produces an \emph{inclusion-maximal} aggregator, we use the pruning method introduced in the seminal work of Bron and Kerbosch~\cite{bron1973algorithm}.
The implementation of this pruning is marked in teal in \ccr{} in Algorithm~\ref{alg:cc}.
In fact, even without the pruning in \ccr{} one can show the same running time and correctness guarantees except that the output may not be inclusion-maximal anymore.

The algorithm maintains a set $X$ of vertices that are all adjacent to the clique $C$ such that all cliques containing $\{x\} \cup C$, $\forall x \in X$ have already been acquired by the \obj{} (note that this implies that all cliques contained in $\{x\} \cup C$  have also been acquired by the \obj{}). If at any recursive call there exists a vertex $x \in X$ such that $H \subseteq N_G(x)$, then every clique of $C \cup H$ is contained in a clique of $G$ that also contains $x$. Since all the cliques containing $C \cup \{x\}$ have been acquired, it must be that all the cliques contained in $C \cup H$ have also been acquired. In this case, the algorithm simply returns, thus avoiding adding redundant clusters to the aggregator.

\begin{algorithm}[t]

\begin{algorithmic}[1]

\Function{Density}{$G, H, C$} 

    \Return edge density of $G[H \cup C]$ 

\EndFunction

\Function{\ccr}{$G, H, \dens, C, {\pruningcode X}$} 

    {\pruningcode
    \For{each $x \in X$} \label{step:ccr-X-check1}
        \If{$H \subseteq N_G(x)$}
            \Return $\emptyset$
        \EndIf
    \EndFor}

    \If{$\textsc{Density}(G, H, C) \geq \dens$}\label{step:ccr-dens-check1}

        \State \Return $\{C \cup H\}$

    \EndIf

    \State $S := \emptyset, \hat{H} := H, {\pruningcode \hat{X} := X}$

    \State $Ord := \textsc{DegeneracyOrdering}(\hat{H})$

    \For{each $v \in \hat{H}$ according to $Ord$} \label{step:ccr-for-loop}

        \State $H_v := N_{\hat{H}}(v)$
        \State $C_v := C \cup \{v\}$ \label{step:ccr-extend}
        \State ${\pruningcode X_v := \hat{X} \cap N_G(v)}$

        \State $S_v :=$ \ccr$(G, H_v, \dens, C_v, {\pruningcode X_v})$ \label{step:ccr-rec-call}

        \State $S := S \cup S_v$ \label{step:ccr-add-to-S}
        
        \State $\hat{H} := \hat{H} \setminus \{v\}$\label{step:ccr-remove-v}

        \pruningcode
        \State $\hat{X} := \hat{X} \cup \{v\}$\label{step:ccr-add-v}

        \For{each $x \in \hat{X}$\label{step:ccr-X-check2}}
            \If{$\hat{H} \subseteq N_G(x)$}
                \Return $S$
            \EndIf
        \EndFor
        \color{black}

        \If{$\textsc{Density}(G, \hat{H}, C) \geq \dens$}\label{step:ccr-dens-check2}

            \State \Return $S \cup \{C \cup \hat{H}\}$

        \EndIf
    
    \EndFor
    
    \State \Return $S$

\EndFunction

\Function{\cc}{$G, \dens$} \Comment{$G$ is a graph and $\dens \in (0, 1]$}{\label{alg:ccr}}

    \State $S := $ \ccr$(G, V(G), \dens, \emptyset, \emptyset)$
    \State \Return $S$

\EndFunction

\end{algorithmic}

\caption{Computing an \obj{}. {\pruningcode The code in teal ensures that the aggregator is inclusion-maximal.}
}\label{alg:cc}

\end{algorithm}
 
We formally prove the correctness and running time bound of our algorithm below. We first show that \ccr{} maintains two important invariants.

\begin{lemma}\label{lemma:ccr-invariants}
For every recursive call of \ccr$(G,H,\dens,C, X)$ made by \cc{}, the following invariants hold:
\begin{enumerate}[label=(\roman*)]
  \item The \texttt{for} loop of line~\ref{step:ccr-for-loop} maintains the invariant that $C$ extends $\hat{X} \cup \hat{H}$. \label{inv:invariant-for-loop}
  \item $C_v$ extends $X_v \cup H_v$ in line~\ref{step:ccr-extend} for every recursive call \ccr$(G,H_v,\dens,C_v, X_v)$ made. \label{inv:invariant-across-calls}
\end{enumerate}
\end{lemma}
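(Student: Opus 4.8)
The plan is to prove both invariants together by induction on the recursion tree of \cc{}, carrying along the auxiliary precondition that at entry to every call \ccr$(G,H,\dens,C,X)$ the clique $C$ \emph{extends} $X\cup H$; call this property $(\star)$. The point of $(\star)$ is that it is exactly invariant~\ref{inv:invariant-across-calls} read from the callee's side, so proving~\ref{inv:invariant-across-calls} for a call is the same as establishing $(\star)$ for each of its children, which is what lets the induction close. For the base case I would observe that the root call is \ccr$(G,V(G),\dens,\emptyset,\emptyset)$ and that the empty set vacuously extends $V(G)$: it is a clique, it is disjoint from $V(G)$, and it has no vertex that could fail the adjacency requirement. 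Hence $(\star)$ holds at the root.

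For the inductive step I would fix a call satisfying $(\star)$ and first derive invariant~\ref{inv:invariant-for-loop}. At the start of the \texttt{for} loop we have $\hat H=H$ and $\hat X=X$, so $\hat X\cup\hat H=X\cup H$ and $(\star)$ says $C$ extends $\hat X\cup\hat H$. Within a single iteration the only assignments touching these two sets are $\hat H:=\hat H\setminus\{v\}$ and $\hat X:=\hat X\cup\{v\}$ for the vertex $v$ being processed (everything else creates fresh local variables $H_v,C_v,X_v,S_v$). Since $v$ is present in the current $\hat H$, relocating it from $\hat H$ to $\hat X$ leaves the union $\hat X\cup\hat H$ unchanged, so ``$C$ extends $\hat X\cup\hat H$'' is preserved across the iteration. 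The early returns from the pruning and density tests merely stop the loop and so cannot break an invariant that already held, which gives invariant~\ref{inv:invariant-for-loop}.

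To pass from~\ref{inv:invariant-for-loop} to~\ref{inv:invariant-across-calls}, I would look at the recursive call on line~\ref{step:ccr-rec-call} for a vertex $v$, where $C_v=C\cup\{v\}$, $X_v=\hat X\cap N_G(v)$ and $H_v=N_{\hat H}(v)=\hat H\cap N_G(v)$, so that $X_v\cup H_v\subseteq\hat X\cup\hat H$. Using that $C$ extends $\hat X\cup\hat H$ at this moment, I would check the three defining clauses of ``extends'' for the pair $(C_v,\,X_v\cup H_v)$: $C$ is a clique, $v\in\hat H$ is adjacent to all of $C$, and $v\notin C$ (as $C$ is disjoint from $\hat X\cup\hat H\ni v$), so $C_v$ is a clique; $C$ is disjoint from $\hat X\cup\hat H\supseteq X_v\cup H_v$ while $v\notin N_G(v)$ gives $v\notin X_v\cup H_v$, so $C_v$ is disjoint from $X_v\cup H_v$; and every vertex of $C$ is adjacent to all of $\hat X\cup\hat H\supseteq X_v\cup H_v$ while $X_v\cup H_v\subseteq N_G(v)$, so $v$ is too. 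Thus $C_v$ extends $X_v\cup H_v$, which is invariant~\ref{inv:invariant-across-calls}; and since this statement is precisely $(\star)$ for the child \ccr$(G,H_v,\dens,C_v,X_v)$, the induction closes.

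I do not anticipate a genuine obstacle, as this is an invariant-maintenance argument. The single observation that trivializes invariant~\ref{inv:invariant-for-loop} is that moving $v$ from $\hat H$ into $\hat X$ preserves their union, and the only real content is the three-clause verification for invariant~\ref{inv:invariant-across-calls}. The points that demand care are purely bookkeeping: that each $v$ is still present in the shrinking $\hat H$ when processed (the ordering $Ord$ being computed once from the initial $\hat H$), that the early returns do not interfere, and that the no-self-loop fact $v\notin N_G(v)$ is exactly what supplies the disjointness clause.
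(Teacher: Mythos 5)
Your proposal is correct and follows essentially the same route as the paper: induction over the recursion tree with the precondition that $C$ extends $X\cup H$ at entry, the observation that each loop iteration merely moves $v$ from $\hat H$ to $\hat X$ and hence preserves the union for invariant~(i), and a direct check of the definition of \emph{extends} for $C_v$ against $X_v\cup H_v$ for invariant~(ii). Your three-clause verification (cliqueness of $C_v$, disjointness via $v\notin N_G(v)$, and full adjacency) is in fact more explicit than the paper's one-line ``Clearly, $C_v$ is a clique and $C_v$ extends $X_v\cup H_v$,'' but this is a matter of detail, not of approach.
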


\begin{proof}
We will prove this using induction. Suppose the invariants are true at the start of some recursive call \ccr$(G,H,\dens,C, X)$.
We have that $\hat{H} = H$ and $\hat{X} = X$ and by assumption, $C$ extends $\hat{X} \cup \hat{H}$ at the start of the \texttt{for} loop of line~\ref{step:ccr-for-loop}. The only modification the loop makes to $\hat{H}$ and $\hat{X}$ in each iteration is to move a vertex ($v$) from $\hat{H}$ to $\hat{X}$ in lines~\ref{step:ccr-remove-v} and~\ref{step:ccr-add-v} . Hence, at every iteration, $C$ extends $\hat{X} \cup \hat{H}$ and Invariant~\ref{inv:invariant-for-loop} holds.


For Invariant~\ref{inv:invariant-across-calls}, suppose at the start of a recursive call \ccr$(G,H,\dens,C, X)$, $C$ extends $X \cup H$. Then, by Invariant~\ref{inv:invariant-for-loop}, at every iteration of the \texttt{for} loop of line~\ref{step:ccr-for-loop}, $C$ extends $\hat{X} \cup \hat{H}$. For a vertex $v \in H$, $C_v = C \cup \{v\}$, $H_v = N_{\hat{H}}(v)$ and $X_v = X \cap N_G(v)$. Clearly, $C_v$ is a clique and $C_v$ extends $X_v \cup H_v$. Thus, Invariant~\ref{inv:invariant-across-calls} holds.

Since, for the initial call, \ccr$(G,V(G),\dens,\emptyset,\emptyset)$, $C = \emptyset$ and $X = \emptyset$, Invariants~\ref{inv:invariant-for-loop}  and~\ref{inv:invariant-across-calls} vacuously hold true for the initial call to \ccr{} and as a result, Lemma~\ref{lemma:ccr-invariants} is true for every recursive call of \ccr{} made by \cc{}.
\end{proof}

We now prove our main claim that algorithm \cc{} outputs an inclusion-maximal $\dens$-dense clique \obj{} of $G$. The main idea is as follows: we use induction on the number of vertices in the set $H$ to demonstrate that every recursive call of \ccr$(G,H,\dens,C, X)$ returns an inclusion-maximal set $S$ of $\dens$-dense clusters such that every clique $K \in C \cup H$ is either acquired by a $\dens$-dense cluster $S_i$ that is added to $S$ in the for loop, or extended by a vertex $x$ from the set $X$. Since, for the initial call to \ccr{} made by \cc{}, $X=C=\emptyset$ and $H=V(G)$, the $S$ returned by that call (and hence by \cc{}) is thus an inclusion-maximal $\dens$-dense clique \obj{} of $G$.

\begin{restatable}{theorem}{ccmain}\label{theorem:cc-main}
 \cc$(G,\dens)$ outputs an inclusion-maximal $\dens$-dense clique \obj{} of $G$. 
\end{restatable}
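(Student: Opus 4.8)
The plan is to prove the theorem by strong induction on $|H|$, establishing the following claim for every recursive call $\ccr(G, H, \dens, C, X)$ in which the invariants of \Cref{lemma:ccr-invariants} hold: the call returns an inclusion-maximal collection $S$ of $\dens$-dense clusters such that every clique $K$ in $G[C \cup H]$ is either acquired by some $S_i \in S$, or $K$ is extended by some $x \in X$ (meaning $C \cup K \cup \{x\}$ is a clique, so $K$ was already acquired elsewhere via $x$). Applying this to the initial call $\ccr(G, V(G), \dens, \emptyset, \emptyset)$, where $X = C = \emptyset$, forces every clique of $G$ to be acquired, and the inclusion-maximality of $S$ gives exactly the theorem.

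For the base case and the early-return branches, I would argue as follows. First, the density check on line~\ref{step:ccr-dens-check1}: if $\textsc{Density}(G, H, C) \geq \dens$ then $S = \{C \cup H\}$ acquires every clique in $G[C \cup H]$ since any such clique is a subset of $C \cup H$, and a single cluster is trivially inclusion-maximal. Second, the pruning returns (lines~\ref{step:ccr-X-check1} and~\ref{step:ccr-X-check2}): if some $x \in X$ satisfies $H \subseteq N_G(x)$, then by the invariant $C$ extends $X \cup H$, so $x$ is adjacent to all of $C$ and to all of $H$; hence for any clique $K \subseteq C \cup H$, the set $K \cup \{x\}$ is a clique, i.e.\ $K$ is extended by $x$, and the claim's disjunction is satisfied with $S = \emptyset$ (or the partial $S$ accumulated so far).

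For the inductive step, consider the \texttt{for} loop. When density is below $\dens$, the excerpt argues (via $C$ being a clique fully joined to $H$) that $G[H]$ has density below $\dens$ too, so a low-degree vertex exists and the degeneracy ordering is meaningful; but for correctness what matters is simply that the loop processes every $v \in H$. I would partition the cliques $K \subseteq C \cup H$ by their interaction with the ordered vertices: letting $v$ be the first vertex of $K \setminus C$ in the ordering $Ord$, the clique $K$ lies in $G[C_v \cup H_v]$ where $H_v = N_{\hat H}(v)$ at that iteration, and the recursive call on line~\ref{step:ccr-rec-call} has $|H_v| < |H|$, so by the induction hypothesis $S_v$ either acquires $K$ or some $x \in X_v \subseteq X$ extends it — and since $S_v \subseteq S$, the outer $S$ inherits this. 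Cliques $K \subseteq C$ (with $K \cap H = \emptyset$) are handled by the density cluster or already acquired. The secondary density check on line~\ref{step:ccr-dens-check2} adds $C \cup \hat H$ when the shrunken set becomes dense, covering the cliques that avoid all already-processed vertices.

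The main obstacle I expect is twofold and both parts are bookkeeping-heavy rather than conceptual. The first is correctly tracking how $\hat H$ and $\hat X$ evolve \emph{within} the loop so that the induction hypothesis is applied to the recursive call with exactly the right sets: I must verify that when vertex $v$ is processed, every clique whose first ``new'' vertex is $v$ genuinely lies in $G[C_v \cup H_v]$ with $H_v$ equal to $v$'s neighbors among the \emph{current} $\hat H$ (which already excludes earlier-processed vertices). The second is arguing \textbf{inclusion-maximality} of the returned $S$, which the disjunction alone does not give for free: I would need to show that the teal pruning precisely prevents a cluster $S_i$ from being a subset of another cluster $S_j$, by using the fact that once all cliques through $C \cup \{v\}$ are acquired, $v$ moves into $\hat X$ and any future branch that would produce a cluster dominated by an existing one triggers an early return on line~\ref{step:ccr-X-check2}. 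Tying the ``acquired-or-extended'' invariant to genuine inclusion-maximality is where the argument requires the most care.
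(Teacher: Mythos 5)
Your proposal takes essentially the same route as the paper's proof: strong induction on $|H|$ establishing the acquired-or-extended property (the paper phrases this via ``violating cliques''), with the same case analysis on the first vertex of $K \setminus C$ in the degeneracy ordering (processed in the loop versus loop exited early via the density or $X$-check), and the same pruning-based argument that moving processed vertices into $\hat{X}$ forces an early return before any dominated cluster is emitted, yielding inclusion-maximality. One caveat: your claim $X_v \subseteq X$ is not literally true, since $X_v = \hat{X} \cap N_G(v)$ and $\hat{X}$ accumulates earlier-processed vertices of $H$ during the loop --- but this is the same subtlety the paper's own write-up treats tersely, and it is repaired by noting that for a clique $K$ that is maximal in $G[C \cup H]$ no vertex of $H$ can extend it, so any extender in $X_v$ must come from the original $X$.
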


\begin{proof}[Proof sketch]
We first show that for the set $S$ returned by any recursive call \ccr$(G,H,\dens,C,X)$ made by \cc{}, the following three properties hold:
\begin{enumerate}[label=(\roman*)]
    \item For every clique $K$ in $G[H \cup C]$,
either $\exists x \in X$ such that $K \cup \{x\}$ is also a clique in $G$, or $\exists S_i \in S$ such that $K \subseteq S_i$. \label{prop:ccr-containment}
    \item For every $S_i \in S$, $G[S_i]$ is $\dens$-dense. \label{prop:ccr-density}
    \item $S$ is inclusion-maximal. \label{prop:ccr-inclusion-maximal}
\end{enumerate}
Note that these properties imply that the $S$ returned by \ccr$(G,V(G),\dens,\emptyset,\emptyset)$ and hence by \cc$(G,\dens)$ is an inclusion-maximal $\dens$-dense clique \obj{} for $G$.

Property~\ref{prop:ccr-density} follows directly from the conditions leading to adding a cluster in 
lines~\ref{step:ccr-dens-check1} and~\ref{step:ccr-dens-check2}. Towards Property~\ref{prop:ccr-containment}, for a recursive call \ccr$(G,H,\dens,C,X)$, we say that a clique $K$ is a \textit{violating clique} if $K \subseteq C \cup H$, there exists no $x \in X$ that extends $K$, and no $S_i \in S=$ \ccr$(G,H,\dens,C,X)$ such that $K \subseteq S_i$ (i.e.\ $K$ is not acquired by $S=$ \ccr$(G,H,\dens,C,X)$). Then showing Property~\ref{prop:ccr-containment} is equivalent to proving that there is no violating clique for \ccr$(G,H,\dens,C,X)$.

We prove this claim as well as Property~\ref{prop:ccr-inclusion-maximal} by induction on the number of vertices in $H$.
The complete proof, which is conceptually straightforward but quite technical is included in the supplementary material.
\end{proof}

We now bound the size of the \obj{} output by \cc{} and the running time of the algorithm. 

\begin{theorem}\label{thm:cctime}
Algorithm \cc{} outputs a $\dens$-dense \obj{} of size $O(n^{(1 + \log_{1/\dens} n)/2})$ in time $O((m+n)n^{(1 + \log_{1/\dens} n)/2})$. For a graph with degeneracy $\alpha$, algorithm \cc{} outputs a $\dens$-dense \obj{} of size $O(n \cdot \alpha^{(1 + \log_{1/\dens} \alpha)/2})$ in time $O(m\alpha + n\alpha^{(3 + \log_{1/\dens} \alpha)/2})$  for $\dens \in [1/n, 1)$. 
\end{theorem}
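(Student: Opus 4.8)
The plan is to analyze the size and the running time separately, in each case first for general graphs and then specializing to bounded degeneracy. The technical heart of both bounds is a single \emph{shrinkage} observation about the candidate sets passed to recursive calls, so I would isolate that first. Concretely, consider any call \ccr$(G, H, \dens, C, X)$ that reaches the \texttt{for} loop of line~\ref{step:ccr-for-loop} (i.e.\ $G[H \cup C]$ is not $\dens$-dense). Whenever the loop processes a vertex $v$, the guard of line~\ref{step:ccr-dens-check1} together with the end-of-iteration check of line~\ref{step:ccr-dens-check2} guarantees that $G[C \cup \hat H]$ is \emph{not} $\dens$-dense for the current $\hat H = \{v, \ldots\}$. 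Since $C$ is a clique fully joined to $\hat H$ (Lemma~\ref{lemma:ccr-invariants}), adjoining $C$ cannot decrease the density, so $G[\hat H]$ is not $\dens$-dense either; hence its minimum degree is below $\dens(|\hat H| - 1) < \dens|H|$. Because $Ord$ is a degeneracy ordering of $G[H]$, the vertex $v$ currently processed is exactly the minimum-degree vertex of $G[\hat H]$, so the candidate set it spawns satisfies $|H_v| = \deg_{G[\hat H]}(v) < \dens\, |H|$. This is the key inequality: every recursive call shrinks the candidate set by a factor of $\dens$.

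Given the shrinkage lemma, I would bound the size by a recursion. Let $L(h)$ be the maximum number of clusters produced by a call with $|H| = h$. A call either returns a single cluster (lines~\ref{step:ccr-dens-check1},~\ref{step:ccr-dens-check2}) or makes at most $h$ recursive calls, each on a candidate set of size $< \dens h$, and contributes at most one cluster itself; thus $L(h) \le h\, L(\dens h) + 1$. Expanding the recursion tree, the number of nodes at depth $j$ is at most $h^j \dens^{\binom{j}{2}}$, which is maximized at the deepest level $J = \log_{1/\dens} h$ and equals $h^{(1 + \log_{1/\dens} h)/2}$ there. Crucially, the ratio of consecutive level-sizes is $h\dens^{j-1} \ge 1/\dens > 1$ for all $j \le J$, so the level sizes form a geometric progression and the total node count — hence the cluster count — is $O\bigl(h^{(1+\log_{1/\dens} h)/2}\bigr)$ with no extra logarithmic factor. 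Taking $h = n$ gives the general size bound. For degeneracy $\alpha$, I would additionally use that the top-level call runs the \texttt{for} loop over a degeneracy ordering of all of $V(G)$, so each of its at most $n$ children has $|H_v| \le \alpha$ (the forward degree is at most $\alpha$); applying the size recursion within each child with $h = \alpha$ yields $O(n \cdot \alpha^{(1 + \log_{1/\dens}\alpha)/2})$ clusters, and the same bound on the number of recursive calls.

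For the running time I would charge work to recursive calls, using that their number equals the node count bounded above and that each \texttt{for}-loop iteration spawns exactly one call. In the general case each call performs the density evaluation, one degeneracy-ordering computation, and the pruning scans of lines~\ref{step:ccr-X-check1} and~\ref{step:ccr-X-check2}; since $\sum_{x \in X}\deg(x) \le 2m$ and the induced-edge counts are bounded by $m$, all of this is $O(m + n)$ per call, giving $O((m+n)\, n^{(1+\log_{1/\dens} n)/2})$. In the bounded-degeneracy case I would split the cost into (i) the top-level construction of the $n$ candidate neighborhoods and their induced subgraphs, which by a Chiba--Nishizeki-style argument totals $O(m\alpha)$, and (ii) the recursive work inside the subproblems, where every candidate set has size $\le \alpha$; maintaining the induced-edge counts and densities incrementally keeps the per-call cost at $O(\alpha)$, so this part is $O(\alpha)$ times the number of recursive calls, i.e.\ $O(n\,\alpha^{(3+\log_{1/\dens}\alpha)/2})$.

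The step I expect to be the main obstacle is the bounded-degeneracy running time, specifically achieving amortized $O(\alpha)$ work per recursive call. A naive implementation recomputes a degeneracy ordering and an induced-edge count of $G[H]$ at every node, costing $\Theta(\alpha^2)$ and inflating the exponent from $(3+\log_{1/\dens}\alpha)/2$ to $(5+\log_{1/\dens}\alpha)/2$; obtaining the claimed bound requires maintaining these quantities incrementally as vertices move between $\hat H$ and $\hat X$, and implementing the pruning scans so their cost can be amortized against the edges of $G$ rather than paid afresh each iteration (this is also what keeps the top-level scans within the $O(m\alpha)$ budget). A secondary subtlety worth checking carefully is that the geometric-series argument removing the logarithmic factor in the size bound relies on the loop's stopping behavior interacting correctly with the shrinkage lemma, which is exactly where the density checks of lines~\ref{step:ccr-dens-check1} and~\ref{step:ccr-dens-check2} are needed.
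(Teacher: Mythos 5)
Your proposal follows essentially the same route as the paper's proof: the same shrinkage property (the degeneracy ordering together with the density checks forces $|H_v| < \dens\,|H|$), the same recurrence $T(n) \le n\,T(\dens\, n)$ solved to $n^{(1+\log_{1/\dens} n)/2}$ (you by level-wise counting with a geometric sum, the paper by induction), $O(m+n)$ non-cumulative work per call in the general case, and the same degeneracy-case decomposition into an $O(m\alpha)$ top-level construction of induced subgraphs plus recursion subtrees on candidate sets of size at most $\alpha$, each of size $\alpha^{(1+\log_{1/\dens}\alpha)/2}$. If anything, your per-call accounting inside the subtrees (amortized $O(\alpha)$ via incremental maintenance) is more careful than the paper's, which simply asserts $O(m+n)=O(n\alpha)$ non-cumulative time per call and multiplies by the subtree size; the amortization subtlety you flag is real but is glossed over rather than resolved by a different argument in the paper.
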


We note that the assumption that $\dens \geq 1/n$ is not limiting.
Whenever $\dens \leq 1/n$, there exists a trivial clique \obj{} consisting of the connected components of the graph.

\begin{proof}
Let us first bound the number of recursive calls of the algorithm.
In the proof, we consider a function $T : \mathbb{R}_{\geq 0} \rightarrow \mathbb{N}$, which satisfies the following three properties. First, $T(0) = T(1) = 1$. Second, $T(i)$ is an upper bound on the number of recursive calls of \ccr{$(G,H,\dens,C,X)$} assuming that the size of $H$ is $i$. Third, $T$ is nondecreasing.

From the pseudocode and monotonicity of $T$, we obtain
\begin{align*}
T(n) & \leq \sum_{i=1}^{n} T(\dens \cdot (n-i)) \leq n \cdot T(\dens \cdot n).
\end{align*}

Using this property, we can prove by induction that
\[
T(n) \leq n^{(1 + \log_{1/\dens} n)/2}.
\]

Indeed,
\begin{align*}
T(n) & \leq n \cdot T(\dens \cdot n) 
 \leq  n \cdot (\dens \cdot n)^{(1 + \log_{1/\dens} (\dens \cdot n))/2}\\
 & = n \cdot (\dens \cdot n)^{(\log_{1/\dens} n)/2} = n \cdot n^{-1/2} n^{(\log_{1/\dens} n)/2}\\
 & = n^{(1 + \log_{1/\dens} n)/2}.
\end{align*}
Thus, the algorithm makes at most $n^{(1 + \log_{1/\dens} n)/2}$ recursive calls.
We now bound the cost of each recursive call.
We call the time spent within a function, excluding the recursive calls it makes, the \textit{non-cumulative running time}.

Finding the degeneracy ordering, calculating the density, constructing the neighborhoods of the nodes, and maintaining the set $X$ can be done in $O(m+n)$ time by making a constant number of passes over the vertices and edges in the graph and counting/adding them in the appropriate sets. Hence, the non-cumulative time spent at every recursive call is $O(m+n)$. Since there are at most $O(n^{(1 + \log_{1/\dens} n)/2})$ recursive calls made, the total running time of the algorithm is $O((m+n)n^{(1 + \log_{1/\dens} n)/2})$.

However, similar to~\cite{eppstein2010listing} we can give a more fine-grained bound in terms of the degeneracy, $\alpha$. For the sake of exposition, in Algorithm 2, we have passed $G$ to every recursive call, along with the corresponding sets $X, C$ and $H$. However, all the tasks that contribute to the non-cumulative running time (such as constructing the neighborhoods of the nodes, and maintaining the set $X$) can be performed more efficiently if we pass the subgraph $G[H \cup X]$ instead (in addition to $C$). This is because the vertices and edges not in $G[H \cup X \cup C]$ are immaterial to constructing the inclusion-maximal $\dens$-dense \obj{} of $G[H \cup C]$. To process a vertex $v$ in \ccr{} then, it suffices to look at $v$'s neighbors in $G[H \cup X]$.

Consider the first call to \ccr{} made by \cc{}. For any $v \in V$, the sets $X_v$ and $H_v$, and the subgraph $G[H_v \cup X_v]$ can be constructed in time $O(\alpha (|H_v| + |X_v|)$ (Lemma 3 and Theorem 2 in~\cite{eppstein2010listing}). Thus, cumulatively, the first call to \ccr{} takes time at most $\sum_v (\alpha(|H_v| + |X_v|)) = m\alpha$ to construct the subgraphs $G[H_v \cup X_v]$ for all $v \in V$. Moreover, the use of degeneracy ordering in \ccr{} implies that for each of the recursive calls, $H$ has size at most $\alpha$. Hence, the size of the recursion tree created by each recursive call is at most $\alpha^{(1 + \log_{1/\dens} \alpha)/2}$. The non-cumulative time for every recursive call in these trees is $O(m + n) = O(n\alpha)$. Thus, the total time in each recursion tree is at most $O(n \alpha \cdot\alpha^{(1 + \log_{1/\dens} \alpha)/2}) = O(n \alpha^{(3 + \log_{1/\dens} \alpha)/2})$. Combining with the $O(m\alpha)$ taken by the first call of \ccr{}, the total running time of Algorithm \cc{} is $O(m\alpha + n\alpha^{(3 + \log_{1/\dens} \alpha)/2})$. 
\end{proof}

We note that since $\alpha \leq n-1$, Theorem \ref{thm:cctime} implies that a $\dens$-dense \obj{} of size quasi-polynomial in $n$ always exists. This is remarkable because the number of maximal cliques can be exponential ($\Theta(3^{n/3})$~\cite{bron1973algorithm}) in the worst case. 


\begin{corollary} \label{lem:cctime-const-density}
Let $\alpha = 2^{o(\sqrt{\log n})}$. For $\dens = 1-\Omega(1)$, Algorithm \cc{} outputs a $\dens$-dense \obj{} of size $O(n^{1 + o(1)})$ in time $O(n^{1 + o(1)})$ and for $\dens \geq c/\log n$ for some constant $c$, Algorithm \cc{} outputs a $\dens$-dense \obj{} of size $n^{O(\frac{\log n}{\log \log n})}$ in time $n^{O(\frac{\log n}{\log \log n})}$. 
\end{corollary}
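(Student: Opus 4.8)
The plan is to derive this corollary directly from Theorem~\ref{thm:cctime} by substituting the stated ranges of $\alpha$ and $\dens$ into its size and running-time bounds and simplifying the exponents. The only asymptotic inputs needed are $\log \alpha = o(\sqrt{\log n})$ (which is exactly the hypothesis $\alpha = 2^{o(\sqrt{\log n})}$) together with the standard fact that a graph of degeneracy $\alpha$ has at most $\alpha n$ edges, so $m = O(n\alpha) = n^{1+o(1)}$. The density enters only through $\log(1/\dens)$, and the two regimes of the corollary correspond to the two natural sizes of this quantity: $\log(1/\dens) = \Theta(1)$ when $\dens = 1 - \Omega(1)$, and $\log(1/\dens) = \Theta(\log\log n)$ at the threshold $\dens = c/\log n$.

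For the first regime I would use the degeneracy-based bounds of Theorem~\ref{thm:cctime}; the general bound is too weak here, since for constant $\dens$ it already gives $n^{\Theta(\log n)}$. Writing the factor $\alpha^{(1+\log_{1/\dens}\alpha)/2}$ as $2^{E}$, its exponent is
\[
E = \tfrac{1}{2}\log\alpha + \tfrac{1}{2}\,\frac{(\log\alpha)^2}{\log(1/\dens)}.
\]
Since $\log(1/\dens) = \Theta(1)$ and $(\log\alpha)^2 = o(\log n)$, both summands are $o(\log n)$, hence $\alpha^{(1+\log_{1/\dens}\alpha)/2} = n^{o(1)}$ and the size bound $O(n\,\alpha^{(1+\log_{1/\dens}\alpha)/2})$ collapses to $n^{1+o(1)}$. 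The running-time bound $O(m\alpha + n\,\alpha^{(3+\log_{1/\dens}\alpha)/2})$ is handled identically: $m\alpha = O(n\alpha^2) = n^{1+o(1)}$, and the extra $\alpha^{3/2} = n^{o(1)}$ factor in the second term is absorbed, so the time is also $n^{1+o(1)}$.

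For the second regime I would instead substitute $\dens = c/\log n$ into the general bound of Theorem~\ref{thm:cctime}. Then $\log(1/\dens) = \log\log n - \log c = \Theta(\log\log n)$, so $\log_{1/\dens} n = \log n / \log(1/\dens) = \Theta(\log n/\log\log n)$, and the exponent $(1+\log_{1/\dens} n)/2 = O(\log n/\log\log n)$ yields size $n^{O(\log n/\log\log n)}$. Since $m + n \le n^2$, multiplying by the polynomial per-call overhead keeps the running time at $n^{O(\log n/\log\log n)}$ as well. (Under the degeneracy hypothesis one could instead rerun the first-regime computation and obtain the stronger $n^{1+o(1)}$ for every $\dens$ bounded away from $1$, which lies a fortiori below $n^{O(\log n/\log\log n)}$.)

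The calculations are routine; the care goes into the asymptotic bookkeeping and into pinning down the extremal density. Concretely, the exponent is monotone increasing in $\dens$ through $\log_{1/\dens} n$ (and through $\log_{1/\dens}\alpha$), so I must state explicitly which value of $\dens$ governs each estimate: $\dens = c/\log n$ for the $n^{O(\log n/\log\log n)}$ bound, the value that matches the lower bound of \Cref{thm:lower-bound}, and $\log(1/\dens) = \Theta(1)$ together with $(\log\alpha)^2 = o(\log n)$ for collapsing the cross term $(\log\alpha)^2/\log(1/\dens)$ to $o(\log n)$ in the first regime. Getting these $o(\cdot)$ estimates and the choice of bound (degeneracy versus general) right is the only real obstacle.
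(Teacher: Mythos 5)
Your proposal is correct and takes essentially the same route as the paper: the paper's proof likewise just substitutes the hypotheses into Theorem~\ref{thm:cctime}, computing $\alpha^{(1+\log_{1/\dens}\alpha)/2}=\alpha^{O(\log \alpha)}=2^{o(\log n)}=n^{o(1)}$ for $\dens = 1-\Omega(1)$ and treating $\dens \ge c/\log n$ by the analogous substitution (via $\log_{1/\dens} n = \Theta(\log n/\log\log n)$). Your write-up is merely more explicit about the bookkeeping (the $m\alpha = O(n\alpha^2)$ term, which of the two bounds of Theorem~\ref{thm:cctime} governs each regime), and your one small imprecision---asserting $\log(1/\dens)=\Theta(1)$ where $\dens = 1-\Omega(1)$ only yields $\log(1/\dens)=\Omega(1)$---is harmless, since a larger $\log(1/\dens)$ only shrinks the exponent.
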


\begin{proof}
We note that $\alpha = 2^{o(\sqrt{\log n})} = n^{o(1/\sqrt{\log n})} = n^{o(1)}$ and $\log \alpha = o(\sqrt{\log n})$. 
Hence, if $\dens = 1-\Omega(1)$ then by substituting for $\alpha$ and $\dens$ in Theorem \ref{thm:cctime}
\[\alpha^{(1 + \log_{1/\dens} \alpha)/2} = \alpha^{O(\log \alpha)}
= (2^{o(\sqrt{\log n})})^{o(\sqrt{\log n})} 
\]
\[
= 2^{o(\log n)} = n^{o(1)}.
\]
Similarly, when $\dens \geq c/\log n$, the asymptotic running time and size of \obj{} are  $n^{O(\frac{\log n}{\log \log n})}$. As we will show in Theorem~\ref{thm:lower-bound}, our lower bound matches this upper bound up to constant factors.
\end{proof}

\section{Lower bound}\label{sec:lower-bound} 
In this section, we will show that there exist graphs for which clique \objs{}  whose size is polynomial in $n$ do not exist, and we give a lower bound for the size of an \obj{} for these graphs that matches the upper bound of Corollary~\ref{lem:cctime-const-density} up to sublinear factors.

\begin{theorem}\label{thm:lower-bound}
There exists a constant $c$ and a family of graphs $\mathcal{G}$ for which every $\dens$-dense clique \obj{} for $\dens \ge c/\log n$ has size
$\Omega^*(n^{\frac{3\log n}{8\log\log n}})$, where $\Omega^*$ hides factors sublinear in $n$.
\end{theorem}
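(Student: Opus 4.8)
The plan is to prove the lower bound via a covering argument on a random graph. First, it suffices to treat only the smallest admissible threshold $\dens_0 = c/\log n$: any $\dens$-dense \obj{} with $\dens \ge \dens_0$ has every cluster of density at least $\dens_0$, so it is in particular a $\dens_0$-dense \obj{}, and a lower bound proved at $\dens_0$ automatically transfers to all larger $\dens$. I would therefore take $\mathcal{G}$ to be (a graph drawn from) the Erd\H{o}s--R\'enyi model $G(n,p)$ with $p = \Theta(1/\log n)$ chosen a constant factor below the threshold, say $p = \dens_0/2$, and argue that with high probability a single sample simultaneously has all the properties below, so that a suitable graph exists.

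The backbone of the argument is the bound ``number of clusters $\ge |\mathcal{K}|/M$'', where $\mathcal{K}$ is a family of cliques that every \obj{} must cover and $M$ is the maximum number of members of $\mathcal{K}$ that a single density-$\ge\dens_0$ cluster can fully contain. I would take $\mathcal{K}$ to be all cliques of a tuned size $k_0 \approx \log n/\log(1/p) = \Theta(\log n/\log\log n)$; since the first property of a $\dens$-dense \obj{} forces every clique to lie inside some cluster, each such $k_0$-clique must be covered. The first step is to show by a first- and second-moment computation that with high probability $G(n,p)$ contains $n^{\Theta(\log n/\log\log n)}$ cliques of size $k_0$: the expected count $\binom{n}{k_0}p^{\binom{k_0}{2}}$ is maximized in the exponent near $k_0 = \log n/\log(1/p)$, at value $n^{(1+o(1))\log n/(2\log\log n)}$, and for $k_0$ safely below the clique number the count concentrates, so with high probability it is at least half its mean.

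The crux, and the step I expect to be the main obstacle, is bounding $M$, which I would do by first showing that every valid cluster is small. The key structural fact is that, since $p$ lies a constant factor below $\dens_0$, a Chernoff bound gives that a fixed vertex set of size $s$ has density $\ge \dens_0 = 2p$ with probability at most $e^{-\Theta(p s^2)}$; a union bound over the $\binom{n}{s}$ sets and over all $s \ge s_0 := \Theta(\log n/p) = \Theta(\log^2 n)$ then shows that with high probability no set of size at least $s_0$ has density $\ge \dens_0$. Consequently every cluster of the \obj{} has fewer than $s_0$ vertices and hence contains at most $M \le \binom{s_0}{k_0}$ of the $k_0$-cliques. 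With $s_0 = \Theta(\log^2 n)$ and $k_0 = \Theta(\log n/\log\log n)$ one checks $\binom{s_0}{k_0} \le s_0^{k_0} = n^{O(1)}$, so $M$ is merely polynomial in $n$.

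Combining the pieces yields an \obj{} size of at least $|\mathcal{K}|/M \ge n^{(1+o(1))\log n/(2\log\log n)}/n^{O(1)} = \Omega^*(n^{\Theta(\log n/\log\log n)})$, where the polynomial factor $M$ is absorbed into the leading exponent and the $\Omega^*$ notation. The remaining work is entirely about constants: the clean optimum of the exponent is $1/2$, and the stated $3/8$ is the constant that provably survives after (i) keeping only half the expected number of $k_0$-cliques, (ii) possibly using a slightly suboptimal $k_0$ to guarantee concentration and to keep $\binom{s_0}{k_0}$ polynomial, and (iii) dividing by the polynomial $M$ together with the sublinear factors hidden in $\Omega^*$. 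The delicate point is to choose $p$, $k_0$, and $s_0$ jointly so that the many-cliques bound, the small-cluster bound, and the few-cliques-per-cluster bound all hold for the same graph while keeping the final exponent's constant provably at least $3/8$.
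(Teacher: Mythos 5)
Your proposal is correct and follows essentially the same route as the paper: a random graph with edge probability $\Theta(1/\log n)$, a Chernoff-plus-union-bound argument showing every cluster of density $\geq c/\log n$ has size $O(\log^2 n)$, a first- and second-moment count showing $n^{\Theta(\log n/\log\log n)}$ cliques of size $k=\Theta(\log n/\log\log n)$, and a pigeonhole division by the polynomial number of such cliques a small cluster can contain. The only difference is cosmetic: the paper plants a $k$-partite structure (independent parts, cross edges with probability $1/\log n$) so that the counted cliques are maximum cliques, whereas you work in plain $G(n,p)$ and correctly observe this is unnecessary since the aggregator definition forces \emph{every} clique of size $k_0$ to be covered, and your constant $3/8$ arises, as in the paper, from choosing $k_0 = \frac{1}{2}\frac{\log n}{\log\log n}$ rather than the exponent-maximizing $\frac{\log n}{\log\log n}$.
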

\begin{proof}
We show this by obtaining a lower bound for the number of clusters required to acquire every maximum clique in the graph. Since every maximum clique is also a maximal clique, clearly, this is a lower bound for the size of any \obj{}.
Consider the following family $\mathcal{G}$ of random graphs.
$V$ consists of $n$ vertices partitioned into  $k$ vertex-disjoint subsets $V_1,\ldots, V_k$ each consisting of $n/k$ vertices\footnote{We assume that $k$ divides $n$.} for $k=\frac{1}{2}\frac{\log n}{\log \log n}$.
For each $i = 1, \ldots, k$, the vertex set $V_i$ is independent. Edges between vertices $u \in V_j$ and $v \in V_{j'}$ for distinct $(j, j')$ are added independently with probability $p = \frac{1}{\log n}$.

Our proof now splits into two parts.
We first show that any cluster of at least $\log^2 n$ vertices in a graph from $\mathcal{G}$ has
density smaller than $c/\log n$ for some constant $c$ almost surely (i.e. with probability that (polynomially) goes to $0$ as $n$ goes to infinity). Then we show that a graph from $\mathcal{G}$ has 
$\Omega^*\left(n^{\frac{3\log n}{8\log\log n}}\right)$ cliques of size $k=\frac{1}{2}\frac{\log n}{\log \log n}$ almost surely.
Consequently, in almost every graph from $\mathcal{G}$, the maximum cliques must be covered by clusters of size less than $\log^2 n$. Furthermore, each such cluster covers at most $(\log^2 n)^k = O(n)$ maximum cliques.
 So our cover must use 
$\Omega^*\left(n^{\frac{3\log n}{8\log\log n}}\right)$ clusters.

\medskip
\noindent
{\bf (A) Density of clusters of size at least $\log^2 n$.}
Consider a cluster $C$ of $\ell$ vertices in a graph from $\mathcal{G}$.
The number of edges in the subgraph induced by this cluster is a binomial random variable $X=Binomial(m,p)$ where $m={\ell \choose 2}$. Clearly $\mu=E[X]= mp= {\ell \choose 2} \frac{1}{\log n} $.  
Fix $\dens = \frac{c}{\log n}$ for some large constant $c$. 
We bound
the probability of having at least $s=\dens{\ell \choose 2} = c\mu$ edges in $G[C]$.
A standard tail bound on a binomial random variable gives that for any $\delta > 0$,
\[
\Pr(X \ge (1 + \delta)\mu) \le \exp\left(-\frac{\mu \delta^2}{2 + \delta}\right) \ .
\]
Substituting $\delta = c-1$ and using the fact that $\mu = {\ell \choose 2} \frac{1}{\log n}$, we get that
\[ 
\Pr(X \ge c\mu) \le e^{-\frac{\mu(c-1)^2}{c+1}} = e^{-\frac{{\ell \choose 2} \frac{1}{\log n}(c-1)^2}{(c+1)}} \ .
\]
There are ${n \choose \ell}$ clusters of $\ell$ vertices. So by the union bound 
the probability that at least one of them has  density larger than $\frac{c}{\log n}$ is at most
\[
{n \choose \ell} e^{-\frac{{\ell \choose 2} \frac{1}{\log n}(c-1)^2}{(c+1)}} \le
e^{(\ln n ) \cdot \ell-\frac{\ell (\ell - 1)(c-1)^2}{2(c+1)\log n}} \ .
\]
For $\ell = \Omega(\log^2 n)$ and a reasonably large $c$ this is 
$e^{-O(\ell^2/\log n)}$.
We conclude that all clusters of size $\Omega(\log^2 n)$ have density smaller than 
$O(1/\log n)$ almost surely.

\bigskip
\noindent
{\bf (B) The number of cliques of size $k$.} By the structure of the graph, these are maximum cliques.
A cluster of $k$ vertices could be a clique only if each vertex is from a different $V_i$.
It follows that there are 
$s=(\frac{n}{k})^k$ such clusters, $C_1,\ldots,C_s$.
Let $A_i$, $i\in [s]$, be an indicator random variable of the event that $C_i$ is a clique.
We have for all $i\in [s]$
\[
\Pr(A_i)=\left(\frac{1}{\log n } \right)^{k \choose 2} \ .
\] 
Let $Y=\sum_{i\in [s]} A_i$
be the number of $k$-cliques in the graph.
Then
\[
E[Y] = 
\left(\frac{n}{k} \right)^k \Pr(A_i)=\left(\frac{n}{k} \right)^k\left(\frac{1}{\log n } \right)^{k \choose 2} 
\]
\[
= \left(\frac{n}{k} \right)^k2^{-\log\log n  \cdot k(k-1)/2} \ .
\]

Substituting
$k=\frac{1}{2}\frac{\log n}{\log \log n}$ we get that 
\[
E[Y]\approx n^{\frac{3k}{4}} = n^{\frac{3\log n}{8\log\log n}},
\]
where this approximation is tight up to a polynomial factor smaller than $n$.

To show that $Y$ is concentrated around its mean, we follow the second moment method described in Alon and Spencer \cite{alon2016probabilistic}, and estimate the variance of $Y$. By definition,
\[
Var[Y] = E[Y] + \sum_{i\not= j}Cov[A_i, A_j], 
\]
where
\[
Cov[A_i,A_j]=E[A_iA_j]-E[A_i]E[A_j] \ .
\]
Now, if $C_i$ and $C_j$ are edge-disjoint then 
$Cov[A_i,A_j]=0$. Otherwise, 
\[
Cov[A_i,A_j] \le E[A_iA_J] = \Pr(A_i \land A_j)\ . 
\]
We conclude that
\begin{equation} \label{eq:varbound}
Var[Y] \le  E[Y] + \sum_{i \sim j}\Pr(A_i \land A_j), 
\end{equation}
where $i\sim j$ means that $C_i$ and $C_j$
have at least one edge in common.

We can further write
\begin{equation}
\label{eq:pairwise-prob}
\sum_{i \sim j}\Pr(A_i \land A_j)  = 
\sum_i \Pr[A_i]\sum_{j\sim i} \Pr[A_j \mid A_i ]
= E[Y]\Delta^*,
\end{equation}
where $\Delta^* = \sum_{j\sim i} \Pr[A_j \mid A_i]$ - which is in fact independent of $i$ by symmetry. 
Fixing $C_i$, there are ${k \choose 2}\left(\frac{n}{k}-1\right)^{k-2}$ clusters $C_j$ that share two vertices (a single edge)  with $C_i$. For each such cluster we have 
\[
\Pr[A_j \mid A_i ] = \left(\frac{1}{\log n } \right)^{\left({k \choose 2}-1\right)} \ .
\]
Similarly we have 
${k \choose 3}\left(\frac{n}{k}-1\right)^{k-3}$
clusters $C_j$ that share three vertices with $C_i$.
For each such cluster, we have 
\[
\Pr[A_j \mid A_i ] = \left(\frac{1}{\log n } \right)^{\left({k \choose 2}-3\right)} \ .
\]
In general for $1< \beta < k$ we have 
${k \choose \beta}\left(\frac{n}{k}-1\right)^{k-\beta}$
clusters $C_j$ that share $\beta$ vertices with $C_i$.
For each such cluster we have 
\[
\Pr[A_j \mid A_i ] = \left(\frac{1}{\log n } \right)^{\left({k \choose 2}-{\beta \choose 2}\right)} \ .
\]
It follows that the
 total contribution of clusters $C_j$ to $\Delta^*$ such that 
$|C_j\cap C_i|=\beta$ is
\[
{k \choose \beta}\left(\frac{n}{k}-1\right)^{k-\beta}\left(\frac{1}{\log n } \right)^{\left({k \choose 2}-{\beta \choose 2}\right)} \ .
\]
Dividing this by $E[Y]$
we get 
\[
\frac{{k \choose \beta}\left(\frac{n}{k}-1\right)^{k-\beta}\left(\frac{1}{\log n } \right)^{\left({k \choose 2}-{\beta \choose 2}\right)}} {\left(\frac{n}{k} \right)^k\left(\frac{1}{\log n } \right)^{k \choose 2}}
\le 
\frac{
k^\beta (\log n)^{\beta \choose 2}}{\left(\frac{n}{k} \right)^\beta}
\]
\[
=
k^{2\beta}\cdot 2^{\log\log n \beta (\beta-1)/2 - \log n \cdot \beta } \ .
\]
Since
$\beta < k = \frac{1}{2}\frac{\log n}{\log \log n}$ we 
get that this ratio is at most
\[
k^{2\beta}\cdot 2^{\log n (\beta-1)/4 - \log n \cdot \beta } \le k^{2\beta}n^{-\frac{3\beta}{4}} \ ,
\]
which is $O(n^{1-\frac{3\beta}{4}})$. So
for any $1<\beta< k$ we
conclude that
$\Delta^* = o(E(Y))$.
Combining this with Equations (\ref{eq:varbound}) and (\ref{eq:pairwise-prob}) yields
\[
Var[Y] = o((E[Y])^2) \ .
\]
Recalling that the Chebyshev Inequality says that for any $\varepsilon > 0$
\[
\Pr(|Y-E[Y]| \ge \varepsilon E[Y] ) \le \frac{Var[Y]}{\varepsilon^2 (E[Y])^2},
\]
so it follows that 
\[
\Pr(|Y-E[Y]| \ge \varepsilon E[Y] ) \underset{n\rightarrow \infty}{\longrightarrow } 0 \ , \]
and $Y$ is within a constant factor of its mean almost surely.
\end{proof}

\subsection{Coverage of non-overlapping clustering}\label{sec:clustering}

In this section we show that there exist graphs for which any non-overlapping clustering completely fails to capture the structure of their dense subgraphs.

Let $G$ be an undirected graph and $\mathcal{C}$ be a non-overlapping clustering of $G$.
Also, let $D$ be a clique in $G$.
We say that $\mathcal{C}$ \emph{partially covers} $D$ if there exists a cluster $C \in \mathcal{C}$ such that $|C \cap D| = \Omega(|D|)$, that is there is a cluster containing a constant fraction of the vertices of $D$.

\begin{restatable}{theorem}{thmshatter}\label{thm:shatter}
For any sufficiently large $n$, there exists a graph $G_n$, such that for any non-overlapping clustering $\mathcal{C}$ of $G_n$, only a $o(1)$-fraction of all maximal cliques of $G_n$ are partially covered by $\mathcal{C}$.
\end{restatable}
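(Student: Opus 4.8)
The plan is to reuse the random $k$-partite graph family $\mathcal{G}$ from Theorem~\ref{thm:lower-bound}, with $k=\frac12\frac{\log n}{\log\log n}$ and inter-part edge probability $p=1/\log n$, and to argue that any non-overlapping clustering into clusters of constant density $\rho$ leaves almost all of its maximum cliques uncovered. Since every maximum clique is maximal and, by part~(B) of Theorem~\ref{thm:lower-bound}, there are $Y=\Omega^*(n^{3\log n/(8\log\log n)})$ of them almost surely, it suffices to show that the number of \emph{maximum} cliques that are partially covered is $o(Y)$; the fraction over all maximal cliques is then also $o(1)$, because the total number of maximal cliques is at least $Y$. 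The first ingredient I would establish is a constant-density analogue of part~(A): a union bound over all $\binom{n}{\ell}$ candidate clusters shows that, almost surely, every cluster of density at least $\rho$ has size at most $\ell_0 = O_\rho(\log n/\log\log n)$. Concretely, $\Pr[\mathrm{Bin}(\binom{\ell}{2},p)\ge\rho\binom{\ell}{2}]\le(ep/\rho)^{\rho\binom{\ell}{2}}$, and multiplying by $\binom{n}{\ell}\le n^{\ell}$ drives the bound below $1$ once $\ell\gtrsim \frac{3}{\rho}\frac{\log n}{\log\log n}$. Note that $\ell_0=\Theta(k)$ and $2^{\ell_0}=n^{o(1)}$.

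Next I would reduce covering to a sub-clique count. If a maximum clique $D$ (a transversal of size $k$) is partially covered, some cluster $C$ satisfies $|C\cap D|\ge\alpha k$ for the relevant constant $\alpha$, so $C$ contains an $a$-clique $A\subseteq D$ with $a:=\lceil\alpha k\rceil$ lying entirely inside a single cluster. Let $\mathcal{A}$ denote the set of all such cluster-internal $a$-cliques over the clustering. Using the size bound $|C|\le\ell_0$ together with disjointness of clusters, $|\mathcal{A}|\le\sum_{C}\binom{|C|}{a}\le n\binom{\ell_0}{a}\le n\,2^{\ell_0}=n^{1+o(1)}$, a bound that holds for every valid clustering once the almost-sure event above occurs. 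Hence the number of partially covered maximum cliques is at most $\sum_{A\in\mathcal{A}}X_A\le |\mathcal{A}|\cdot\max_{A}X_A$, where $X_A$ is the number of maximum cliques containing a fixed $a$-clique $A$.

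It remains to control $X_A$. Conditioned on $A$ being a clique, its expectation is
\[
\E[X_A]=\left(\tfrac{n}{k}\right)^{k-a}p^{\binom{k}{2}-\binom{a}{2}},
\]
since extending $A$ to a full transversal requires choosing one vertex in each of the remaining $k-a$ parts and realizing the $\binom{k}{2}-\binom{a}{2}$ new edges. Plugging in the parameters and combining with $|\mathcal{A}|=n^{1+o(1)}$ and $\log_2 Y=\frac{3\log^2 n}{8\log\log n}$, the ratio of the upper bound on covered cliques to $Y$ has base-$2$ logarithm $O(\log n)+\frac{\alpha(\alpha-4)}{8}\cdot\frac{\log^2 n}{\log\log n}$, which is negative for every $\alpha\in(0,1]$; thus the covered fraction is $2^{-\Theta(\log^2 n/\log\log n)}=o(1)$, provided $X_A$ does not exceed $n^{o(1)}\E[X_A]$.

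The main obstacle is exactly this last point: I must show that, almost surely, $\max_A X_A\le n^{o(1)}\E[X_A]$ \emph{uniformly} over all $\sim n^{a}$ candidate $a$-cliques $A$, because the adversarial clustering can select $\mathcal{A}$ to favor high-$X_A$ cliques. A crude common-neighbor bound is too lossy here—it ignores the edges required among the extension vertices and overshoots $\E[X_A]$ by a factor $(\log n)^{\binom{k-a}{2}}$, which eclipses the available slack—so a genuine concentration inequality for the clique-extension count (a low-degree polynomial of the edge indicators, via Kim--Vu-type or high-moment bounds) is needed, with tail probability below $2^{-\omega(\log^2 n/\log\log n)}$ to survive the union bound over all $A$. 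Fortunately the multiplicative slack in the ratio is super-polynomial, so even an $n^{o(1)}$-factor concentration suffices. A secondary point requiring care is extending the conclusion from maximum cliques to \emph{all} maximal cliques: I would either argue that maximum-size cliques dominate the maximal-clique count, or rerun the sub-clique argument size-by-size, bounding covered maximal cliques of each size $d$ against the total number of maximal cliques of that size.
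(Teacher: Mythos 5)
Your route is genuinely different from the paper's, but as written it has a real gap at its central step, which you yourself flag: the uniform bound $\max_A X_A \le n^{o(1)}\,\mathbb{E}[X_A]$ over all $\sim n^a$ candidate $a$-cliques $A$. The tools you propose will not deliver it: $X_A$ is a polynomial of degree $\binom{k}{2}-\binom{a}{2}=\Theta(k^2)=\Theta\bigl((\log n/\log\log n)^2\bigr)$ in the edge indicators, and Kim--Vu-type bounds with degree $d$ only give tails of the form $e^{-\lambda}$ at multiplicative deviation $\approx\lambda^{d}$; forcing the deviation to be $n^{o(1)}$ forces $\lambda=1+o(1)$, which is useless against the $2^{-\omega(\log^2 n/\log\log n)}$ tail you need for the union bound. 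The good news is that your approach is salvageable by a much more elementary device than polynomial concentration: iterate the codegree bound level by level instead of applying it once. In this parameter regime, for every $j<k$ and every $j$-clique $B$, the expected number of common neighbors of $B$ is $\mu_j=(k-j)(n/k)p^j\ge (n/k)p^{k-1}=\Omega(\sqrt{n}\log n/k)$, so Chernoff gives failure probability $e^{-\Omega(\sqrt n\log n/k)}$, which survives a union bound over all $\binom{n}{j}\le e^{O(\log^2 n/\log\log n)}$ cliques at all levels. On that event, counting ordered extensions one vertex at a time yields $X_A\le\frac{1}{(k-a)!}\prod_{j=a}^{k-1}2(k-j)(n/k)p^j=2^{k-a}(n/k)^{k-a}p^{\binom{k}{2}-\binom{a}{2}}=2^{k-a}\,\mathbb{E}[X_A]=n^{o(1)}\mathbb{E}[X_A]$ --- the telescoping product $\prod_j p^j$ recovers exactly the factor $p^{\binom{k}{2}-\binom{a}{2}}$ that your one-shot common-neighbor bound loses. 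With this fix, your ratio computation ($O(\log n)+\tfrac{\alpha(\alpha-4)}{8}\cdot\tfrac{\log^2 n}{\log\log n}<0$) goes through for every constant $\alpha\in(0,1]$.

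Your secondary worry (maximum vs.\ all maximal cliques) also needs to be closed, but it resolves in your favor and easily: for $j<k$, any $j$-clique has, in each unused part, expected number of extenders $(n/k)p^j\ge\Omega(\sqrt n\log n/k)$, so the probability it is maximal is $e^{-\Omega(\sqrt n/\mathrm{poly}\log n)}$, and summing over $\le e^{O(\log^2 n)}$ candidate cliques shows that almost surely \emph{every} maximal clique is a full transversal of size $k$; hence bounding covered maximum cliques suffices. (Note also that both you and the paper implicitly need the clusters to have at least constant density, as stated in the introduction, though the theorem statement omits it.) For comparison, the paper avoids all probabilistic delicacy for the covering step by using a different construction: the square $H^2_{n,t}$ of a near-regular graph of girth $>8$ with $t=n^{0.1}$, in which the maximal cliques are exactly the $n$ neighborhoods $N_v$, any two share at most one vertex, a $2$-path count shows constant-density clusters have size $O(t)$, and a sunflower-free overlap argument shows each such cluster partially covers only $O(1)$ cliques, so at most $O(n^{0.9})$ of the $n$ maximal cliques are covered. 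The paper's graph buys clean determinism-like structure and short lemmas; your random $k$-partite graph buys economy (reusing the lower-bound family) at the cost of the extension-count concentration and the maximal-clique classification, both of which must be proved as above.
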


To prove the above theorem we use a graph whose properties are given in the following lemma.

\begin{lemma} 
\label{lem:H}
For any sufficiently large $n$ and any $t \in [1, n^{0.1}]$ there exists an undirected graph $H_{n, t}$ with the following two properties:
\begin{enumerate}
\item $H_{n, t}$ has $n$ vertices, each of degree $\Theta(t)$.
\item $H_{n, t}$ has no cycles of length $\leq 8$.
\end{enumerate}
\end{lemma}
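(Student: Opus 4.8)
The plan is to use the probabilistic deletion method: generate a sparse random graph whose expected degree is $\Theta(t)$, observe that it contains very few short cycles precisely because $t \le n^{0.1}$, and then delete one edge from each short cycle, arguing that so few edges are removed that every vertex retains degree $\Theta(t)$. The role of the hypothesis $t \le n^{0.1}$ is to keep the short-cycle count asymptotically negligible compared to the number of edges, so that deletion does not disturb the degree sequence.

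Concretely, I would fix a constant $c \ge 4$, set $p = ct/n$, and take $G \sim G(n,p)$. For each length $\ell$ with $3 \le \ell \le 8$, the expected number of $\ell$-cycles is $\binom{n}{\ell}\tfrac{(\ell-1)!}{2}p^\ell \le \tfrac{(np)^\ell}{2\ell} = \tfrac{(ct)^\ell}{2\ell}$, so the expected total number $Z$ of cycles of length at most $8$ is $\sum_{\ell=3}^{8}\tfrac{(ct)^\ell}{2\ell} = O(t^8)$. Since $t \le n^{0.1}$ we have $t^8 \le n^{0.8} = o(n)$, and in fact $t^8 = o(np \cdot n) = o(\#\text{edges})$; this is exactly where the bound $t \le n^{0.1}$ enters. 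By Markov's inequality there is a realization in which $Z = O(t^8)$.

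Deleting one edge from each such cycle then produces a graph $H_{n,t}$ with no cycle of length $\le 8$, removing at most $O(t^8) = o(n)$ edges in total. For the degrees, each vertex of $G(n,p)$ has expected degree $(n-1)p = (1-o(1))ct = \Theta(t)$, and a Chernoff bound with a union bound over the $n$ vertices shows that, provided $ct = \Omega(\log n)$, all degrees lie in $[\tfrac12 ct, 2ct] = \Theta(t)$ with high probability. A short moment computation (the expected number of short cycles through a fixed vertex is $O(t^8/n) = o(1)$, and a union bound then rules out any vertex lying on more than a constant number of them) shows that no vertex loses more than $O(1)$ edges in the deletion step; since its degree is $\Omega(\log n)$ in this regime, every vertex of $H_{n,t}$ still has degree $\Theta(t)$, giving both required properties.

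The one regime this leaves open is small $t$ (roughly $t = O(\log n)$), where the degrees of $G(n,p)$ no longer concentrate and the minimum degree can drop to $0$. Here I would instead invoke the classical existence of $d$-regular graphs of girth greater than $8$ with $d = \lceil ct\rceil = O(\log n)$: by the Moore bound such a graph needs only $\Theta(d^4)$ vertices, and since $d^8 = O(\log^8 n) \ll n$, the Erd\H{o}s--Sachs theorem (or an explicit high-girth construction, with the cycle $C_n$ and a perfect matching handling $d \le 2$) supplies a $d$-regular, girth-$>8$ graph on exactly $n$ vertices, all of whose degrees equal $d = \Theta(t)$. I expect the main obstacle to be exactly this tension between the two requirements: making the short-cycle count $o(n)$ forces $t$ to be small (guaranteed by $t \le n^{0.1}$), whereas obtaining a \emph{uniform} lower bound on the degrees is delicate when $t$ is small, so the argument must either bridge the concentrated and unconcentrated regimes or, as above, dispatch the small-$t$ case by a separate regular-graph construction.
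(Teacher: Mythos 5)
Your overall route is genuinely different from the paper's: the paper constructs $H_{n,t}$ \emph{deterministically} by a greedy process (repeat $t\cdot n$ times: take a minimum-degree vertex $v$ and join it to a minimum-degree vertex among those at distance $\geq 8$ from $v$), which gives girth $>8$ by construction and the degree bounds by a simple induction (degree $\geq k$ after $kn$ insertions, and $\leq 3i/n+1$ after $i$ insertions), uniformly over all $t\in[1,n^{0.1}]$ with no case split. Your probabilistic-deletion plan is a viable alternative, but it has one genuine gap: the justification for ``no vertex loses more than $O(1)$ edges.'' You write that the expected number of short cycles through a fixed vertex is $O(t^8/n)=o(1)$ and that ``a union bound then rules out any vertex lying on more than a constant number of them.'' As stated this fails: at $t=n^{0.1}$ the per-vertex probability of lying on at least one short cycle is only $O(n^{-0.2})$, so the union bound over $n$ vertices gives the vacuous $O(n^{0.8})$ --- and indeed the expected \emph{total} number of short cycles is $\Theta(t^8)=\Theta(n^{0.8})$, so with high probability many vertices do lie on short cycles, and overlapping configurations (e.g., two $8$-cycles sharing an edge have expected count $\Theta((ct)^{14}/n)=n^{\Omega(1)}$) force some vertices onto several cycles at once. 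What you actually need is a tail bound on the \emph{maximum} number of short cycles through a single vertex, and that requires a small-subgraph configuration analysis rather than first-moment-plus-union-bound: for instance, bound unions of $k$ ``fresh'' short cycles through $v$, noting each fresh cycle attached to the existing union contributes a factor $O\bigl((ct)^8/n\bigr)\leq n^{-0.2}$, so w.h.p.\ no vertex supports more than $5$ fresh cycles; since a union of boundedly many short cycles has bounded cyclomatic number and hence contains $O(1)$ simple cycles, the per-vertex count is $O(1)$ w.h.p. (You do get one thing for free: any deleted edge incident to $v$ lies on a cycle through $v$, so charging deletions at $v$ to cycles through $v$ is sound.)

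Two smaller points. First, in the small-$t$ regime your appeal to Erd\H{o}s--Sachs for a ``$d$-regular, girth-$>8$ graph on exactly $n$ vertices'' overlooks the parity obstruction: no $d$-regular graph on $n$ vertices exists when $nd$ is odd, so you should settle for near-regular degrees in $\{d,d+1\}$ (harmless, since the lemma only demands $\Theta(t)$), and you should cite a version of the theorem that guarantees all sufficiently large admissible orders $n$ rather than merely some order. Second, note the trade-off your two-regime structure incurs: the Chernoff step needs $ct\geq K\log n$ for the union bound over vertices, and stitching the regimes together is exactly the complication the paper's single greedy construction avoids. With the configuration analysis supplied and the parity fix made, your argument goes through, but as written the degree-preservation step is not proved.
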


\begin{proof}
We show that the graph can be built using the following process.
Start with an empty graph and run the following step $t\cdot n$ times.
First, pick a minimum-degree vertex $v$.
Then, consider any set $V_{\geq 8}$ of vertices at distance $\geq 8$ from $v$.
Finally, add an edge from $v$ to any minimum degree vertex in $V_{\geq 8}$.

Clearly, the resulting graph does not have a cycle of length $\leq 8$.
Let us now bound the degrees.

For the lower bound we use induction to show that after $k \cdot n$ edges are added, the degree of each vertex is at least $k$.
This is clearly true for $k=0$.
For the inductive step, assume it is true for some $k \geq 1$.
Over the next $n$ steps, we pick the minimum degree vertex $n$ times (as the first vertex we pick) and so each vertex of degree $k$ will have its degree increased.
This finishes the lower bound proof.

For the upper bound, we show that after adding $i$ edges, the maximum degree is at most $3i / n + 1$.
Assume we have added exactly $i$ edges so far and we perform the next step.
Let $v$ denote the first vertex that we pick.
Since the total degree of all vertices at this point is exactly $2i$, there exists a vertex of degree at most $2i / n$.
Since $v$ is of minimum degree, its degree is at most $2i / n$ and it grows to at most $2i / n + 1$. 

Consider now the second vertex which we pick.
Let us first bound the number of vertices at distance at most $8$ from $v$.
Since the vertex degrees are bounded by $\Theta(t)$, the number of such vertices is at most $O(t^8) = O(n^{0.8})$.
Hence for a sufficiently large $n$, there are at least $0.8n$ vertices at distance more than $8$ from $v$.
The minimum degree among these vertices can be upper-bounded by $2i / (0.8 n) \leq 3i / n$ and so the vertex degree of the second vertex becomes at most $3i / n + 1$. This finishes the proof.
\end{proof}

To prove \Cref{thm:shatter} we fix $n$ and $t$ and consider the graph $H^2_{n, t}$, which is a \emph{square} of $H_{n, t}$.
That is, $H^2_{n, t}$ has the same vertex set as $H_{n, t}$ but there is an edge $uv$ in $H^2_{n, t}$ if and only if there is a path of length $2$ between $u$ and $v$ in $H_{n, t}$.
The rest of the proof is described in the supplementary material.

\section{Implementation and Optimizations}\label{sec:implementation}

\newcommand{\recx}[0]{\textsc{RecX}}

\begin{algorithm}

\begin{algorithmic}[1]

\Function{Density}{$G, C$}

    \Return density of $V(G) \cup C$ where $C$ extends $V(G)$  

\EndFunction

{\pruningcode
\Function{\recx{}}{$\{X_1, \ldots, X_k\}, v, \vec{G}$}
  \State $H_v := N_{\vec{G}}(v)$ \Comment{The set we recurse on} 

  \State $X_{out} := \{X_i \cap H_v \mid 1 \leq i \leq k\} \cup \{N_{\vec{G}}(w) \cap H_v \mid w \in N_{\vec{G}}^{in}(v)\}$\label{l:int1}
  
  \If{$|H_v| > 0$}
    \State Remove empty sets from $X_{out}$
  \EndIf
  \State \Return $X_{out}$
\EndFunction}

\Function{\ccri}{$G, \dens, C, {\pruningcode \mathcal{X}}$} 

    {\pruningcode
    \For{each $X_i \in \mathcal{X}$}
        \If{$X_i = V(G)$}\label{l:int2}
            \Return $\emptyset$
        \EndIf
    \EndFor}

    \If{$\textsc{Density}(G, C) \geq \dens$}

        \State \Return $\{C \cup V(G)\}$

    \EndIf

    \State $S := \emptyset$

    \State $Ord := \textsc{DegeneracyOrdering}(G)$
    
    \State $\vec{G} := \textsc{Direct}(G, Ord)$
    
    \For{each $v \in \vec{G}$ according to $Ord$}

    	\State $G_v := G[N_{\vec{G}}(v)]$, $C_v = C \cup \{v\}$\label{l:induced}

    	\State $S := S \cup \textsc{\ccri{}}(G_v, \dens, C_v, {\pruningcode \recx{}(\mathcal{X}, v, \vec{G})})$ 
        
        \State remove $v$ from $G$
    
        \If{$\textsc{Density}(G, C) \geq \dens$}
            {\pruningcode
            \If{$\exists X \in \mathcal{X}$ s.t. $V(G) \subseteq X$}\label{l:int3}
                \Return $S$
            \EndIf

            \State{$P := $ vertices of $\vec{G}$, who are not later than $v$ in $Ord$}

            \If{$\exists w \in P$ s.t. $V(G) \subseteq N_{\vec{G}(w)}$}\label{l:int4}
                \Return $S$
            \EndIf}
    
            \State \Return $S \cup \{C \cup V(G)\}$
    
        \EndIf
    
    \EndFor
    
    \State \Return S

\EndFunction

\Function{\cci}{$G, \dens$} \Comment{$G$ is a graph and $\dens \in (0, 1]$}

    \State $S := $\ccri$(G, \dens, \emptyset, \emptyset)$
    \State \Return $S$

\EndFunction

\end{algorithmic}

\caption{\label{alg:cci}Pseudocode of our C++ implementation.  {\pruningcode The code in teal ensures that the aggregator is inclusion-maximal.}}

\end{algorithm}

In this section we describe our C++ implementation of the clique aggregator algorithm and the optimizations that we have used.
\Cref{alg:cci} gives the pseudocode of the algorithm that we implement, which, as we argue in this section, is equivalent to \Cref{alg:cc}.
The code is available at \url{https://github.com/google/graph-mining/tree/main/in_memory/clustering/clique_aggregator}.
The main recursive function in \Cref{alg:cci} is \cci{}.

\subsection{Equivalence of \Cref{alg:cc,alg:cci}}
We start by discussing how \Cref{alg:cci} works.
In \Cref{alg:cc}, the main recursive function \cc{} takes as arguments a graph $G$ and a set of vertices $H \subseteq V(G)$ and operates on a graph $G[H]$.
In \Cref{alg:cci} this works a bit differently: each recursive call is simply given $G[H]$ as an argument (in \Cref{alg:cci} it is denoted by $G$).
Materializing the induced subgraph enables several optimizations that we discuss later on.

However, this also requires some extra work to actually compute the induced subgraph in \Cref{l:induced}.
To this end, \Cref{alg:cci} computes not only the degeneracy ordering, but also a directed version $\vec{G}$ of G, where each edge is directed towards the endpoint that comes later in the ordering.
This orientation is a standard tool for efficient triangle listing, as finding all triangles involving a vertex $v$ requires finding all edges between its neighbors~\cite{chiba1985arboricity}.
Note that this is the exact operation needed to construct the induced subgraph for the next recursive step.

In \Cref{alg:cc}, as vertices are processed in degeneracy order, they are notionally removed from the graph.
Consequently, any subsequent operation on a vertex $v$ only considers neighbors that appear later in the ordering.
\Cref{alg:cci} achieves this efficiently by using the directed graph $\vec{G}$; the relevant neighborhood of a vertex $v$ is simply its out-neighborhood $N_{\vec{G}}(v)$. This approach avoids the more complex operation of dynamically removing vertices from an undirected graph and updating the adjacency lists of its neighbors.

The second difference is in how the set $X$ used for pruning is represented.
In \Cref{alg:cc}, the pruning set $X$ contains vertices that have already been processed.
Its only purpose is to check if the current candidate set $H$ is a subset of a previously processed vertex's neighborhood (i.e., $H \subseteq N_G(x)$).
\Cref{alg:cci} makes this check more direct.
Instead of storing a vertex $x$ in a set $X$, our collection $\mathcal{X}$ contains the precomputed neighborhoods of such vertices restricted to the current graph.
Namely, instead of storing a vertex $x_i$ in $X$, \Cref{alg:cci} stores a set $X_i$, where $X_i = N_G(x_i) \cap H$.
This allows us to check for inclusion directly without referencing the original graph.

The set $\mathcal{X}$ is also computed a bit differently from $X$.
When \Cref{alg:cc} recurses on the neighborhood of a vertex $v$, it considers a set $\hat{X}$ which is the union of two sets:
\begin{enumerate}
 \item the set $X$ provided as an argument to \cc{}, and
 \item all vertices that came before $v$ in the degeneracy ordering. These are added to $\hat{X}$ in Line~\ref{step:ccr-add-v} in Algorithm~\ref{alg:cc}.
\end{enumerate}
Then, this set is intersected with $N_G(v)$ to obtain the set $X$ for the recursive call.
In \Cref{alg:cci}, this is achieved by the function \recx{}, which computes an analogous set $\mathcal{X}$.
Specifically, $\{X_i \cap H_v \mid 1 \leq i \leq k\}$ is a direct counterpart of (1), while $\{N_{\vec{G}}(w) \cap H_v \mid w \in N_{\vec{G}}^{in}(v)\}$ is responsible for (2).
Observe that $w \in N_{\vec{G}}^{in}(v)$ is exactly the set of vertices that come in the degeneracy ordering before $v$ and have an edge to $v$.

\subsection{Optimized data representation} \label{sec:opt_data_rep}
It has been previously observed that set intersection is a dominant operation in maximal clique enumeration~\cite{blanuvsa2020manycore}.
Thus, we optimize our data representation in order to support fast intersections.

We use two different graph representations.
For large graphs, we simply put the neighbors of each vertex in a flat array (\texttt{std::vector}).
For small graphs, we use a bit matrix representation.
That is, an $n$-vertex graph is represented with an $n \times n$ array $A$ of bits, where $A[i][j] = 1$ iff there is an edge from vertex $i$ to $j$.
We store the rows of this array packed into 64-bit integers (\texttt{uint64\_t}).

In our implementation we use the flat array representation in the outermost recursive call, and the bit matrix representation in every other call.
This is because the size of the graph in each recursive call is at most the graph degeneracy, which for real-world graphs is typically small.
For example, among over 80 datasets studied in \cite{eppstein2013listing} the maximum degeneracy is $266$.
For graphs of this size, the bit matrix representation uses at most five 64-bit variables to store each adjacency list.

Let us now roughly compare the space usage of both representations.
Consider an $n$-vertex graph with average outdegree is $10$.
We use 4 bytes to store each neighbor, and so the total space usage, ignoring the additional \texttt{std::vector} overhead is $40\cdot n$ bytes.
For the bit matrix representation, we get $n^2 / 8$ bytes.
Under these assumptions the bit array representation is smaller as long as $n < 320$.

The bit matrix representation allows us to efficiently iterate through the neighbors of a given node by using bit operations, i.e., finding the index of the lowest nonzero bit.
More importantly, it allows us to efficiently intersect two adjacency lists using bitwise AND, which we use for example when computing the induced subgraph in \Cref{l:induced}.

We also use a bit array representation to store elements of $\mathcal{X}$.
Observe that in our algorithm, each element of $\mathcal{X}$ is a subset of vertices of a graph, whose size is at most the degeneracy of the input graph.
Hence, a similar reasoning shows that representing each element of $\mathcal{X}$ uses at most a handful of 64-bit integers.
This combination of bit-level representations for the graph and the elements of $\mathcal{X}$ allows us to very efficiently compute set intersection and set containment in \cref{l:int1,l:int2,l:induced,l:int3,l:int4}.

\section{Empirical evaluation}\label{sec:results}

In this section we present the result of a series of experiments performed to study the performance of our algorithm. We use an AMD EPC 7003 virtual machine with 32 vCPUs and 128 GB of RAM. The github link in Section~\ref{sec:implementation} points to the repository: the code was compiled using \texttt{GCC 12.2.0} with the \texttt{O3} optimization. The implementation has some parameters: for the rest of this section, we treat the default as the setting that enables Bron-Kerbosch pruning, allows disconnected clusters and uses the bit matrix representation discussed in section~\ref{sec:opt_data_rep} (going forward, we refer to this as the bitmap). In Section~\ref{sec:ablation}, we study how the running time of the algorithm changes when some of these parameters are toggled.  
\paragraph{Datasets.}
We carry out extensive experiments on 16 publicly available graph datasets shown in \Cref{tab:datasets}.
Fourteen of these datasets come from the SNAP repository~\cite{snapnets}.
In order to choose challenging datasets, we picked graphs for which the number of maximal cliques is at least half the number of edges.
In addition, we also use the hollywood and soc-twitter datasets, which were taken from the Network Repository~\cite{nr-aaai15}.
These graphs have higher degeneracy than the graphs available in SNAP, which makes them particularly hard for the task of listing maximal cliques.

All datasets were treated as undirected graphs, and were preprocessed to remove parallel edges and self-loops. 
In addition, for the soc-twitter dataset, we use the 3-core of the original dataset, that is, we (iteratively) remove vertices of degree at most $2$.
This reduces the number of edges in the graph from $1.5$ billion to about $256$ million.
We note that this preprocessing 
does not remove any maximal clique of size at least $4$.

We perform experiments to compare our method with several baselines in terms of both the running time and the properties of the output, specifically --- the number of clusters, their density, and vertex-cluster membership (the number of clusters a vertex is contained in).
For every run, we set a timeout of 5 hours.

\begin{table}[h]
	\centering
	\begin{tabular}{|l|r|r|r|}
		\hline
		Dataset & Vertices & Edges & Degen. ($\alpha$) \\
		\hline
		euemail  & 1K & 16K & 34\\
		Wiki-Vote & 8K & 101K & 53\\
		citHepTh & 28K & 352K & 37\\
		soc-Epinions1 & 76K & 406K & 67\\
		Slashdot & 82k & 948k & 55 \\
		web-BerkStan & 685K& 8M &201\\
		hollywood & 1M & 56M & 2208\\
		youtube & 1M & 3M & 51\\
		pokec & 2M & 22M & 47\\
		skitter & 2M & 11M & 111\\
		wiki & 2M & 25M & 99\\
		WikiTalk & 2M & 5M & 131 \\
		orkut & 3M & 117M & 253\\
		cit-Patents & 4M & 17M & 64\\
		livejournal & 4M & 35M & 360\\
		soc-twitter & 21M & 265M & 1695\\
		\hline
	\end{tabular}
	\caption{Datasets used in our experiments, along with their sizes and degeneracies. The soc-twitter dataset is the 3-core of the publicly available twitter dataset.} \label{tab:datasets}
\end{table}

\begin{figure}
	\centering
	\includegraphics[width=.9\linewidth]{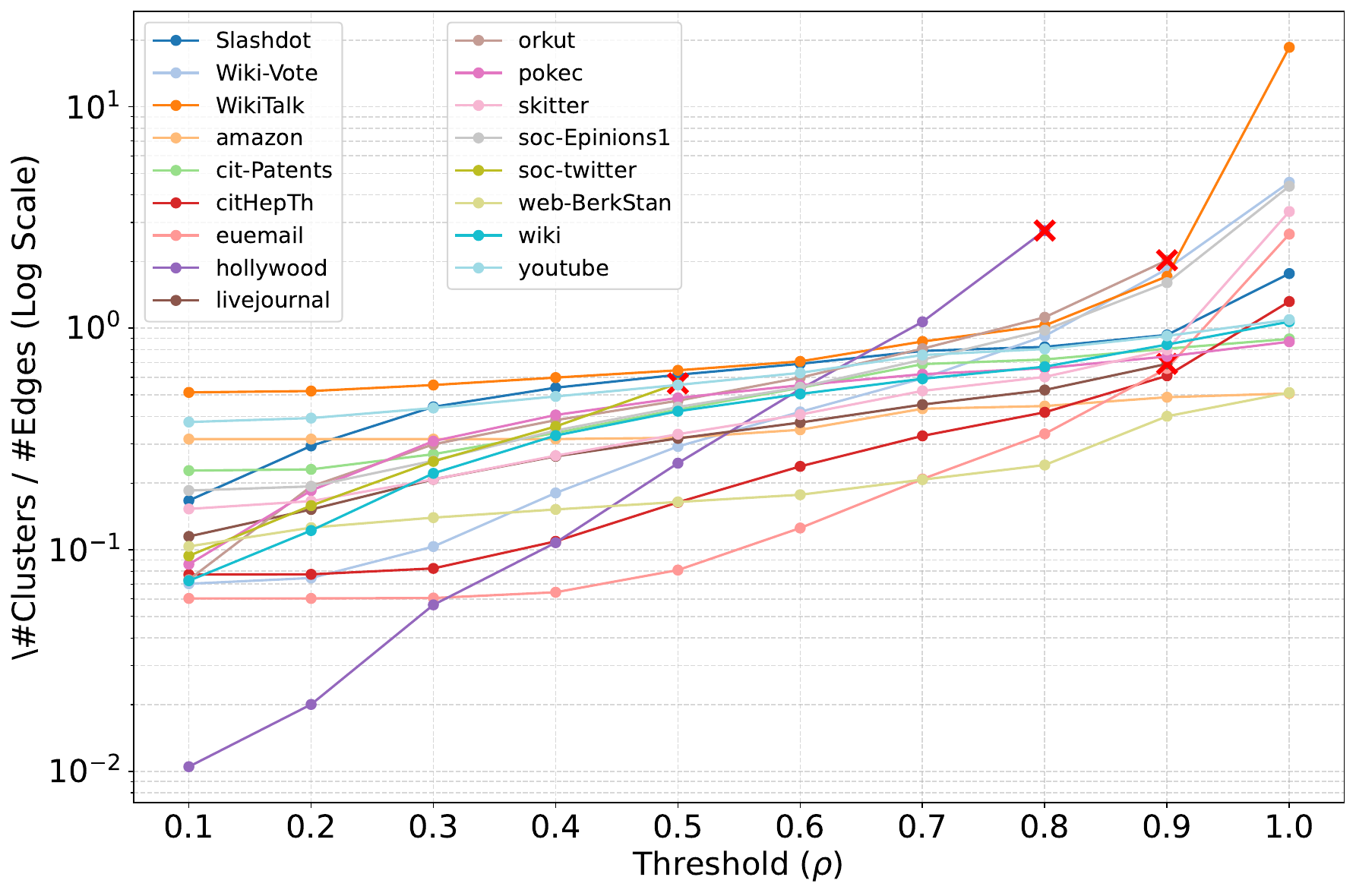}
	\caption{The variation in aggregator size as the density threshold increases; we normalize the cluster count by the number of edges.}
	\label{fig:cluster_count_normalized}
\end{figure}

\begin{figure}
	\centering
	\includegraphics[width=.9\linewidth]{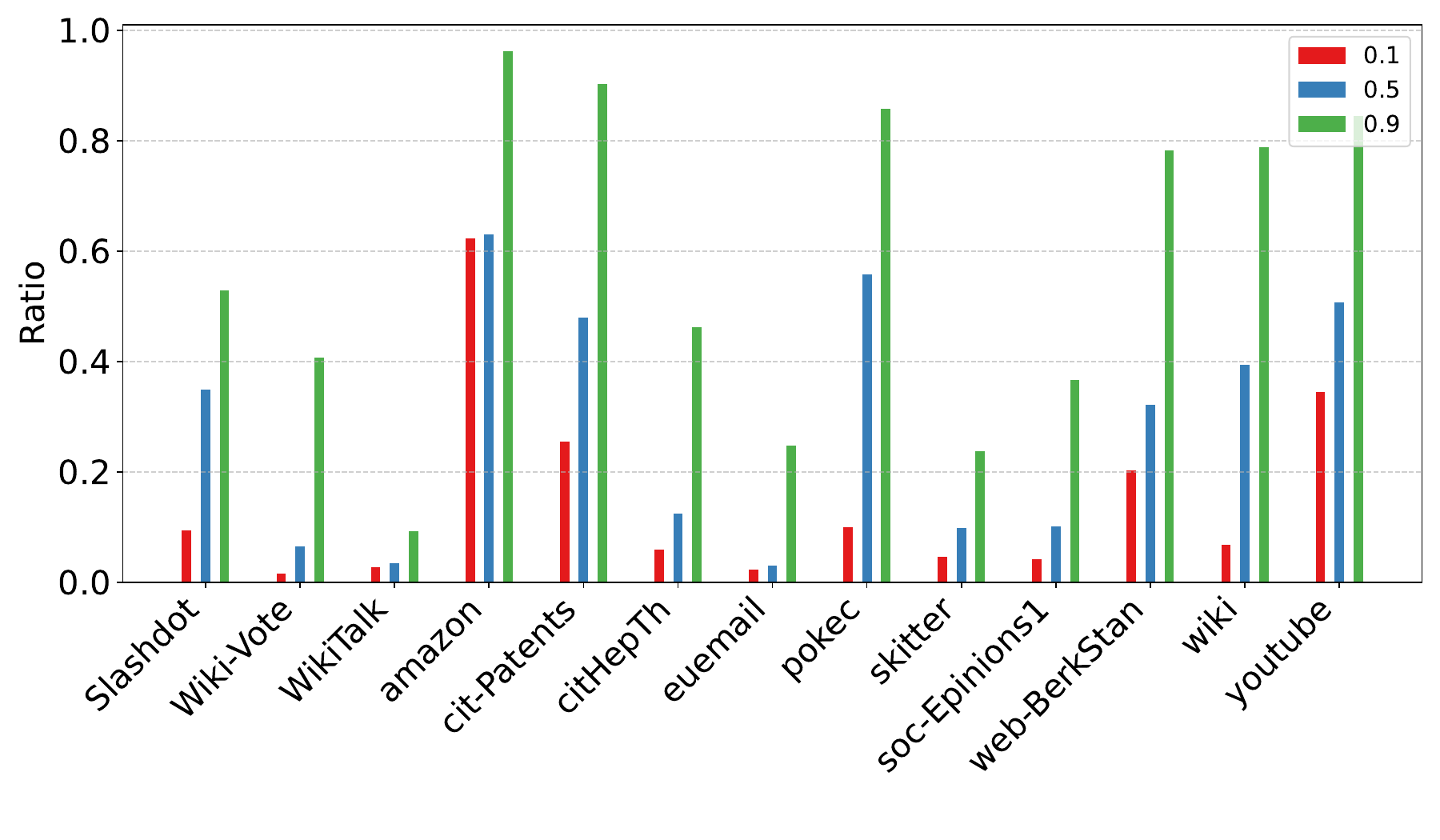}
	\caption{Aggregator size as a fraction of the maximal clique count, at three density threshold. For $\rho=1.0$, the implied bar has height 1.0 for all datasets. Datasets for which \textsc{QC} did not terminate are excluded from this plot.}
	\label{fig:bar-density}
\end{figure}
\subsection{Comparison with listing maximal cliques}
We perform a comprehensive comparison of our algorithm with maximal clique enumeration.
We use \cci to refer to the implementation of our algorithm described in Section~\ref{sec:implementation} or \cci($\dens$) to refer to the algorithm run with a density threshold of $\dens$.
We use Eppstein et al.'s implementation of maximal clique enumeration via the \textsc{QuickCliques} package ~\cite{eppstein2013listing}. We refer to this implementation as \textsc{QC} for the rest of the text. For \textsc{QC} we use the \texttt{hybrid} algorithm, which was typically the fastest.

To the best of our knowledge, this is the most well-engineered and popular (serial) maximal clique enumeration package which is publicly available. We note that with the threshold $\rho$ set to 1.0, \cci also enumerates all maximal cliques, even though it is not engineered to shine at this threshold.

\begin{table*}[htb!]
	\centering
	\footnotesize
	\begin{tabular}{|l|c|c|c|c|c|c|c|c|c|c||c|c|c|}
		\hline
		\textbf{Networks} & $\dens=$\textbf{0.1} & $\dens=$\textbf{0.2} & $\dens=$\textbf{0.3} & $\dens=$\textbf{0.4} & $\dens=$\textbf{0.5} & $\dens=$\textbf{0.6} & $\dens=$\textbf{0.7} & $\dens=$\textbf{0.8} & $\dens=$\textbf{0.9} & $\dens=$\textbf{1.0} & \textbf{QC} & \textbf{S/U(0.1)} & \textbf{S/U(0.9)}\\
		\hline
		
		euemail & 16ms & 16ms & 16ms & 16ms & 17ms & 19ms & 23ms & 31ms & 51ms & 200ms & 74ms & 4.7$\times$ & 1.4$\times$\\
		Wiki-Vote & 100 ms & 100 ms & 100 ms & 100 ms & 200 ms & 200 ms & 300 ms & 400 ms & 700 ms & 2 s & 860 ms & 8.6$\times$& 1.2$\times$\\
		citHepTh & 300 ms & 300 ms & 300 ms & 400 ms & 400 ms & 500 ms & 600 ms & 700 ms & 900 ms & 2 s & 1 s & 3.5$\times$& 1.2$\times$\\
		soc-Epinions1 & 300 ms & 400 ms & 500 ms & 600 ms & 700 ms & 800 ms & 1 s & 1 s & 2 s & 8 s & 4 s & 12.3$\times$& 1.7$\times$\\
		Slashdot & 400 ms & 500 ms & 500 ms & 600 ms & 600 ms & 600 ms & 700 ms & 700 ms & 800 ms & 5 s & 2 s & 4.2$\times$& 2.1$\times$\\
		web-BerkStan & 6 s & 6 s & 6 s & 6 s & 6 s & 6 s & 7 s & 7 s & 8 s & 14 s & 13 s & 2.4$\times$& 1.6$\times$\\
		hollywood & 3 min & 4 min & 5 min & 6 min & 8 min & 12 min & 22 min & 62 min & - & - & - & 88.6$\times$+ & - \\
		youtube & 3 s & 3 s & 3 s & 4 s & 4 s & 4 s & 4 s & 5 s & 5 s & 7 s & 6 s & 1.9$\times$& 1.0$\times$\\
		pokec & 33 s & 34 s & 38 s & 39 s & 41 s & 43 s & 45 s & 46 s & 51 s & 58 s & 1 min & 2.0$\times$& 1.3$\times$\\
		skitter & 11 s & 11 s & 12 s & 13 s & 14 s & 15 s & 17 s & 19 s & 26 s & 13 min & 2 min & 11.9$\times$& 5.0$\times$\\
		wiki & 34 s & 37 s & 41 s & 44 s & 47 s & 49 s & 53 s & 57 s & 1 min & 1 min & 2 min & 3.1$\times$& 1.6$\times$\\
		WikiTalk & 4 s & 5 s & 5 s & 6 s & 7 s & 8 s & 10 s & 16 s & 40 s & 13 min & 4 min & 51.4$\times$& 5.3$\times$\\
		orkut & 4 min & 5 min & 6 min & 6 min & 7 min & 8 min & 10 min & 12 min & 21 min & - & - & 67.8$\times$+ & 14.0$\times$+ \\
		cit-Patents & 19 s & 19 s & 20 s & 21 s & 22 s & 23 s & 24 s & 24 s & 26 s & 28 s & 27 s & 1.4$\times$& 1.0$\times$\\
		livejournal & 51s & 54 s & 57 s & 1 min & 1 min & 1 min & 1 min & 1 min & 2 min & - & - & 350$\times$+ & 173.6$\times$+ \\
		soc-twitter & 22 min & 31 min & 43 min & 70 min & 168 min & - & - & - & - & - & - & 13.4$\times$+ & - \\
		
		\hline
	\end{tabular}
	\caption{The running time of \cci for different density thresholds, compared to the running time of \textsc{QC}.
		The final two columns show the speedup of \cci over $\textsc{QC}$ for $\dens \in \{0.1, 0.9\}$. The timing numbers are rounded to their closest millisecond, second or minute, but the speedups are calculated over the exact values. Entries marked with `-' refer to methods that do not terminate in 5 hours, and OOM refers to the case where the algorithm crashes due to memory constraints. 
		\label{tab:time}}
\end{table*}
\subsubsection{Aggregator sizes}
We first compare the size of the clique aggregator computed by our algorithm (that is, the number of clusters) while gradually decreasing the density threshold from $1.0$ to $0.1$.
The theoretical upper bound on the number of clusters computed by \cci($\dens$) for $\dens \leq 0.9$ is much smaller than the number of maximal cliques (quasi-polynomial vs exponential).
The purpose of this experiment is to verify whether a similar phenomenon is also visible in real-world graphs.

In Figure~\ref{fig:cluster_count_normalized} we
report aggregator sizes divided by the number of edges---this normalization enables comparison across the broad range of graph sizes in our corpus.

We generally observe a sharp decline in the number of clusters as the threshold decreases. In several cases, there is an order of magnitude difference, even when the density threshold $\dens$ decreases from 1.0 to 0.9. Typically, at $\dens=$0.1, the size of the aggregator is significantly smaller than the number of maximal cliques. The geometric mean of the reduction in aggregator size over the number of maximal cliques (excluding the datasets where maximal clique enumeration did not terminate) is almost 20$\times$ for $\dens=0.1$, and over $5 \times$ at 0.5.
While for $\dens = 0.9$ this reduction is not as pronounced, it is still over $2 \times$ on average.
Even in a logarithmic scale in Figure~\ref{fig:cluster_count_normalized}, the sharp decrease for densities $< 1.0$ is noticeable. 

To better visualize how the aggregator size compares to the number of maximal cliques, we also present the ratio between aggregator size and maximal clique count in Figure~\ref{fig:bar-density}. The drop is more obvious across datasets on this linear scale, highlighting how much of an improvement the aggregator model is over traditional maximal clique enumeration.
We again exclude datasets where maximum clique enumeration timed out.

Our results align with earlier findings that in multiple real-world graphs the number of maximal cliques is far smaller than the exponential upper bound ~\cite{eppstein2010listing}.  Nevertheless, we observe that the graphs we study admit a clique aggregator using many fewer clusters than the number of maximal cliques.
We observe that a similar phenomenon extends to clique aggregators: the number of clusters in the aggregator is also much smaller than the theoretical upper bound.

\subsubsection{Running time}
We now compare the running time of \cci with \textsc{QC} to understand the scalability of our algorithm and implementation.
The results are presented in Table~\ref{tab:time}.
When measuring the running time of the algorithms, we exclude the I/O time, that is the time needed to read the input graph and to write the output to disk.

We consistently see that the time taken to produce aggregators at low density thresholds is often orders of magnitudes lower than the time taken by \textsc{QC} to compute all maximal cliques. At $\dens=$0.1, the geometric mean of the speedups is about 9.6$\times$, and even at $\dens=$0.9, \cci is around 2.8 times faster.
We note that when computing these speedups we assumed a running time of 5 hours for each \textsc{QC} run which timed out, which means that the actual speedups are potentially much larger.
In an extreme case on the livejournal dataset, \cci$(0.9)$ finished in about 2 minutes, while \textsc{QC} did not finish in 5 hours, which corresponds to a speedup of at least $170\times$.

On the other hand, we observe that \cci$(1.0)$ is usually slower than \textsc{QC} with a mean slowdown of $1.37 \times$.
The reason for this slowdown is likely the fact that \textsc{QC}, as opposed to \cci is optimized for this exact setting.
However, this observation suggests that the speedups we observe for $\dens < 1$ should not be attributed to implementation-level optimizations, but rather to algorithmic differences between the algorithms.

Overall, the results indicate that a clique aggregator can often be obtained much more efficiently than listing maximal cliques.
The most notable improvement is for $\rho=0.9$.
We can produce aggregators of extremely high density that are computed in minutes even when maximal clique enumeration does not terminate in hours.
In the supplementary material, we also provide data on the number of recursive calls performed by \cci. 

\subsubsection{Cluster Membership}\label{sec:redundancy}
In this section we study the degree to which the clusters overlap by considering two different overlap measures.
The first one is \emph{membership distribution}: the fraction of vertices belonging to at most $x$ clusters. While in a disjoint partition \emph{all} vertices belong to exactly one cluster, a vertex can be part of several clusters when overlaps are allowed.
Figure~\ref{fig:membership} compares the membership distribution of maximal cliques (equivalent to \cci$(1.0)$) with \cci$(\rho)$ for $\rho \in \{0.1, 0.5, 0.9\}$ for two datasets.
We provide the plots for other datasets in the supplementary material.

For the two datasets presented in Figure~\ref{fig:membership} we observe that only about $20\%$ of vertices are contained in at most one maximal clique.
In contrast, the number of vertices contained in at most one cluster in the result of \cci$(0.1)$ is significantly higher: almost $60\%$ for skitter and over $40\%$ for web-BerkStan. Additional plots for the rest of the datasets are provided in Figure~\ref{fig:all} in the supplementary material.

We also observe that difference between the membership distributions of the results of \cci$(0.9)$ and \cci$(1.0)$ is quite small.
What does differ significantly, however, is the \emph{maximum} number of clusters any vertex belongs to.
To study this phenomenon further, for any clustering $\mathcal C$ of the vertices of a graph $G=(V,E)$, we define the \emph{maximum membership} of $\mathcal{C}$ as
\begin{align}
	\mathcal{M}(\mathcal C) = \max_{v\in V}\sum_{C\in\mathcal C} \mathbb 1_{v\in C}.
\end{align}
Note that the maximum membership can be defined for any clustering and is always 1 for any non-overlapping clustering. To compare our results to maximal clique enumeration we define \emph{the relative maximum membership} at threshold $\rho$ as
\begin{align}
	\mathcal{R}_{MC}(\rho) = \dfrac{\mathcal{M} (\textrm{\cci}(\rho))}{\mathcal{M}(\textrm{MC})},
\end{align}
where we slightly abuse notation to refer to the clustering induced by the respective methods on the RHS. In Table~\ref{tab:max-membership}, we present relative maximum membership for three different density thresholds and the maximum membership of maximal cliques.
Again, we do not include the datasets, for which we did not get the maximal clique count.

We note that in over half of the datasets we worked with, the maximum membership at threshold of 0.1 is two orders of magnitude lower than the maximal membership of maximal cliques. Even at threshold 0.9 this drop is still clearly visible, as the mean reduction is over $5.2 \times$.
Overall, the results show that a clique aggregator indeed reduces the cluster overlap compared to maximal clique enumeration.
For $\dens = 0.1$ this is visible across the membership distribution.
For $0.9$ the difference between distributions is less evident.
However, the maximum number of clusters a vertex belongs to smaller up to $100\times$.

\begin{table}[h!]
	\centering
	\footnotesize
	\begin{tabular}{|l|c|c|c|c|}
		\hline
		\textbf{Dataset} & $\mathcal{R}_{MC}$(0.1) & $\mathcal{R}_{MC}$(0.5) & $\mathcal{R}_{MC}$(0.9) & $\mathcal{M}$ (MC) \\
		\hline
		euemail  & 0.3\% & 1.8\% & 14.5\% & 16.0k \\
		Wiki-Vote & 0.3\% & 2.5\% & 29.3\% & 172k \\
		cit-HepTh & 6.3\% & 10.3\% & 43.2\% & 37.7k \\
		soc-Epinions1 & 0.5\% & 3.0\% & 23.0\% & 518k\\
		Slashdot & 0.7\% & 1.6\% & 2.6\% & 322k \\
		web-BerkStan & 13.9\% & 54.2\% & 72.6\% & 801k \\
		youtube & 12.0\% & 24.5\% & 63.6\% & 237k \\
		pokec & 25.3\% & 34.9\% & 40.4\% & 57.8k \\
		skitter & 0.2\% & 0.5\% & 0.9\% & 15.6M \\
		wiki & 13.3\% & 56.1\% & 87.7\% & 1.79M\\
		WikiTalk & 0.3\% & 0.3\% & 4.7\% & 32.3M \\
		cit-Patents & 0.2\% & 0.7\% & 28.7\% & 345k \\
		
		\hline
	\end{tabular}
	
	\caption{\label{tab:max-membership} The maximum membership of the maximal clique induced clustering, and the relative maximum membership of our aggregators at different thresholds, in percentages.} 
\end{table}
\begin{figure}
	\centering
	\includegraphics[width=.8\linewidth]{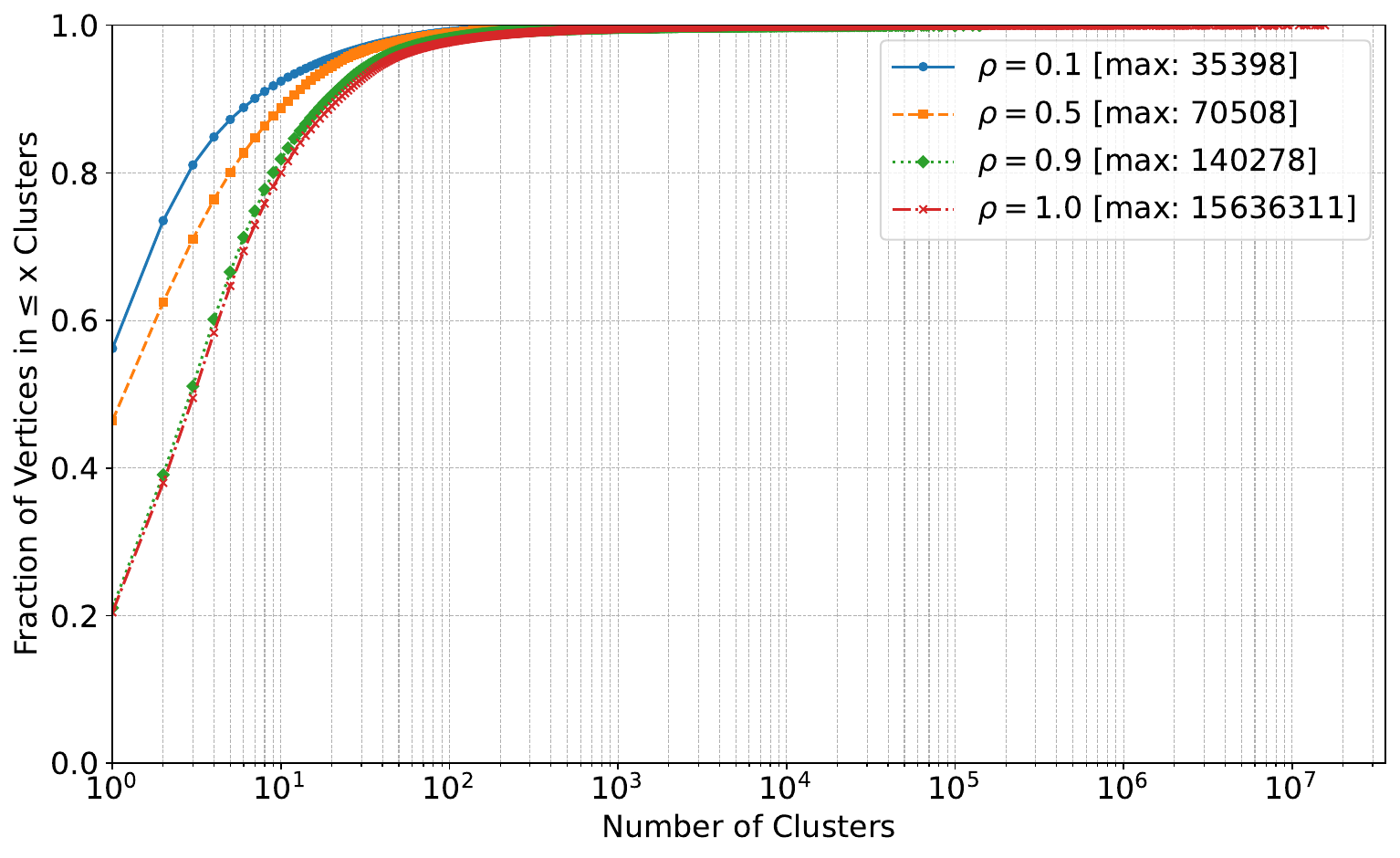}
	\includegraphics[width=.8\linewidth]{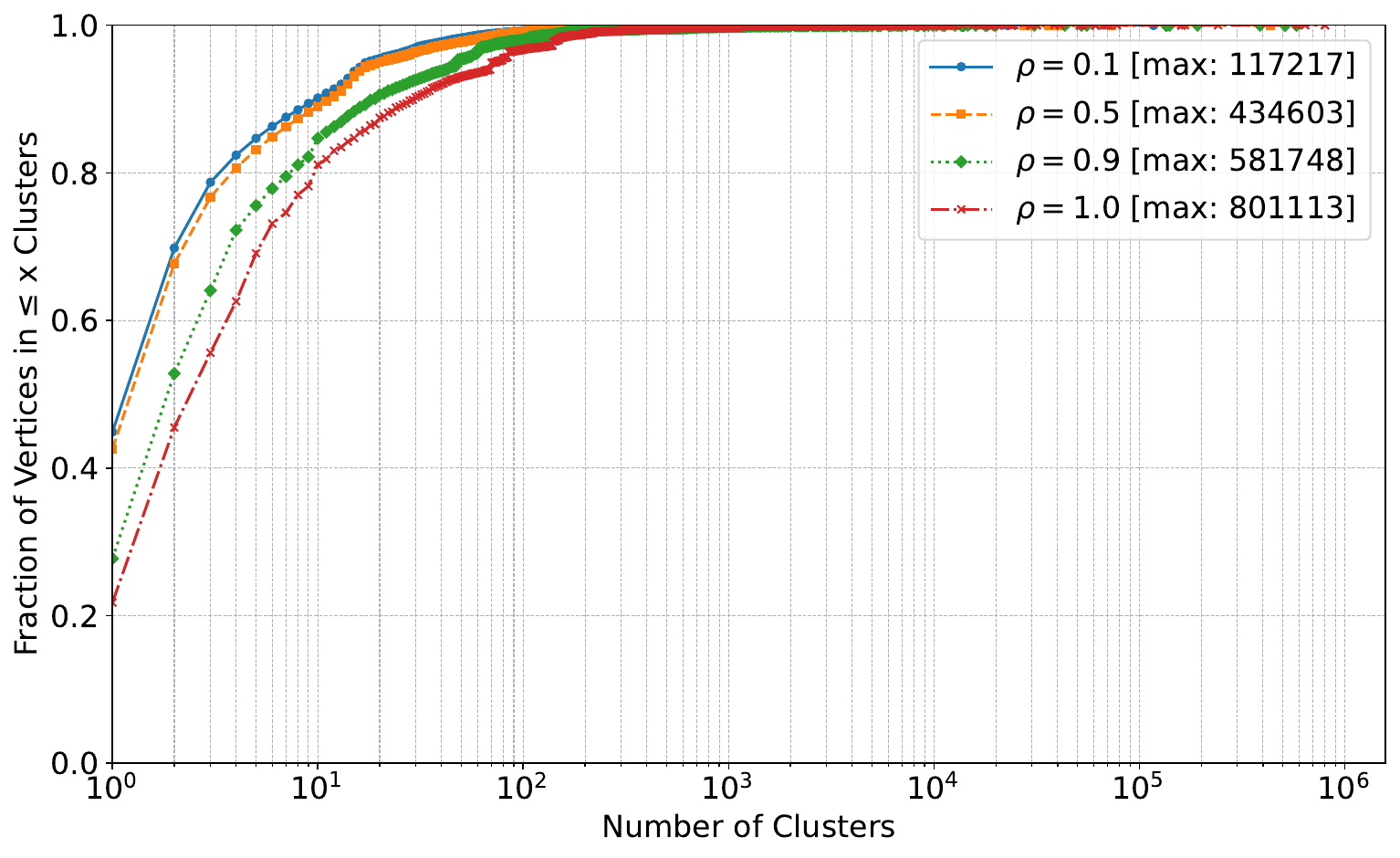}
	\caption{The number of clusters each vertex belongs to (defined as membership distributions in Section~\ref{sec:redundancy}) in aggregators of different thresholds for two datasets: skitter (top) and web-BerkStan(bottom).}
	\label{fig:membership}
\end{figure}
\begin{table*}[ht]
	\centering
	\footnotesize
	
	\setlength{\tabcolsep}{3pt} 
	\begin{tabular}{|c|c|cc|cc|cc|ccc|ccc|ccc|}
		\hline
		\multirow{2}{*}{\textbf{Dataset}} & \multirow{2}{*}{\textbf{\#MC}} 
		& \multicolumn{2}{c|}{\textbf{ \cci(0.1)}} 
		& \multicolumn{2}{c|}{\textbf{ \cci(0.5)}}
		& \multicolumn{2}{c|}{\textbf{ \cci(0.9)}}
		& \multicolumn{3}{c|}{\textbf{\textsc{Louvain}}} 
		& \multicolumn{3}{c|}{\textbf{\textsc{LambdaCC}}} 
		& \multicolumn{3}{c|}{\textbf{\textsc{AGM}}} \\
		\cline{3-17}
		&  
		& \#Cl. & $\hat{\rho}$ 
		& \#Cl. & $\hat{\rho}$ 
		& \#Cl. & $\hat{\rho}$ 
		& \#Cl. & \%Cov'd & $\hat{\rho}$ 
		& \#Cl. & \%Cov'd & $\hat{\rho}$
		& \#Cl. & \%Cov'd & $\hat{\rho}$\\
		\hline
		euemail & 43k & 969 & 0.69 & 1301 & 0.76 & 11k & 0.95 & 20 & 7\% & 0.40 & 246 & 46\% & 0.64 & 5 & 6\% & 0.20 \\
		Wiki-Vote & 459k & 7k & 0.64 & 29k & 0.84 & 187k & 0.95 & 60 & 15\% & 0.71 & 3k & 27\% & 0.48 & 2 & 82\% & 0.12 \\
		citHepTh & 465k & 27k & 0.59 & 58k & 0.79 & 215k & 0.96 & 2k & 53\% & 0.68 & 4k & 26\% & 0.60 & 418 & 26\% & 0.12 \\
		soc-Epinions1 & 1.8M & 75k & 0.80 &179k & 0.85 & 651k & 0.95 & 11k & 51\% & 0.73 & 26k & 58\% & 0.58 & 827 & 4\% & 0.09\\
		Slashdot & 890k & 84k & 0.65 & 311k & 0.91  & 470k & 0.96 & 8k & 61\% & 0.71 & 21k & 53\% & 0.50 & 609 & 8\% & 0.11 \\
		web-BerkStan & 3.4M & 689k & 0.84 & 1.1M & 0.86 & 2.7M & 0.98 & 41k & 27\% & 0.71 & 178k & 12\% & 0.54 & 5k & 26\% & 0.21 \\
		hollywood & DNF& 591k & 0.68 & 14M & 0.73  & OOM & OOM & 41k & - & 0.97 & 130k & - & 0.82 & 16k & - & 0.53 \\
		youtube & 3.3M & 1.1M & 0.84 & 1.7M & 0.90  & 2.8M & 0.96 & 172k & 61\% & 0.74 & 479k & 28\% & 0.59 & 6k & 44\% & 0.04 \\
		pokec & 19.4M & 1.9M & 0.49 & 11M & 0.93  & 17M & 0.97 & 22k & 68\% & 0.78 & 316k & 14\% & 0.48 & 1701 & 54\% & 0.13 \\
		skitter & 37.3M & 1.7M & 0.65 & 3.7M & 0.85  & 8.9M & 0.96 & 90k & 44\% & 0.67 & 520k & 34\% & 0.63 & 8k & 25\% & 0.08 \\
		wiki & 27.2M & 1.8M & 0.45 & 11M & 0.89  & 21M & 0.97 & 5k & 53\% & 0.68 & 417k & 9\% & 0.51 & 2k & 35\% & 0.09 \\
		WikiTalk & 86.3M & 2.4M & 0.90 & 3M & 0.91  & 8M & 0.94  & 54k & 3\% & 0.61 & 2.2M & 88\% & 0.59 & 8k & 3\% & 0.02 \\
		orkut & DNF & 8.6M & 0.74 & 55M & 0.87 & 238M & 0.95 & 4k & -& 0.54 & 380k &- & 0.51 & 4k &- & 0.16 \\
		cit-Patents & 14.8M & 3.8M & 0.57 & 7.1M& 0.81  & 13M & 0.96 & 432k & 57\% & 0.58 & 680k & 39\% & 0.46 & 25k & 90\% & 0.04 \\
		livejournal & DNF & 4M & 0.67 & 11M & 0.87 & 24M & 0.96 & 243k & -& 0.68 & 1M &- & 0.57 & 38k &- & 0.12 \\
		soc-twitter & DNF & 24.9M & 0.74 & 149M & 0.78  & DNF & DNF & 2.3M &- & 0.76 & 7.1M &- & 0.57 & DNF & DNF & DNF \\
		
		\hline
	\end{tabular}
	
	\caption{Comparison of \cci with \textsc{Louvain}, \textsc{LambdaCC}, and \textsc{AGM}: cluster count (\#Cl.), fraction of covered cliques (\%Cov'd) and average density in output sets of size at least 3.
		\#MC is the number of maximal cliques.
		Note that for \cci we do not report the fraction of covered cliques, since it's $100\%$ in each case.
		DNF means that the computation did not finish within 5 hours.
		For the cases where we did not successfully compute the number of maximal cliques, we could not compute the fraction of covered cliques (marked with an -). For hollywood, \cci(0.9) runs into an out of memory error on our machine, marked with an OOM.} 
\label{tab:cluster_comparison}
\end{table*}

\subsubsection{Pseudocliques}
A \emph{pseudoclique} is a cluster that is almost a clique, i.e., it may be missing a few edges.
The exact definition of a pseudoclique varies depending on the application. We attempted to compare with the very recent \textsc{FPCE} algorithm by Rahman et al.~\cite{FPCE} which produces maximal subgraphs of density at least $\theta$, a user-specified threshold. On the majority of datasets tested, \textsc{FPCE} does not terminate in five hours, completing only on our smallest graphs: euemail. Even in this tiny graph, FPCE produces \emph{millions} of maximal pseudocliques at density $\theta=$ 0.9; in comparison, the number of maximal cliques is about forty three thousand. While it remains algorithmically interesting, this demonstrates that maximal pseudoclique enumeration on  real-world social networks is a prohibitively expensive task. We note that with the density threshold set to 1.0, \textsc{QC}, ~\cci, and \textsc{FPCE} produce the same output. However, as we decrease the threshold, the task solved by \textsc{FPCE} becomes \emph{harder}, while computing a clique aggregator becomes significantly easier. This points to the advantages of our model over vanilla enumeration techniques.

\subsection{Comparisons with clustering techniques}
In this section we study the difference between the commonly used clustering approaches and the task of computing an aggregator.
Since clustering algorithms typically attempt to build dense clusters, we would like to understand how they perform at the task of covering maximal cliques.
We consider three scalable clustering algorithms:
\begin{itemize}
\item \textsc{Louvain~\cite{Blondel_2008}: } An extremely popular clustering algorithm used for community detection, which uses a local-search approach maximize the celebrated modularity objective~\cite{NewmanModularity}.
\item \textsc{LambdaCC ~\cite{LambdaCC}: } An algorithm optimizing a generalized correlation clustering objective called LambdaCC. 
Given a graph $G$ and a parameter $\lambda > 0$, the objective of a non-overlapping clustering $\{C_1, \ldots, C_k\}$ is:\footnote{Note that for the purpose of this paper we can state a simplified formula, since we work solely with unweighted graphs.} 
$$\sum_{i=1}^k |E(G[C_i])| - \lambda \binom{C_i}{2}.$$
The algorithm uses a similar local search to \textsc{Louvain}.
The algorithm provides a (soft) guarantee on edge density of each cluster. Namely, for a given value of $\lambda$, the objective of a cluster is positive iff the density of the cluster is $> \lambda$.
For the experiments, we use a scalable parallel implementation~\cite{LambdaCC-parallel}.
\item \textsc{AGM~\cite{GAM}: } 
An algorithm which uses personalized PageRank to find initial clusters, then combined to minimize edges between distinct clusters. This approach originates from distributed computation, allowing graph problems to be solved on separate machines with reduced communication.
\end{itemize}

We use the default parameter settings for the \textsc{Louvain} algorithm suggested in their high-performance implementation. For \textsc{LambdaCC}, we set $\lambda = 0.1$, in order to obtain clusters of density at least $0.1$. For \textsc{AGM}, we set the parameter to be 1000; to the best of our knowledge, there is no real analogue for density given the drastically different use case. 

The results are presented  in~\Cref{tab:cluster_comparison}.
We observe that the average density of the clusters computed by \cci is much higher than the input density threshold. For example, for $\dens = 0.1$ the average cluster density is at least $0.45$ on each dataset.
We also observe that \cci uses orders of magnitude more clusters than the clustering baselines.
On the other hand, the clustering methods cover on average only $~30\%$ (\text{Louvain} or \textsc{LambdaCC}) or $~20\%$ (\textsc{AGM}) of all the cliques.
The lower coverage of AGM is despite the fact that it uses very sparse, potentially overlapping clusters.

The large gap in the number of clusters between our method and the clustering baselines raises an interesting question of whether the aggregator sizes could be further reduced.

\subsection{Ablation Studies} \label{sec:ablation}

Finally, we evaluate the influence of different components of our algorithm on the performance. Namely, we focus on the following two factors:
\begin{enumerate}
\item \emph{Bron-Kerbosch Pruning: } 
Our implementation supports a pruning strategy, derived from the Bron-Kerbosch algorithm~\cite{bron1973algorithm} to ensure that the computed aggregator is inclusion-maximal. The corresponding code is marked in teal in Algorithms ~\ref{alg:cc} and ~\ref{alg:cci}.
While pruning is standard in most clique enumeration techniques, it can be turned off without affecting correctness.
By default, it is turned on.
\item \emph{Bitmap: } As described in \Cref{sec:implementation}, our implementation by default uses a bit-matrix to represent graphs in the recursive calls of the algorithm. We compare this setting a more standard approach of using a compressed sparse row representation~\cite{csr, gbbs}. We keep this on by default. 
\end{enumerate}

To examine the effects of these engineering optimizations, we perform an ablation study and compare the time taken with and without each optimization. The results, averaged over all graphs (when completed) and density thresholds $\dens \in \{0.1, 0.5, 0.9\}$ using the geometric mean are presented in Table~\ref{tab:ablation}.

Our experiments show that the bitmap representation makes the biggest difference in our performance. The runs with the bitmap off and pruning turned on are the slowest: on average about 1.5 times slower than the default settings we choose. On the other hand, turning both the bitmap and pruning off takes only a little bit more than the default.

\begin{table}[h!]
\centering
\small
\begin{tabular}{|l|c|c|c|}
	\hline
	\textbf{Configuration} & \textbf{$0.1$} & \textbf{$0.5$} & \textbf{$0.9$} \\
	\hline
	Bitmap Off, Pruning On       & 1.35$\times$ (0.02) & 1.45$\times$ (0.01) & 1.67$\times$ (0.02) \\
	Bitmap Off, Pruning Off      & 0.98$\times$ (0.03) & 1.12$\times$ (0.01) & 1.36$\times$ (0.02) \\
	Bitmap On, Pruning Off       & 0.65$\times$ (0.01) & 0.71$\times$ (0.00) & 0.70$\times$ (0.01) \\
	\emph{Bitmap On, Pruning On} & 1.00$\times$ & 1.00$\times$ & 1.00$\times$ \\
	
	\hline
\end{tabular}
\caption{The geometric mean of the relative slowdown at each density threshold. In parentheses, we include the variance over these slowdown factors.} \label{tab:ablation}
\vspace{-0.75cm}
\end{table}

Quite surprisingly, Bron-Kerbosch pruning makes our algorithm slower. However, pruning does effectively reduce the sizes of our aggregators: this is particularly noticeable at higher density thresholds. The geometric mean of the increase in aggregator sizes without pruning (over settings that allow pruning) ranges from less than 2\% of the size of the aggregator for $\dens= 0.1$ to over 12\% for $\dens=0.9$. One notable exception is the hollywood dataset, where the number of clusters in the aggregator is over 80\% higher compared to our default when there is no pruning at $\dens = 0.1$ (and this gradually drops to around 10\%). We refer the reader to Figure~\ref{fig:pruning} for a comprehensive presentation. Barring hollywood, the gentle climb in the cluster count without pruning is visible in most other datasets.

\begin{figure}
\centering
\includegraphics[width=0.95\linewidth]{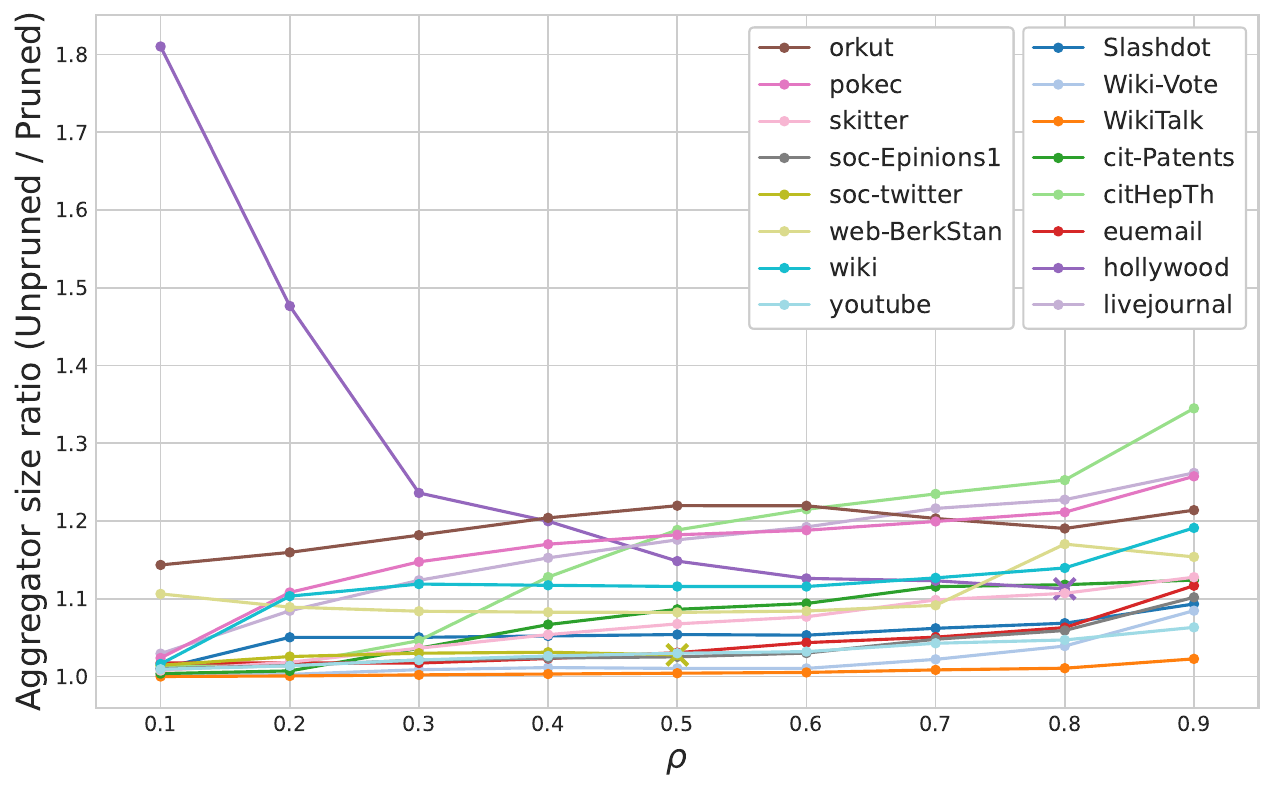}
\caption{The increase in cluster count in our aggregators at different thresholds when pruning is turned off.}
\label{fig:pruning}
\end{figure}

\section{Discussion and future work}

We introduced \emph{$\dens$-dense \objs{}} as a useful formulation for summarizing the structure of maximal cliques and showed that, contrary to an enumeration of maximal cliques, clique \objs{} of sub-exponential size always exist. We gave an algorithm that finds such \objs{} and demonstrated that it is significantly faster than existing methods for maximal clique enumeration while reducing redundancy.

There are several interesting lines for further research. It would be worth investigating better bounds on the size of 
\emph{$\dens$-dense \objs{}} for special graph families, in particular for families that model social networks such as c-closed graphs \cite{fox2020finding}.
On the practical side, a natural question is whether the aggregator size could be further reduced, especially given that the average density of the clusters computed by our algorithm is much higher than the prescribed density threshold.
Another challenge is to design and analyze a parallel version of our algorithm and evaluate its performance on real datasets. 
One can also study other forms of clique aggregators, for example, by relaxing the requirement of having a cluster that contains each clique, possibly just requiring a large intersection with each clique or covering a substantial fraction of the maximal cliques. 
\section{Acknowledgements}
SB was supported by NSF DMS-2023495 and CCF-1839317 while a PhD student at UC Santa Cruz for the majority of the duration of this project. SJ was supported by the NSF under grant no. 2127309 to the Computing Research Association for the CIFellows 2021 Project. HK was partially supported by ISF grant no.1156/23 and the Blavatnik Research Foundation. BS acknowledges a 2024 Google gift to the University of Utah supporting her work on the theoretical foundations of finding dense subgraphs. We also thank Sayan Bhattacharya for helpful discussions. 




\balance
\bibliography{main}
\appendix
\section{Appendix}\label{sec:appendix}

\subsection{Proof of Theorem~\ref{theorem:cc-main}}

\ccmain*
\begin{proof}
We first show that for the set $S$ returned by any recursive call \ccr$(G,H,\dens,C,X)$ made by \cc{}, the following three properties hold:
\begin{enumerate}[label=(\roman*)]
    \item For every clique $K$ in $G[H \cup C]$,
either $\exists x \in X$ such that $K \cup \{x\}$ is also a clique in $G$, or $\exists S_i \in S$ such that $K \subseteq S_i$. \label{prop:ccr-containment}
    \item For every $S_i \in S$, $G[S_i]$ is $\dens$-dense. \label{prop:ccr-density}
    \item $S$ is inclusion-maximal. \label{prop:ccr-inclusion-maximal}
\end{enumerate}
Note that these properties imply that the $S$ returned by \ccr$(G,V(G),\dens,\emptyset,\emptyset)$ and hence by \cc$(G,\dens)$ is an inclusion-maximal $\dens$-dense clique \obj{} for $G$.

Property~\ref{prop:ccr-density} follows directly from the conditions leading to adding a cluster in 
lines~\ref{step:ccr-dens-check1} and~\ref{step:ccr-dens-check2}. Towards Property~\ref{prop:ccr-containment}, for a recursive call \ccr$(G,H,\dens,C,X)$, we say that a clique $K$ is a \textit{violating clique} if $K \subseteq C \cup H$, there exists no $x \in X$ that extends $K$ and no $S_i \in S=$ \ccr$(G,H,\dens,C,X)$ such that $K \subseteq S_i$ (i.e.\ $K$ is not acquired by $S=$ \ccr$(G,H,\dens,C,X)$). Then showing Property~\ref{prop:ccr-containment} is equivalent to showing that there is no violating clique for \ccr$(G,H,\dens,C,X)$.

We prove this by induction on the number of vertices in $H$. Consider the base case: $|H|=0$. 
Let $K \subseteq C\cup H = C$ be a clique.
In Line~\ref{step:ccr-X-check1}, the algorithm checks if there exists an $x \in X$ such that $H \subseteq N_G(x)$. If $X$ is nonempty, this will be true for all $x \in X$. If an $x$ exists, then since $C$ extends $H \cup X$ (Lemma~\ref{lemma:ccr-invariants}) $C \cup \{x\} \cup H = C \cup \{x\}$ is a clique.
It follows that $x$ extends $K$, so $K$ is not violating. If $X$ is empty, then the algorithm outputs the cluster $\{C \cup H\} = C$ since it has density $1$. Since $K$ is contained in this cluster, $K$ is not a violating clique for \ccr$(G,H,\dens,C,X)$.

We now assume all recursive calls \ccr$(G,H,\dens,C,X)$ with $|H| \leq h$
have no violating clique, and consider a call \ccr$(G,H,\dens,C,X)$ with $|H| = h+1$.
Let $K$ be any clique in $G[H \cup C]$, $K_{H} = K \cap H$ and $K_{C} = K \cap C$.

If there exists an $x$ in $X$ that extends $H$, then $x$ also extends $K$ and $K$ is not a violating clique (the algorithm returns an empty set in Line~\ref{step:ccr-X-check1}). If such an $x$ does not exist, then the algorithm checks if $C \cup H$ is $\dens$-dense. If it is, the algorithm returns the cluster $C \cup H$ on Line~\ref{step:ccr-dens-check1}. Since $K \in C \cup H$ again $K$ is not violating.

Thus, we may assume that no $x \in X$ extends $H$ and that $C \cup H$ is not $\dens$-dense. The algorithm loops over the vertices of $\hat{H}$ (=$H$) in Line~\ref{step:ccr-for-loop} according to the degeneracy ordering of $H$. Let $P$ be the subset of $H$ (prefix of the degeneracy ordering) consisting of vertices that trigger recursive calls. When processing a vertex $v\in P$, the algorithm calls itself with \ccr$(G,H_v,\dens,C_v,X_v)$ where $X_v = X \cap N_G(v)$.

Let $u$ be the first vertex in $K_{H}$ according to the degeneracy ordering of $H$. Suppose $u \in P$. Then $K_{H} \setminus \{u\} \subseteq H_u$ and by definition, $K_{C} \subseteq C \subset  C_u$ and $u \in C_u$. Thus, $K \subseteq C_u \cup H_u$. Moreover, if there exists an $x$ in $X$ that extends $K$, then $x$ is also in $X_u$. Since $|V(H_u)| \leq h$ the induction hypothesis applies and \ccr$(G,H_u,\dens,C_u,X_u)$ satisfies Property~\ref{prop:ccr-containment}. This must mean that $K$ is not a violating clique for \ccr$(G,H_u,\dens,C_u,X_u)$ and hence not a violating clique for \ccr$(G,H,\dens,C,X)$.

Now suppose $u \notin P$. Consider the state of $\hat{X}$ and $\hat{H}$ just before the algorithm returned. It must be that $K_{H} \subseteq \hat{H}$ and since $K_{C} \subseteq C$, $K \subseteq C \cup \hat{H}$. If the algorithm returned from Line~\ref{step:ccr-dens-check2} then it added $C \cup \hat{H}$ to $S$ and hence $K$ is not a violating clique for \ccr$(G,H,\dens,C,X)$. Otherwise, the algorithm returned from the \texttt{for} loop at Line~\ref{step:ccr-X-check2}, and there exists $x \in \hat{X}$ such that $x$ extends $\hat{H}$ and consequently, $x$ extends $K$. Since $K$ is not a violating clique in this case as well, Property~\ref{prop:ccr-containment} holds for all recursive calls.

We prove Property~\ref{prop:ccr-inclusion-maximal} also using induction on $|H|$. If $|H|=0$, the call returns either in Lines~\ref{step:ccr-X-check1} or~\ref{step:ccr-dens-check1}, producing at most one cluster. A singleton collection is trivially inclusion-maximal. So assume that for all calls with $|H|\leq h$, the returned set is inclusion-maximal. Consider a call \ccr$(G,H,\dens,C,X)$ with $|H|=h+1$. If the call returns in Lines~\ref{step:ccr-X-check1} or~\ref{step:ccr-dens-check1}, only a single cluster is produced, so inclusion-maximality holds. Otherwise, the algorithm executes the \texttt{for} loop of Line~\ref{step:ccr-for-loop}.

At the start of this loop, $S=\emptyset$. Let $P\subseteq \hat{H}$ be the set consisting of vertices that trigger recursive calls. For any two vertices $u,w\in P$ with $u<w$ (in the degeneracy order):

\begin{itemize}
    \item All clusters in $S_u$ contain $u$, and all clusters in $S_w$ contain $w$.
    \item Since $u$ is moved into $\hat{X}$ before processing $w$, every cluster 
    in $S_w$ omits $u$.
\end{itemize}

Thus, no cluster from $S_u$ can be contained in a cluster from $S_w$.

Next we show the reverse: that no cluster in $S_w$ can be contained in a 
cluster in $S_u$. Suppose, for contradiction, that there exist clusters 
$U\in S_u$ and $W\in S_w$ with $W\subset U$. By construction, $u$ extends every 
cluster in $S_u$, so $u$ also extends $W$. Moreover, when the \texttt{for} loop added $w$ to $C$, $u$ must be in $\hat{X}$ since $u < w$.  Consider the recursive call that added $W$ to $S$ and let $(C_W,\hat{H}_W,\hat{X}_W)$ 
denote the state just before $W=C_W\cup\hat{H}_W$ was added to $S$. Since $u$ 
extends $W$ it must also extend $C_W$, and since $u \in \hat{X}$, $u\in \hat{X}_W$ and $\hat{H}_W \subseteq N_G(u)$. But then, by Lines~\ref{step:ccr-X-check1} or~\ref{step:ccr-X-check2}, the algorithm would have terminated without adding $W$, a contradiction.

Therefore, no cluster in $S_w$ can be contained in a cluster in $S_u$. Similar reasoning shows that if the algorithm added $C \cup \hat{H}$ to $S$ then no vertex of $P$ extends $C \cup \hat{H}$ and no cluster in $S$ contains $C \cup \hat{H}$. By induction, $S$ is inclusion-maximal for all calls, completing the proof.
\end{proof}

\subsection{Proof of Theorem \ref{thm:shatter}}

In this section we give a proof of the following theorem.

\thmshatter*

\begin{lemma} 
\label{lem:Happ}
For any sufficiently large $n$ and any $t \in [1, n^{0.1}]$ there exists an undirected graph $H_{n, t}$ with the following two properties.
\begin{enumerate}
\item $H_{n, t}$ has $n$ vertices, each of degree $\Theta(t)$.
\item $H_{n, t}$ has no cycles of length $\leq 8$.
\end{enumerate}
\end{lemma}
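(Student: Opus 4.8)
The plan is to build $H_{n,t}$ \emph{incrementally}, adding a single edge in each of $t \cdot n$ rounds, where every round greedily connects two vertices that are currently far apart (to avoid short cycles) while favoring low-degree endpoints (so that the final degree sequence is nearly uniform). Concretely, in each round I first pick a vertex $v$ of current minimum degree, then form the set $V_{\geq 8}$ of vertices at distance at least $8$ from $v$ in the graph built so far, and add an edge from $v$ to a minimum-degree vertex of $V_{\geq 8}$. Since distance $\geq 8$ forces the two endpoints to be distinct and non-adjacent, the graph stays simple throughout.

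The girth requirement (Property~2) is the easy part. Any added edge joins two vertices $u, v$ that were at distance at least $8$ beforehand, so the shortest cycle through the new edge has length at least $8 + 1 = 9$. Hence no cycle of length $\leq 8$ is ever created, and the final graph is free of such cycles. I will only need to confirm that $V_{\geq 8}$ is always nonempty so that each round can be executed; this follows from the same ball-size estimate used for the degree upper bound.

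For Property~1 I would establish matching $\Theta(t)$ bounds on the degrees. For the lower bound, I would show by induction on $k$ that after $k \cdot n$ edges every vertex has degree at least $k$: over any block of $n$ consecutive rounds the first-picked vertex is always of minimum degree, which forces the minimum degree to rise by at least one per block, so after $t$ blocks every degree is at least $t$. For the upper bound, I would show by induction on the edge count $i$ that the maximum degree stays at most $3i/n + 1$. The delicate case is the \emph{second} endpoint, which is constrained to $V_{\geq 8}$: I must argue that this restricted set still contains a low-degree vertex. Since the total degree is $2i$, there is always a global vertex of degree at most $2i/n$ (handling the first endpoint $v$, whose degree grows to at most $2i/n + 1$), and provided $|V_{\geq 8}| \geq 0.8n$, the minimum degree within $V_{\geq 8}$ is at most $2i/(0.8n) \leq 3i/n$.

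The main obstacle — and really the only genuinely delicate point — is bounding the size of the radius-$8$ ball around $v$, which is what guarantees $|V_{\geq 8}| \geq 0.8n$. This is mildly circular, since the ball size is governed by the maximum degree, which is exactly the quantity I am trying to control. I would resolve this by carrying the degree upper bound along inside the induction: because the process runs for only $t n$ rounds, the inductive bound yields maximum degree $O(t)$ throughout, so the radius-$8$ ball contains at most $O(t^8)$ vertices; the hypothesis $t \leq n^{0.1}$ makes this $O(n^{0.8}) = o(n)$, leaving at least $0.8n$ vertices in $V_{\geq 8}$ for all sufficiently large $n$. Combining the two estimates, every degree lies in $[t, 3t+1] = \Theta(t)$, which together with the girth bound completes the proof.
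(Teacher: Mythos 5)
Your proposal is correct and follows essentially the same construction and analysis as the paper: the same greedy $t\cdot n$-round process (minimum-degree vertex joined to a minimum-degree vertex at distance $\geq 8$), the same induction giving minimum degree $k$ after $k\cdot n$ edges, and the same $3i/n+1$ maximum-degree bound via the $O(t^8)=O(n^{0.8})$ ball-size estimate. Your explicit note about resolving the circularity between the degree bound and the ball-size bound by carrying the degree estimate inside the induction is a point the paper's proof glosses over, but it is the same argument.
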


\begin{proof}
We show that the graph can be built using the following process.
Start with an empty graph and run the following step $t\cdot n$ times.
First, pick a minimum-degree vertex $v$.
Then, Consider a set $V_{\geq 8}$ of vertices at distance $\geq 8$ from $v$.
Finally, add an edge from $v$ to a minimum degree vertex in $V_{\geq 8}$.

Clearly, the resulting graph does not have a cycle of length $\leq 8$.
Let us now bound the degrees.

For the lower bound we use induction to show that after $k \cdot n$ edges are added, the degree of each vertex is at least $k$.
This is clearly true for $k=0$.
For the inductive step, assume it is true for some $k \geq 1$.
Over the next $n$ steps, we pick the minimum degree vertex $n$ times (as the first vertex we pick) and so each vertex of degree $k$ will have its degree increased.
This finishes the lower bound proof.

For the upper bound, we show that after adding $i$ edges, the maximum degree is at most $3i / n + 1$.
Assume we have added exactly $i$ edges so far and we perform the next step.
Let $v$ denote the first vertex that we pick.
Since the total degree of all vertices at this point is exactly $2i$, there exists a vertex of degree at most $2i / n$.
Since $v$ is of minimum degree, its degree is at most $2i / n$ and it grows to at most $2i / n + 1$. 

Consider now the second vertex which we pick.
Let us first bound the number of vertices at distance at most $8$ from $v$.
Since the vertex degrees are bounded by $\Theta(t)$, the number of such vertices is at most $O(t^8) = O(n^{0.8})$.
Hence for a sufficiently large $n$, there are at least $0.8n$ vertices at distance more than $8$ from $v$.
The minimum degree among these vertices can be upper bounded by $2i / (0.8 n) \leq 3i / n$ and so the vertex degree of the second vertex becomes at most $3i / n + 1$. This finishes the proof.
\end{proof}

To prove \Cref{thm:shatter} we fix $n$ and $t$ and consider a graph $H^2_{n, t}$, which is a \emph{square} of $H_{n, t}$.
That is, $H^2_{n, t}$ has the same vertex set as $H_{n, t}$ but a different set of edges.
Namely, there is an edge $uv$ in $H^2_{n, t}$ if and only if there is a path of length $2$ between $u$ and $v$ in $H_{n, t}$.
Since in the following we work with fixed $n$ and $t$, we use $H$ and $H^2$ for brevity.

In the proofs we use the following observation.

\begin{proposition}\label{prop:walk}
Let $G$ be a graph that does not contain a cycle of length at most $8$ and let $v_1, v_2, \ldots, v_{k}$, be a sequence of vertices of $G$ forming a closed walk.
That is, there is an edge between $v_iv_{i+1}$ (for each $i$ such that $1 \leq i < k$), as well as an edge $v_kv_1$.
Moreover, assume that $k$ is even and $v_1, v_3, \ldots,  v_{k-1}$ are pairwise distinct.
Then, $v_2 = v_4 = \ldots = v_k$.
\end{proposition}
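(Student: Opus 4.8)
The plan is to argue by contradiction: assume that the even-indexed vertices $v_2, v_4, \dots, v_k$ are \emph{not} all equal and use this to exhibit a cycle of length at most $8$ in $G$, contradicting the hypothesis that $G$ has no cycle of length $\le 8$. It is convenient to relabel the alternating pattern of the walk: write $a_i := v_{2i-1}$ for the odd-indexed vertices, which are pairwise distinct by assumption, and $b_i := v_{2i}$ for the even-indexed ones, taking indices cyclically modulo $k/2$. With this notation each $b_i$ is a common neighbour of the two consecutive distinct vertices $a_i$ and $a_{i+1}$, and the goal becomes to show that all the $b_i$ coincide.

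First I would record the two consequences of the girth hypothesis that do most of the work: any two vertices of $G$ have at most one common neighbour (a second would close a $4$-cycle), and no two vertices can be joined by two internally vertex-disjoint paths whose lengths sum to at most $8$ (such paths would combine into a cycle of length $\le 8$). The first of these lets me clear away the easy degeneracies cheaply. For instance, if two even vertices lying two steps apart in the $b$-indexing coincided, say $b_{i-1} = b_{i+1}$, while the intermediate $b_i$ differed, then $a_i, b_i, a_{i+1}, b_{i+1}$ would be a $4$-cycle; coincidences of an even vertex with one of the distinct odd vertices are ruled out by the same short-cycle reasoning. After excluding these, I may assume the only remaining way the $b_i$ fail to be all equal is that some pair of \emph{consecutive} even vertices differs.

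The core step, which I expect to be the main obstacle, is to produce a \emph{short} cycle from such a place. If the $b_i$ are not all equal then by cyclicity some consecutive pair satisfies $b_i \ne b_{i+1}$; both are neighbours of $a_{i+1}$, giving a length-$2$ path $b_i, a_{i+1}, b_{i+1}$ between the distinct vertices $b_i$ and $b_{i+1}$. Tracing the rest of the closed walk yields a second connection between $b_i$ and $b_{i+1}$, and since the odd vertices are pairwise distinct this second connection never repeats an $a$-vertex and avoids $a_{i+1}$, so the two connections are internally disjoint on their odd vertices and combine into a cycle. The delicate point — and the reason the numerical bound $8$ in the girth hypothesis is exactly what is needed — is to show that this cycle can be taken to have length at most $8$: one must argue that the first coincidence among the traced even vertices closes off a vertex-simple cycle before the trace grows long, rather than merely yielding a long closed walk. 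Bounding this length is the heart of the argument and the step I would develop most carefully, since it is here that the interplay between the distinctness of the odd vertices and the girth constraint must be pinned down.

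Once a cycle of length at most $8$ is exhibited, it contradicts the assumption on $G$. Hence no index with $b_i \ne b_{i+1}$ can survive, and the cyclic chain of equalities $b_1 = b_2 = \dots = b_{k/2}$ gives $v_2 = v_4 = \dots = v_k$, completing the proof.
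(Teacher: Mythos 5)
Your plan defers exactly the step on which everything hinges, and that step cannot be carried out as written, because the proposition read literally --- for arbitrary even $k$ --- is false, so no argument can manufacture the cycle of length at most $8$ that your contradiction requires. Take $G = C_{10}$, the cycle on ten vertices: it contains no cycle of length at most $8$, and walking once around it gives a closed walk $v_1, \dots, v_{10}$ with $k = 10$ even and $v_1, v_3, v_5, v_7, v_9$ pairwise distinct, yet $v_2, v_4, \dots, v_{10}$ are pairwise distinct rather than all equal. In your notation, the ``second connection'' between $b_i$ and $b_{i+1}$ obtained by tracing the rest of the walk has length $k-2$, so the closed walk you assemble has total length $k$; for $k \geq 10$ it may simply \emph{be} a long cycle, which contradicts nothing. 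The statement is true only when the walk is shorter than the girth bound, i.e.\ under the implicit hypothesis $k \leq 8$, which is satisfied in both places the paper applies it ($k=6$ in Lemma~\ref{lem:cliquecount} and $k=8$ in Lemma~\ref{lem:2paths}). You half-sense this when you write that the bound $8$ ``is exactly what is needed,'' but what is needed is a restriction on $k$, not a property your trace can establish; the ``first coincidence closes off a short cycle'' step you promise to develop is precisely where the $C_{10}$ example shows the argument must break.

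Two further repairs would be needed even with $k \leq 8$ assumed. First, concatenating the two connections yields a closed walk, not a cycle: distinctness of the odd vertices does not prevent the even vertices along the second connection from repeating, or from coinciding with $b_i$ or $b_{i+1}$, so the walk need not be simple. The clean finish is to consider the subgraph of edges traversed an odd number of times by the closed walk; it has all degrees even, so if nonempty it contains a simple cycle of length at most $k \leq 8$, contradicting the girth hypothesis, and if empty (every edge traversed an even number of times) a short pairing argument at each of the pairwise-distinct $a_i$ forces $b_1 = \dots = b_{k/2}$. Second, your claim that coincidences between even and odd vertices are ``ruled out by the same short-cycle reasoning'' is asserted but not argued, and such coincidences are not in fact excluded by the hypotheses --- the parity argument above simply makes them harmless. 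For calibration: the paper states this proposition as an observation without proof, so there is no authorial argument to compare against, but any correct proof must invoke $k \leq 8$, and yours as written does not.
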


\begin{lemma}\label{lem:cliquecount}
The graph $H^2$ contains precisely $n$ maximal cliques, each of size $\Theta(t)$.
Each maximal clique is the set of  neighbors in $H$ of some vertex $v$.
\end{lemma}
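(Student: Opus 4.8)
The plan is to prove that the maximal cliques of $H^2$ are \emph{exactly} the sets $N_H(v)$, $v \in V(H)$, and that there are precisely $n$ of them. The easy direction comes first: for any vertex $v$, any two vertices $a,b \in N_H(v)$ share the common neighbor $v$ in $H$, so they are joined by a path of length $2$ and hence adjacent in $H^2$. Thus each $N_H(v)$ is a clique of $H^2$ of size $\deg_H(v) = \Theta(t)$. The substantial direction is the converse structural claim: \emph{every clique of $H^2$ of size at least $3$ is contained in $N_H(w)$ for a unique vertex $w$}, and this will drive all remaining assertions. I would also record up front the basic consequence of high girth that is used repeatedly: since $H$ has no $4$-cycle, any two vertices adjacent in $H^2$ have a \emph{unique} common neighbor in $H$, as two distinct common neighbors $w,w'$ of $a,b$ would form the $4$-cycle $a\,w\,b\,w'$ in $H$.

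The crux, which I expect to be the main obstacle, is controlling triangles of $H^2$, and this is precisely where \Cref{prop:walk} enters. Given a triangle $\{a,b,c\}$ in $H^2$, let $w_{ab}, w_{bc}, w_{ac}$ be the (unique) common $H$-neighbors of the three pairs. I would form the closed walk $a,\, w_{ab},\, b,\, w_{bc},\, c,\, w_{ac}$ in $H$: it has even length $6 \le 8$, and its odd-indexed vertices $a,b,c$ are pairwise distinct, so \Cref{prop:walk} forces $w_{ab} = w_{bc} = w_{ac} =: w$, whence $\{a,b,c\} \subseteq N_H(w)$. To upgrade this from triangles to an arbitrary clique $Q$ with $|Q| \ge 3$, I fix one edge $\{a,b\}$ of $Q$ together with its unique common neighbor $w$; for every other vertex $d \in Q$, the triangle $\{a,b,d\}$ lies in some $N_H(w')$, but $w'$ is then a common neighbor of $a$ and $b$ and hence equals $w$, giving $d \in N_H(w)$. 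Therefore $Q \subseteq N_H(w)$, and $w$ is uniquely pinned down by any edge of $Q$ via the uniqueness of common neighbors.

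With the structural claim in hand, the rest follows routinely. First, distinct sets $N_H(w), N_H(w')$ are incomparable under inclusion (in particular unequal): if one contained the other they would share at least two vertices $a,b$, forcing $w = w'$ by uniqueness of the common neighbor of $a,b$. Hence each $N_H(w)$ is a \emph{maximal} clique, since a clique properly containing it would have size $\ge 3$ and thus lie inside some $N_H(w'')$, forcing $N_H(w) \subseteq N_H(w'')$ and $w = w''$, a contradiction. Conversely, every maximal clique has size $\ge 3$: a single vertex $u$ or an edge $\{u,u'\}$ of $H^2$ is always contained in some $N_H(w)$ with $|N_H(w)| = \deg_H(w) = \Theta(t) \ge 3$ for $t$ large enough (taking $w$ to be any $H$-neighbor of $u$ in the first case, and the common neighbor in the second), so it can be extended and is not maximal. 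Every maximal clique therefore coincides with the unique $N_H(w)$ containing it. Finally, the map $w \mapsto N_H(w)$ is injective by the same no-two-common-neighbors property, so it is a bijection between $V(H)$ and the maximal cliques; this yields exactly $n$ maximal cliques, each of size $\Theta(t)$, completing the proof.
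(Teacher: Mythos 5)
Your proof is correct, and it runs on the same engine as the paper's: a single application of \Cref{prop:walk} to a closed walk of length six in $H$. The difference is in the decomposition. The paper proves ``every clique of $H^2$ lies in some $N_v$'' by induction on the clique size: having placed $\{x_1,\dots,x_{k-1}\}$ inside $N_v$, it concatenates the $H$-paths realizing the $H^2$-edges $x_kx_1$ and $x_2x_k$ with the path $x_1,v,x_2$ into a $6$-walk, and \Cref{prop:walk} forces the edge $vx_k$. You instead settle the triangle case directly (the three pairwise hubs $w_{ab},w_{bc},w_{ac}$ collapse to one) and then glue via the no-$C_4$ uniqueness of common neighbors, pinning every further clique vertex to the hub of one fixed edge. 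These are near-isomorphic arguments --- your uniqueness-of-common-neighbors observation plays exactly the role of the paper's induction hypothesis --- but your route has two modest advantages: the $6$-walk application is cleaner (no auxiliary hub $v$ inside the walk), and you explicitly carry out the counting and maximality bookkeeping (incomparability of the sets $N_H(w)$, non-maximality of cliques of size $\leq 2$, injectivity of $w \mapsto N_H(w)$), which the paper's proof leaves implicit and in part delegates to the subsequent \Cref{lem:common_neighbors}. One small caveat, which the paper shares: your ``size $\geq 3$'' and injectivity steps need the minimum degree of $H$, which is $\Theta(t)$, to exceed a small constant, so strictly speaking the conclusion as you argue it requires $t$ bounded below by a constant; this is harmless since the application in \Cref{thm:shatter} takes $t = n^{0.1}$.
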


In the following, we use $N_v$ to denote the set of neighbors of $v$ in $H$.

\begin{proof}
It follows from the definition of $H^2$ that every set $N_v$ is a clique in $H^2$.

We now show that any clique in $H^2$ is contained in some $N_v$.
Fix a clique $C$ and let $k = |C|$.
For $k \leq 2$ the proof is trivial, so let us consider $k \geq 3$ and prove the statement by induction.
Assume $C = \{x_1, x_2, \ldots, x_k\}$.
By the inductive assumption we know that $\{x_1, x_2, \ldots, x_{k-1}\} \subseteq N_v$ for some $v$.
To complete the proof we show that an edge $vx_k$ exists in $H$.

By definition, $x_k$ is connected to $x_1$ and $x_2$ using paths of length $2$ in $H$.
We now apply \Cref{prop:walk} to the closed walk obtained by concatenating a path (a) from $x_k$ to $x_1$, (b) from $x_1$ to $x_2$ via $v$, (c) from $x_2$ to $x_k$ and infer that $x_k$ is a neighbor of $v$.
\end{proof}

\begin{lemma}\label{lem:common_neighbors}
Any two maximal cliques in $H^2$ have at most one vertex in common.
\end{lemma}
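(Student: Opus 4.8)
The plan is to argue by contradiction, reducing the existence of two shared vertices to a forbidden short cycle in $H$. By \Cref{lem:cliquecount}, every maximal clique of $H^2$ equals the $H$-neighborhood $N_v$ of some vertex $v$, and the $n$ maximal cliques are in bijection with the vertices of $H$; in particular, two distinct maximal cliques are $N_v$ and $N_w$ for some $v \neq w$. So it suffices to show that any two distinct vertices $v, w$ of $H$ have at most one common neighbor in $H$, since $N_v \cap N_w$ is exactly their set of common neighbors.

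First I would suppose, toward a contradiction, that $v$ and $w$ share two distinct common neighbors $a, b \in N_v \cap N_w$. Then the four edges $va$, $vb$, $wa$, $wb$ all lie in $H$. I would then verify that the four vertices $v, a, w, b$ are pairwise distinct: $v \neq w$ and $a \neq b$ by assumption, while $a, b \notin \{v, w\}$ because $H$ has no self-loops (a neighbor of a vertex is never the vertex itself). Consequently the edges $va$, $aw$, $wb$, $bv$ form a genuine cycle of length $4$ in $H$.

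This directly contradicts property~2 of \Cref{lem:Happ}, which states that $H = H_{n,t}$ has no cycle of length at most $8$, and in particular no $4$-cycle. Hence $v$ and $w$ cannot have two common neighbors, so $|N_v \cap N_w| \le 1$, completing the argument.

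There is essentially no serious obstacle here; the girth hypothesis on $H$ is far stronger than what this statement requires (forbidding $4$-cycles alone would suffice), and the only point needing care is checking the distinctness of the four vertices so that the closed walk $v, a, w, b$ is a bona fide $4$-cycle rather than a degenerate walk. Note that \Cref{prop:walk} is not needed for this particular lemma, since the contradiction arises immediately from the single forbidden $4$-cycle.
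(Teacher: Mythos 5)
Your proof is correct and follows essentially the same route as the paper: identify each maximal clique with an $H$-neighborhood $N_v$ via \Cref{lem:cliquecount}, and observe that two common neighbors of distinct vertices $v,w$ would yield a $4$-cycle in $H$, contradicting the girth property of \Cref{lem:Happ}. The paper states this in three lines; your version merely adds the (harmless and valid) verification that the four vertices are pairwise distinct, so the closed walk is a genuine $4$-cycle.
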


\begin{proof}
Let $N_x$ and $N_y$ be two maximal cliques in $H^2$.
Any vertex $N_x \cap N_y$ is a neighbor of both $x$ and $y$ in $H$.
It follows that the intersection can have at most one element, as otherwise we obtain a cycle of length $4$ in $H$.
\end{proof}

\begin{figure*}[htp]
  \centering
  \begin{subfigure}{0.18\textwidth}
    \includegraphics[width=\linewidth]{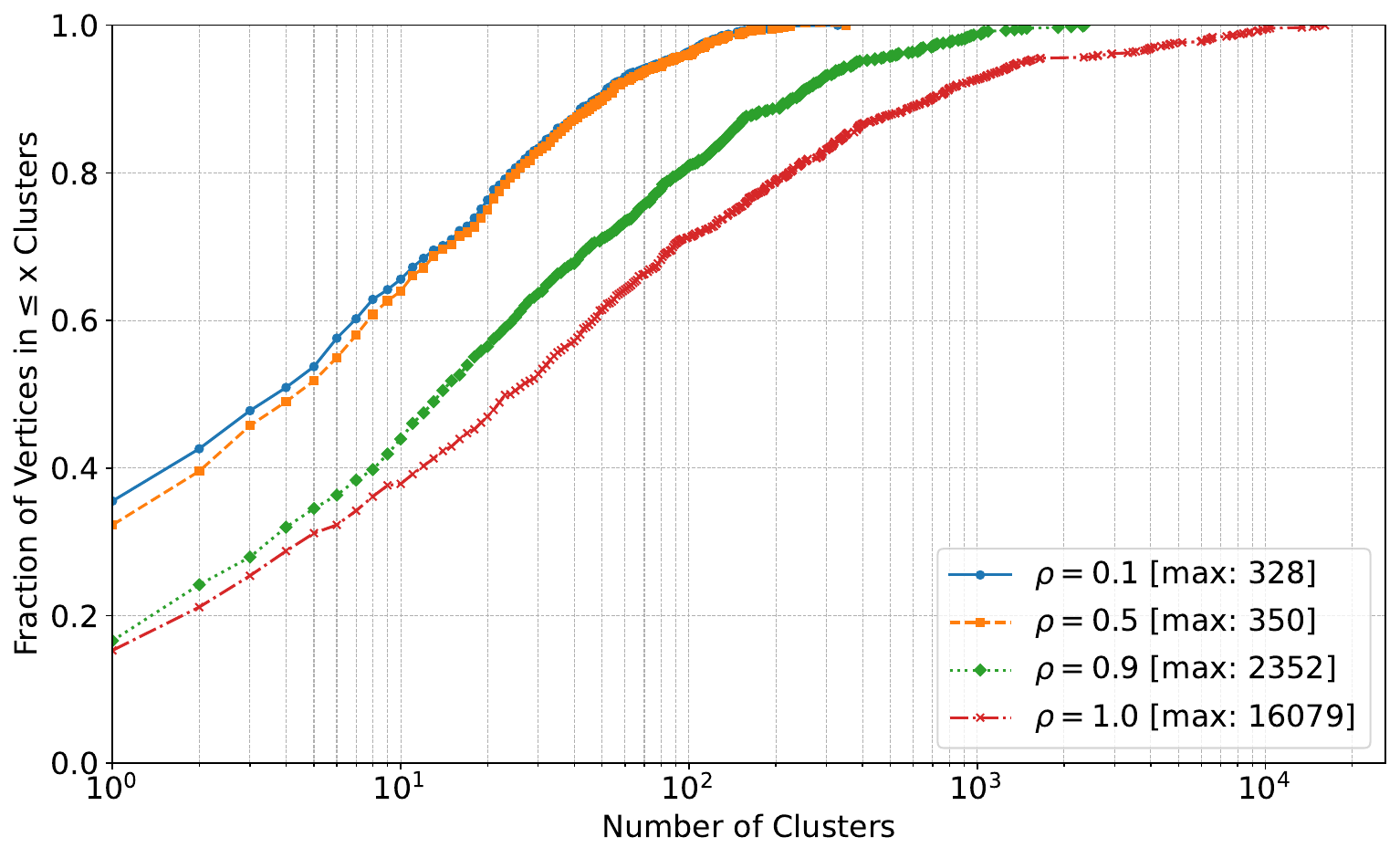}
    \caption{euemail}
  \end{subfigure}\hfill
  \begin{subfigure}{0.18\textwidth}
    \includegraphics[width=\linewidth]{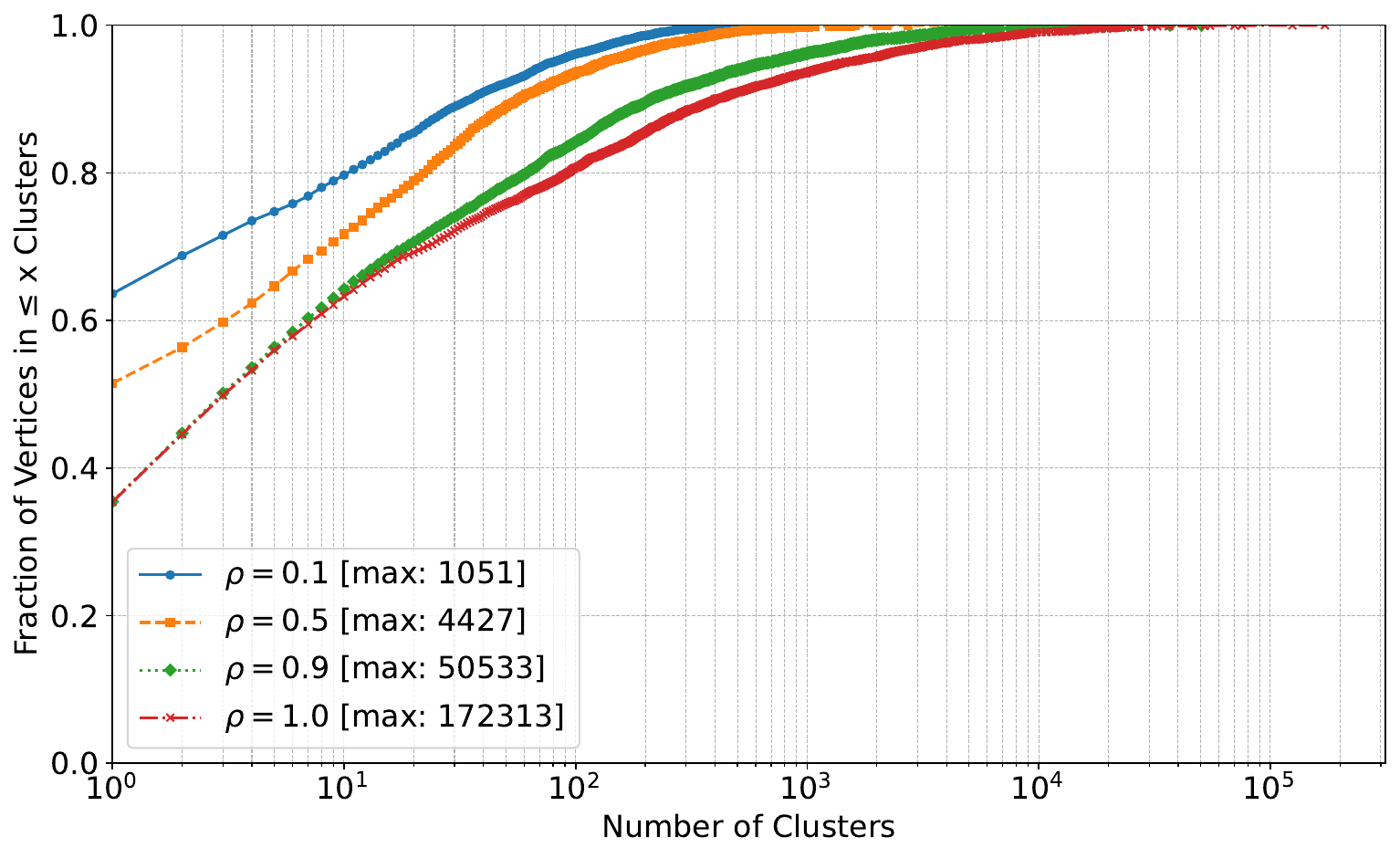}
    \caption{Wiki-Vote}
  \end{subfigure}\hfill
  \begin{subfigure}{0.18\textwidth}
    \includegraphics[width=\linewidth]{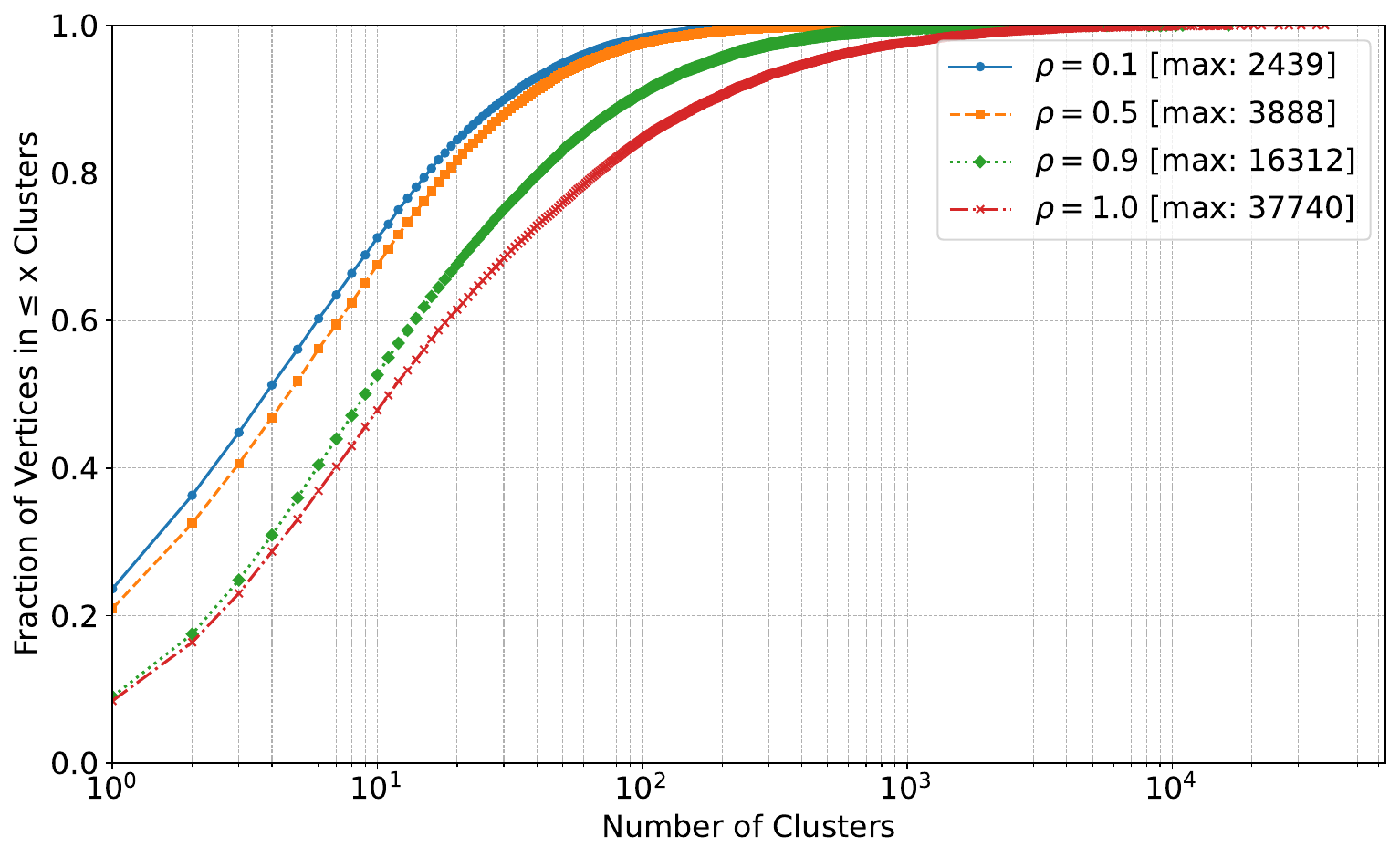}
    \caption{citHepTh}
  \end{subfigure}\hfill
  \begin{subfigure}{0.18\textwidth}
    \includegraphics[width=\linewidth]{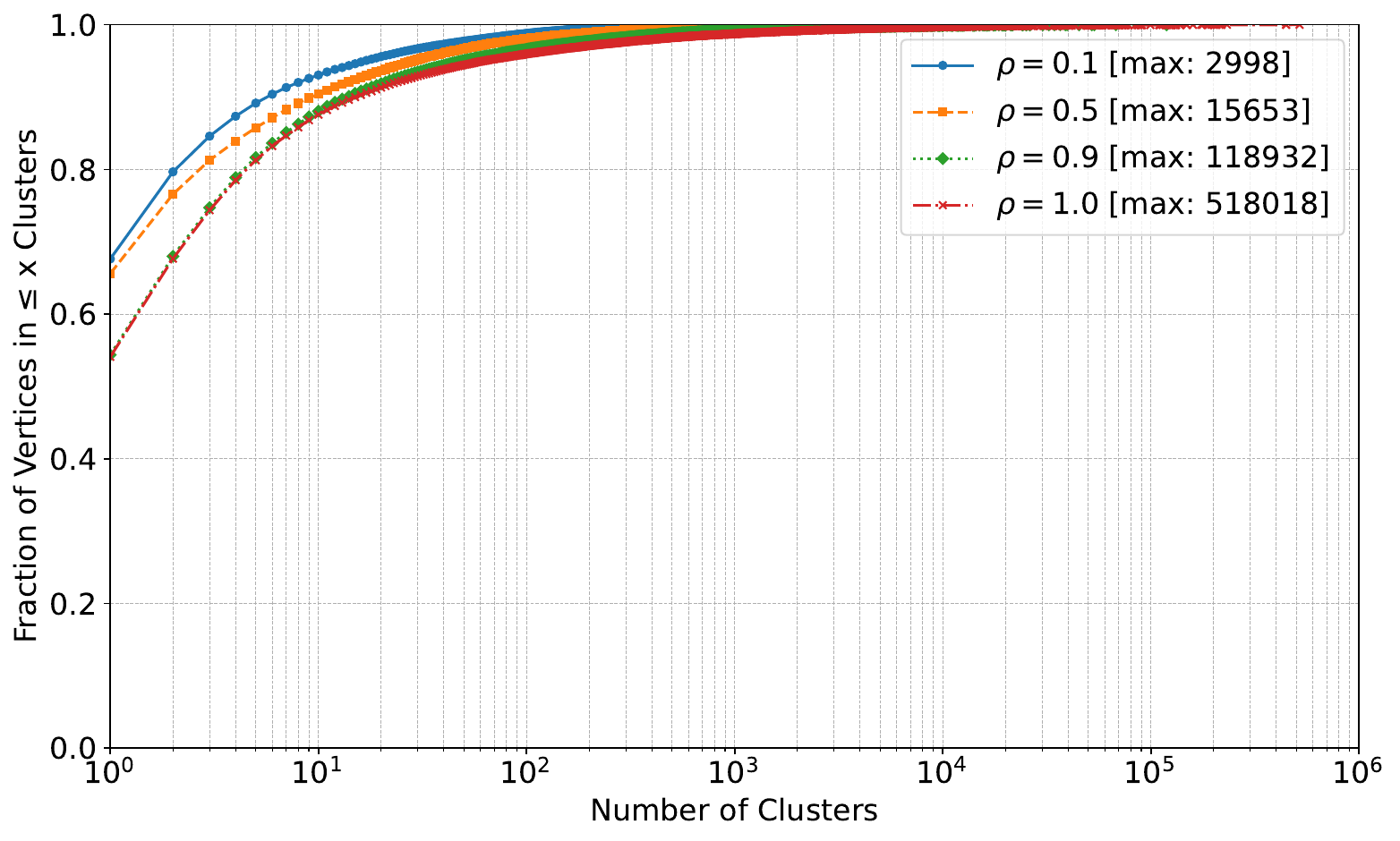}
    \caption{soc-Epinions1}
  \end{subfigure}\hfill
  \begin{subfigure}{0.18\textwidth}
    \includegraphics[width=\linewidth]{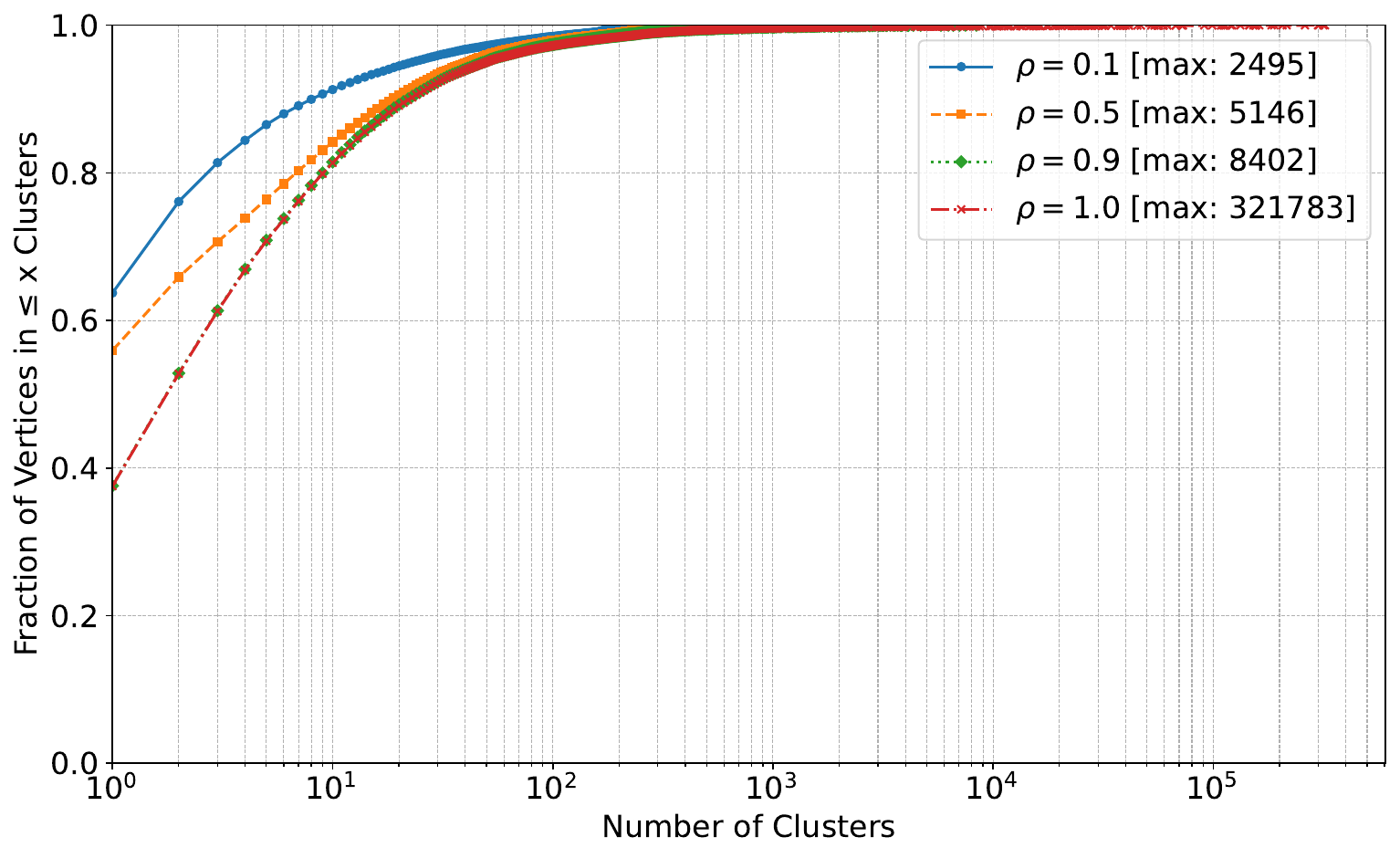}
    \caption{Slashdot}
  \end{subfigure}\hfill
  \vspace{1ex}
  
  \begin{subfigure}{0.18\textwidth}
    \includegraphics[width=\linewidth]{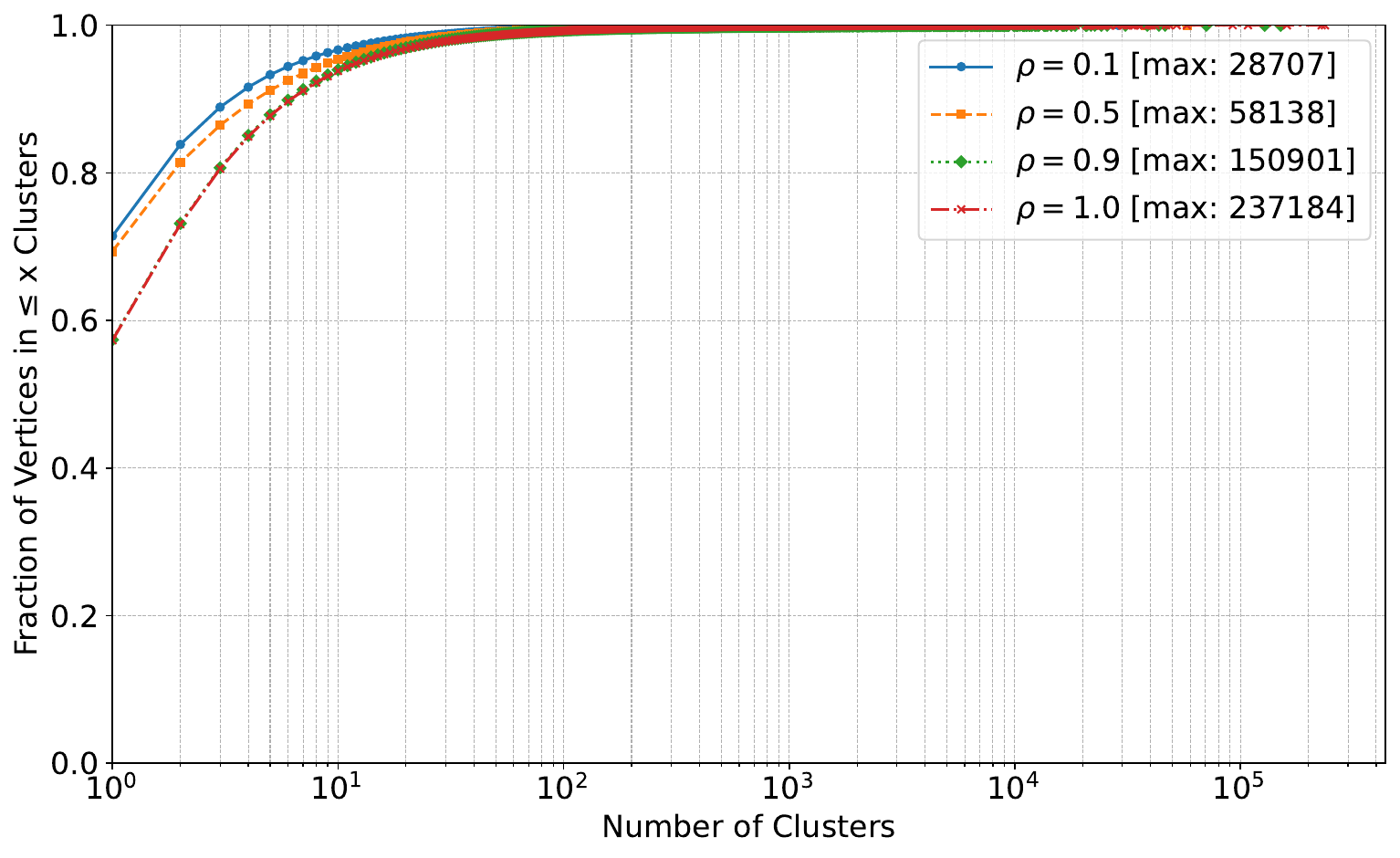}
    \caption{youtube}
  \end{subfigure}\hfill
    \begin{subfigure}{0.18\textwidth}
    \includegraphics[width=\linewidth]{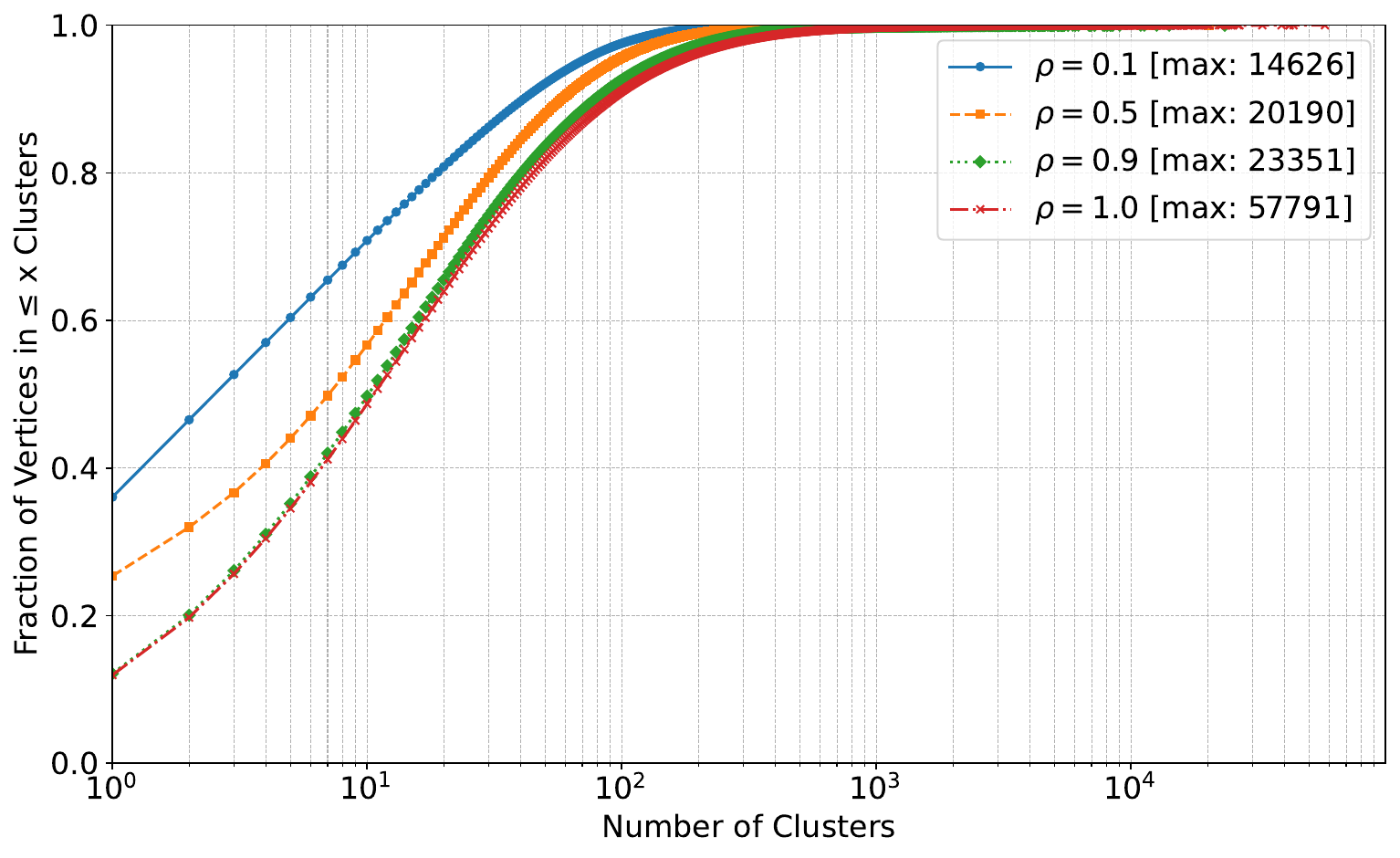}
    \caption{pokec}
  \end{subfigure}\hfill
  \begin{subfigure}{0.18\textwidth}
    \includegraphics[width=\linewidth]{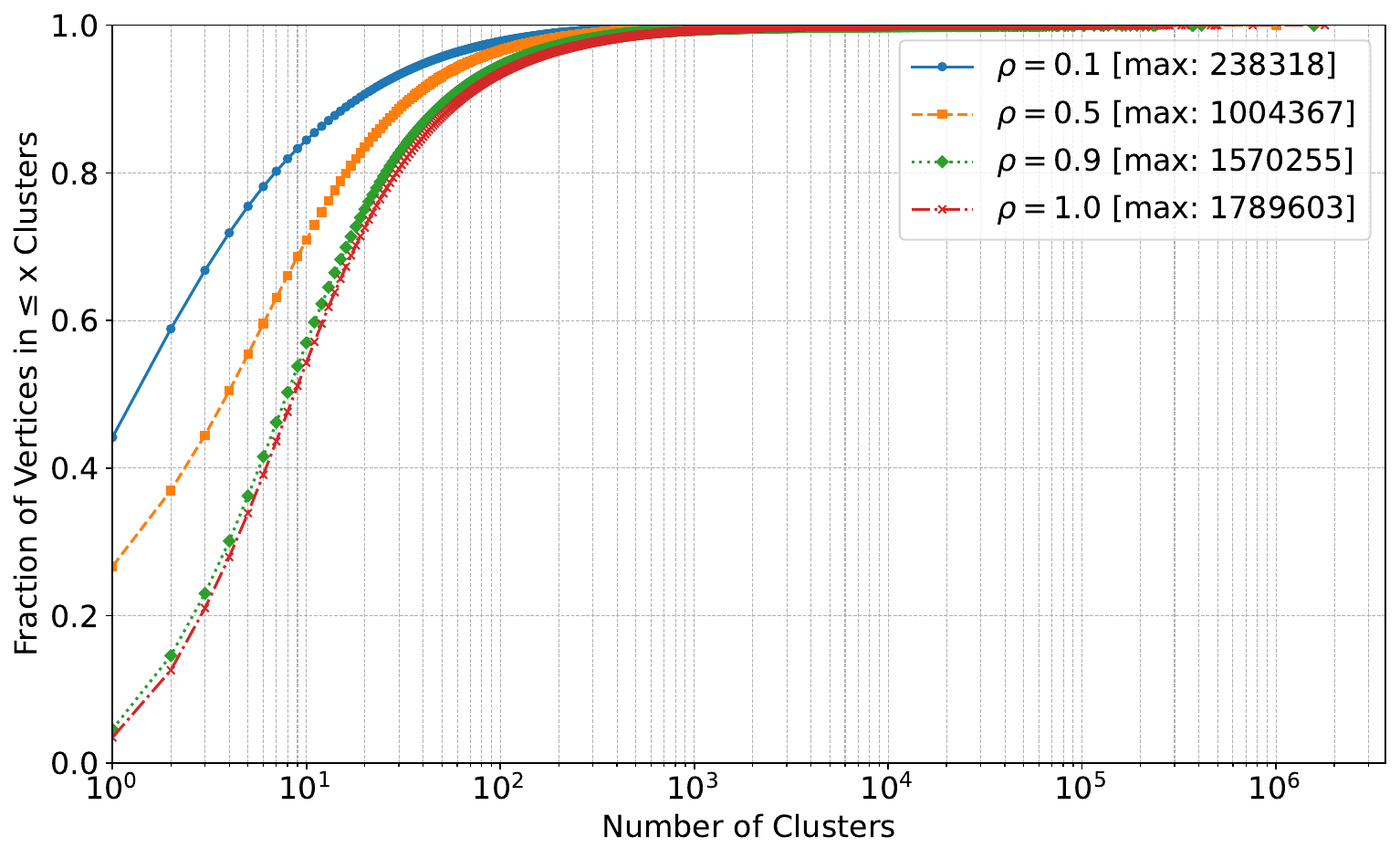}
    \caption{wiki}
  \end{subfigure}\hfill
  \begin{subfigure}{0.18\textwidth}
    \includegraphics[width=\linewidth]{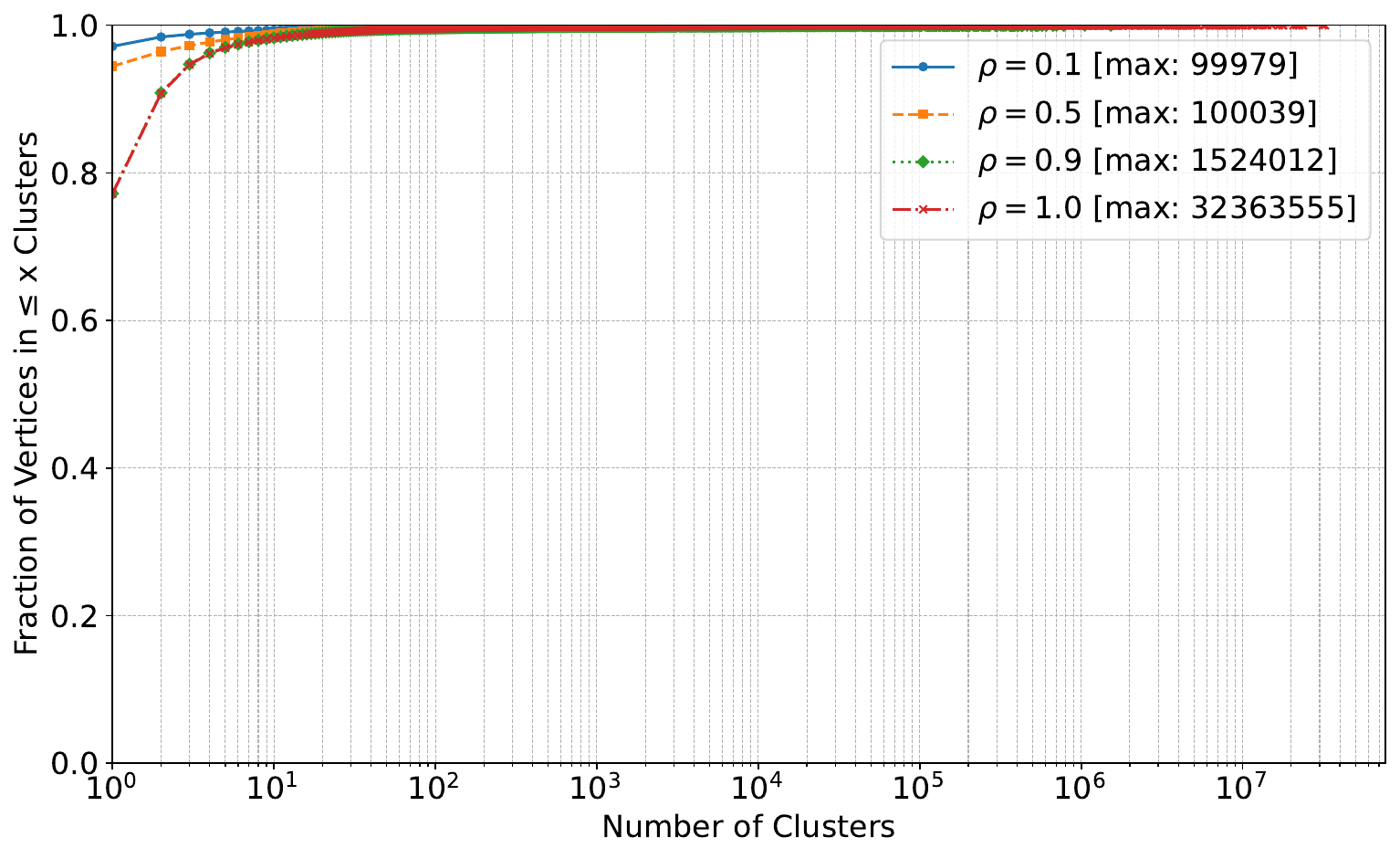}
    \caption{WikiTalk}
  \end{subfigure}\hfill
  \begin{subfigure}{0.18\textwidth}
    \includegraphics[width=\linewidth]{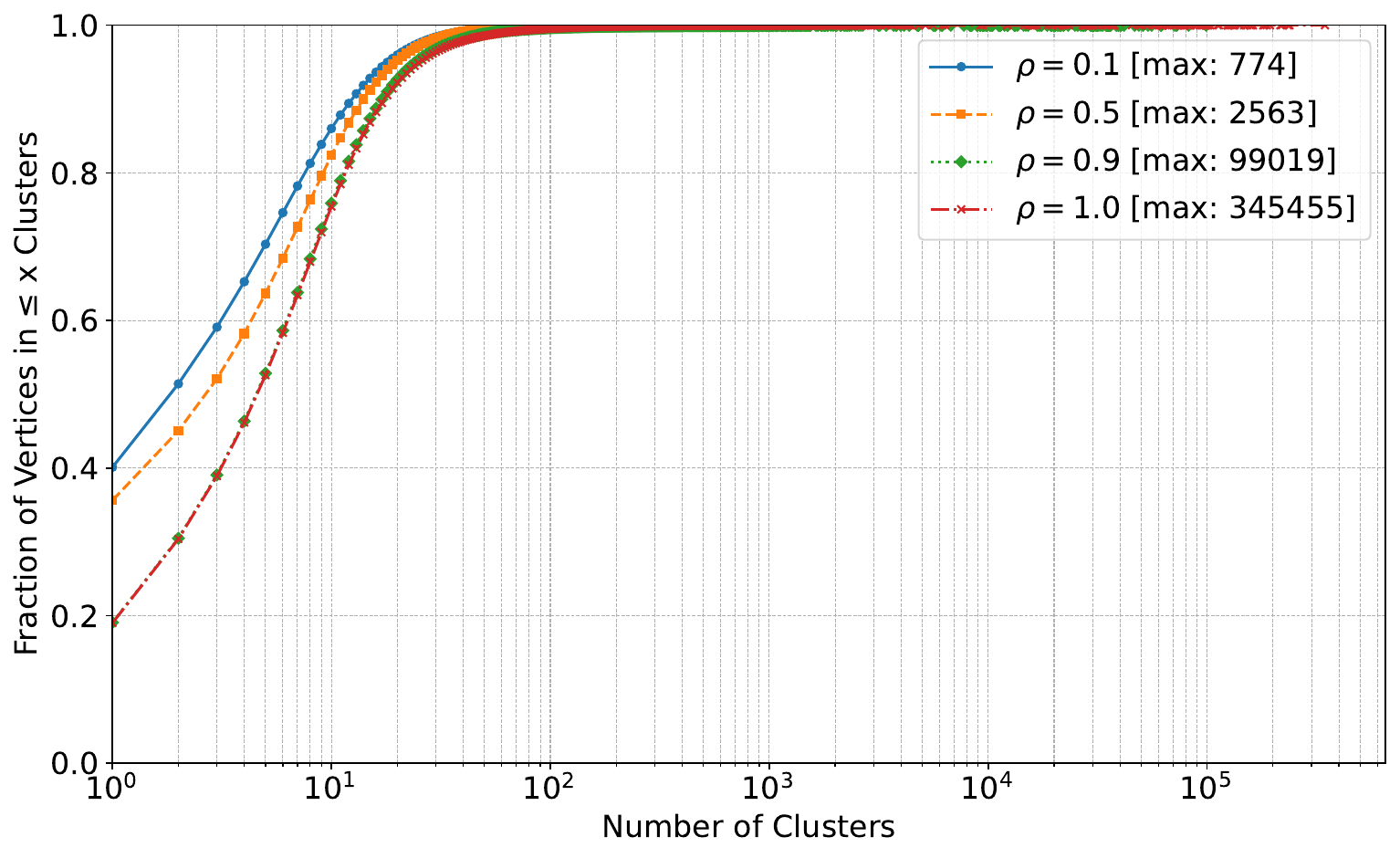}
    \caption{cit-Patents}
  \end{subfigure}\hfill

  \caption{The number of clusters each vertex belongs to (defined as membership distributions in Section~\ref{sec:redundancy}) in aggregators of different thresholds for two datasets.}
  \label{fig:all}
\end{figure*}

\begin{lemma}\label{lem:2paths}
For any two vertices $u, v$ in $H^2$ there are $O(t)$ paths of length $2$ connecting them.
\end{lemma}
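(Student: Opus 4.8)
The plan is to recast the statement as a bound on common $H^2$-neighbors: a length-$2$ path in $H^2$ from $u$ to $v$ is determined by its midpoint $w$, a vertex adjacent to both $u$ and $v$ in $H^2$, so it suffices to show that two distinct vertices $u,v$ have $O(t)$ common neighbors in $H^2$. First I would unfold each $H^2$-edge into its witnessing length-$2$ path in $H$: for a common $H^2$-neighbor $w$ there exist $a \in N_u \cap N_w$ and $b \in N_v \cap N_w$. I would then classify each such $w$ according to whether its witnesses can be chosen with $a=b$ (\emph{Type 1}) or whether every admissible witness pair satisfies $a \neq b$ (\emph{Type 2}).

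For Type 1, the condition $a=b$ means $a \in N_u \cap N_v$ is a common neighbor of $u$ and $v$ in $H$. Since $H$ has no cycle of length $\leq 8$, and in particular no $4$-cycle, $u$ and $v$ share at most one common neighbor $c$ in $H$; every Type-1 midpoint then lies in $N_c$, of which there are $|N_c| = \Theta(t)$. Hence Type 1 contributes $O(t)$ midpoints.

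The crux is Type 2, where I claim there is at most one midpoint. Suppose toward contradiction there are two distinct Type-2 midpoints $w \neq w'$, with witnesses $a \neq b$ and $a' \neq b'$ respectively. I would splice their $H$-paths into the closed walk $u, a, w, b, v, b', w', a'$ (returning to $u$), whose eight consecutive pairs $ua, aw, wb, bv, vb', b'w', w'a', a'u$ are all edges of $H$. This is a closed walk of even length $8$ whose odd-indexed vertices $u, w, v, w'$ are pairwise distinct (since $u \neq v$, both $w, w' \notin \{u,v\}$, and $w \neq w'$). Applying \Cref{prop:walk} forces the even-indexed vertices to coincide, i.e. $a = b = b' = a'$, contradicting $a \neq b$. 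Thus Type 2 contributes at most one midpoint, and combining the two types yields $O(t)$ common $H^2$-neighbors, as desired. The main obstacle is assembling a genuinely closed walk to which \Cref{prop:walk} applies while guaranteeing the required distinctness of its odd-indexed vertices; once that is arranged, the girth hypothesis does the rest.
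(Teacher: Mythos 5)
Your proof is correct and takes essentially the same approach as the paper: both arguments hinge on applying \Cref{prop:walk} to the same closed $8$-walk $u, a, w, b, v, b', w', a'$ (the girth condition forcing all four $H$-witnesses to coincide), combined with the uniqueness of the common $H$-neighbor of $u$ and $v$ and the degree bound $\Theta(t)$ from the construction of $H$. Your Type-1/Type-2 split is only a mild repackaging --- the paper applies the walk argument directly to an arbitrary pair of distinct midpoints, concluding at once that all midpoints lie in the neighborhood of the unique common neighbor --- so the two proofs are the same in substance.
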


\begin{proof}
Let $u,x_1,v$ and $u,x_2,v$ be two different paths in $H^2$ of length $2$ between $u$ and $v$ (i.e. $x_1\not= x_2$).
Each of the edges $ux_1$,
$x_1v$, $ux_2$,  $x_2v$ correspond to a path of length two in $H$. 
Let $y_1$, $y_2$, $y_3$, and $y_4$, be the middle vertices on these paths, respectively.
By applying \Cref{prop:walk} to the walk $u, y_1, x_1, y_2, v, y_4, x_2, y_3, u$ we get that $y_1=y_2=y_3=y_4$.
Hence, $x_1$ and $x_2$ are adjacent in $H$ to a common neighbor of $u$ and $v$. Since by
\Cref{lem:common_neighbors},
this common neighbor is unique and by \Cref{lem:Happ}, its degree is $\Theta(t)$ the lemma follows.
\end{proof}

\begin{lemma}\label{lem:smallsize}
Let $C$ be a cluster in $H^2$ of density $\Omega(1)$.
Then $|C| = O(t)$.
\end{lemma}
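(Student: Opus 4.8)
The plan is to bound $\ell := |C|$ by counting paths of length two (``cherries'') in the induced subgraph $H^2[C]$ in two different ways, playing the codegree bound of \Cref{lem:2paths} against the density hypothesis. First I would record the two facts I need. By the definition of density, a cluster $C$ of density $\Omega(1)$ satisfies $|E(H^2[C])| \ge c\binom{\ell}{2}$ for some constant $c > 0$, so the average degree $\bar d$ of $H^2[C]$ is $\bar d = 2|E(H^2[C])|/\ell \ge c(\ell-1) = \Omega(\ell)$. On the other hand, two vertices of $H^2$ have a common neighbor precisely when they are joined by a path of length two in $H^2$; hence \Cref{lem:2paths} says that any two vertices have $O(t)$ common neighbors in $H^2$, and in particular at most $O(t)$ common neighbors lying inside $C$.

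Next I would estimate $P := \sum_{x \in C} \binom{d_x}{2}$, the number of cherries of $H^2[C]$, where $d_x$ denotes the degree of $x$ in $H^2[C]$. For the lower bound, convexity of the map $d \mapsto \binom{d}{2}$ together with Jensen's inequality gives $P \ge \ell\binom{\bar d}{2}$, and substituting $\bar d = \Omega(\ell)$ yields $P = \Omega(\ell^3)$. For the upper bound, I would rewrite $P$ by summing over the apex of each cherry instead of its center: $P = \sum_{\{y,z\} \subseteq C} |\{x \in C : x \text{ is a common neighbor of } y \text{ and } z\}|$, i.e.\ $P$ is the sum over all pairs of vertices of $C$ of their codegree. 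Since every pair has codegree $O(t)$ and there are $\binom{\ell}{2}$ pairs, this gives $P = O(\ell^2 t)$.

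Combining the two estimates produces $\Omega(\ell^3) \le O(\ell^2 t)$, and dividing through by $\ell^2$ yields $\ell = O(t)$, which is exactly the claim (the bound holds trivially when $\ell$ is already $O(t)$, so no separate small-$\ell$ case is needed).

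The routine ingredients are the two cherry-counting identities and Jensen's inequality. The step that requires care—and is the whole point of the argument—is that one must use the \emph{codegree} bound of \Cref{lem:2paths} rather than merely the maximum-degree bound. Indeed, the girth assumption only gives that $H^2$ has maximum degree $O(t^2)$, and feeding that into the edge count gives the weaker conclusion $\ell = O(t^2)$; the extra factor of $t$ is recovered precisely because the cherry count grows like $\ell^3$ (forced by the \emph{linear} average degree coming from density $\Omega(1)$) while the codegree restriction caps it at $O(\ell^2 t)$. I would also note the only subtlety in transferring \Cref{lem:2paths}: common neighbors inside $C$ form a subset of the common neighbors in $H^2$, so the $O(t)$ bound is inherited automatically by the induced subgraph.
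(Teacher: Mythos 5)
Your proposal is correct and is essentially identical to the paper's own proof: both count paths of length two in $H^2[C]$, lower-bound the count by $\Omega(|C|^3)$ via Jensen's inequality applied to the convex map $d \mapsto \binom{d}{2}$ using the $\Omega(1)$ density, and upper-bound it by $O(|C|^2 t)$ pairs-times-codegree via \Cref{lem:2paths}. Your closing remarks (why the codegree bound is needed rather than a maximum-degree bound, and why the bound transfers to the induced subgraph) are accurate but supplementary; no gap to report.
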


\begin{proof}
The proof strategy is to find a lower and an upper bound on $P_2(C)$, the total number of 2-edge paths $u, v, w$ in $H^2$  
where all three vertices $u, v, w$ are in the cluster $C$.

\paragraph{Lower Bound on $P_2(C)$:}
We can count $P_2(C)$ by summing over all possible center vertices $v \in C$. For any $v \in C$, the number of 2-paths centered at $v$ is the number of ways to choose two of its neighbors \emph{within} $C$. Let $d_v(C)$ be the degree of $v$ in $H^2$ within the cluster $C$.
$$ P_2(C) = \sum_{v \in C} \binom{d_v(C)}{2} $$
The function $f(x) = \binom{x}{2} = \frac{x(x-1)}{2}$ is convex. By Jensen's Inequality, we can lower-bound this sum using the average degree $\bar{d} = \frac{2|E(C)|}{|C|}$ of the cluster $C$:
$$ P_2(C) = \sum_{v \in C} \binom{d_v(C)}{2} \ge |C| \cdot \binom{\bar{d}}{2} = |C| \cdot \frac{\bar{d}(\bar{d}-1)}{2} = |E(C)|(\bar{d}-1).$$
Since the density of $C$ is $\Omega(1)$, we have $|E(C)| = \Omega(|C|^2)$, which gives us:

$$
P_2(C) = |E(C)|(\bar{d}-1) = \Omega(|C|^2) (\frac{2|E(C)|}{|C|}-1) = \Omega(|C|^3).
$$

\paragraph{Upper Bound on $P_2(C)$:}
Alternatively, we can count $P_2(C)$ by summing over all (ordered) pairs of endpoints $u, w \in C$.
By \Cref{lem:2paths} we get
$$ P_2(C) \le \binom{|C|}{2} O(t) = O(|C|^2 t).$$

By combining the lower and upper bounds on $P_2(C)$ we get

$\Omega(|C|^3) \leq O(|C|^2 \cdot t)$, which gives the desired.
\end{proof}

We say $C$ \emph{covers} a maximal clique $N_v$ if its intersection with the clique is large, i.e., $|C \cap N_v| = \Omega(t)$.

\begin{lemma}\label{lem:coversfew}
Let $C$ be a cluster in $H^2$ with size $|C| = O(t)$. Then, the number of distinct cliques of size $\Omega(t)$ covered by $C$ is $O(1)$.
\end{lemma}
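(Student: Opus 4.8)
The plan is to reduce the claim to a simple edge-counting argument inside the induced subgraph $H^2[C]$. Suppose $C$ covers the distinct cliques $N_{v_1}, \dots, N_{v_r}$; by \Cref{lem:cliquecount} each $N_{v_i}$ is the set of $H$-neighbors of a vertex $v_i$ and has size $\Theta(t)$. For each $i$ I set $A_i := C \cap N_{v_i}$, so that by the definition of covering $|A_i| = \Omega(t)$, say $|A_i| \ge c_1 t$ for a constant $c_1 > 0$. Since $A_i \subseteq N_{v_i}$ and $N_{v_i}$ is a clique of $H^2$, each $A_i$ is itself a clique of $H^2$, and hence $H^2[A_i]$ contains exactly $\binom{|A_i|}{2}$ edges, all of which lie inside $C$.

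The key step is to observe that these $r$ cliques are pairwise \emph{edge}-disjoint as subgraphs of $H^2[C]$. Indeed, by \Cref{lem:common_neighbors} any two maximal cliques of $H^2$ share at most one vertex, so $|A_i \cap A_j| \le |N_{v_i} \cap N_{v_j}| \le 1$ for $i \ne j$. If some edge $\{a,b\}$ of $H^2$ belonged to both $A_i$ and $A_j$, then $\{a,b\} \subseteq A_i \cap A_j$, contradicting $|A_i \cap A_j| \le 1$. Thus every edge of $H^2[C]$ is counted in at most one of the cliques $A_i$.

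I would then double count the edges of $H^2[C]$. On one hand, edge-disjointness gives $|E(H^2[C])| \ge \sum_{i=1}^{r} \binom{|A_i|}{2} \ge r \binom{c_1 t}{2} = \Omega(r\, t^2)$. On the other hand, trivially $|E(H^2[C])| \le \binom{|C|}{2} = O(t^2)$, using the hypothesis $|C| = O(t)$. Combining the two bounds yields $r \cdot \Omega(t^2) \le O(t^2)$, that is $r = O(1)$, as desired.

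The argument is short, and the only nontrivial point — which I expect to be the main obstacle — is recognizing that the ``share at most one vertex'' property of \Cref{lem:common_neighbors} upgrades to ``share no edge,'' which is exactly what lets me avoid any inclusion--exclusion over the heavily overlapping sets $A_i$ and reduce everything to a clean comparison of two edge counts. A purely vertex-based Bonferroni bound on $|\bigcup_i A_i|$ is too weak here: it would leave open the possibility $r = \Omega(t)$, so passing from vertices to edges is essential.
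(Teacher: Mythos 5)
Your proof is correct, but it takes a genuinely different route from the paper. The paper argues on \emph{vertices}: writing $C_i' = C \cap N_{s_i}$, it lower-bounds $|C| \ge \sum_{i=1}^m |C_i' \setminus \bigcup_{j<i} C_j'| \ge \Omega(mt) - \Theta(m^2)$ using $|C_i' \cap C_j'| \le 1$ from \Cref{lem:common_neighbors}, and since this bound becomes vacuous for large $m$ (exactly the weakness you identify), it patches the argument with a truncation trick: if $m > \sqrt{t}$, rerun the count on an arbitrary subfamily of $\sqrt{t}$ covered cliques to force $|C| = \Omega(t^{3/2})$, a contradiction. Your argument instead upgrades the same pairwise-intersection bound to \emph{edge}-disjointness of the cliques $H^2[A_i]$ inside $H^2[C]$ (an edge in two of them would force $|A_i \cap A_j| \ge 2$), and then a single double count $\Omega(r t^2) \le |E(H^2[C])| \le \binom{|C|}{2} = O(t^2)$ finishes the proof with no case analysis. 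Both proofs rest on the same ingredient, \Cref{lem:common_neighbors}; yours buys a cleaner one-shot inequality that never degenerates for large $r$, while the paper's stays at the vertex level at the cost of the truncation step. One small caveat: your closing remark that a vertex-based bound is ``too weak'' is accurate only for the unpatched Bonferroni estimate --- the paper demonstrates that the vertex count does suffice once combined with truncation, so edge counting is a simplification rather than a necessity.
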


\begin{proof}
Let $N_{s_1}, \dots, N_{s_m}$ be the $m$ cliques of size $\Omega(t)$ that $C$ covers. We want to show $m = O(1)$.
Let $C_i' = C \cap N_{s_i}$ be the covered part of each clique.
By definition we have $|C_i'| = \Omega(t)$ for each $i$.

Since $\cup_{i=1}^m C_i' \subseteq C$, we can lower bound the size of $C$ as follows:
$$ |C| \ge |C_1'| + |C_2' \setminus C_1'| + \dots + |C_m' \setminus \bigcup_{j=1}^{m-1} C_j'| = \sum_{i=1}^m |C_i' \setminus\bigcup_{j=1}^{i-1} C_j'| $$
We now lower-bound the $i$-th term using the fact that the subtracted part is small because the clique overlaps are small:
$$ |C_i' \cap \bigcup_{j=1}^{i-1} C_j'| = |\bigcup_{j=1}^{i-1} (C_i' \cap C_j')| \le \sum_{j=1}^{i-1} |C_i' \cap C_j'| \le i-1. $$
By plugging this back, we get:
$$ |C_i' \setminus \bigcup_{j=1}^{i-1} C_j'| = |C_i'| - |C_i' \cap (\bigcup_{j=1}^{i-1} C_j')| \ge \Omega(t) - (i-1) $$
Summing all $m$ terms gives a lower bound on the size of $C$:
$$ |C| \ge \sum_{i=1}^m (\Omega(t) - (i-1)) = \Omega(m \cdot t) - \Theta(m^2) $$

If we assume that $m \leq \sqrt{t}$, from the above inequality and the fact that $|C| = O(t)$ we can immediately infer $m = O(1)$.
In the case when $m$ is larger than $\sqrt{t}$ we simply apply the proof to arbitrarily chosen $\sqrt{t}$ cliques covered by $C$.
\end{proof}

\begin{proof}[Proof of~\Cref{thm:shatter}]
We use the graph $G_n = H^2_{n, t}$ for $t = n^{0.1}$ to prove the theorem.
By \Cref{lem:cliquecount}, the graph has $n$ maximal cliques of size $\Theta(t)$.
Let $\mathcal{C}$ be any non-overlapping clustering of $H^2$ into clusters of constant density.
By \Cref{lem:smallsize}, each cluster of $C$ has size $O(t)$.

Clearly, among clusters of $\mathcal{C}$, only clusters of size $\Omega(t)$ can partially cover at least one clique.
As a result, there could be at most $O(n / t)$ of such clusters.
Moreover, by \Cref{lem:coversfew} each of them partially covers at most $O(1)$ distinct cliques.
As a result, only $O(n / t) = O(n^{0.9})$ maximal cliques of $H^2$ can be partially covered (which is a $o(1)$ fraction of all maximal cliques), which is a $o(1)$ fraction of all maximal cliques.
\end{proof}

\subsection{Additional Experimental Results}
In this section, we provide some additional data and observations to supplement our observations from Section\ref{sec:results}. In Table\ref{tab:recursive-calls}, we look at the number of recursive calls made and the percentage of these calls that are explored in the recursion tree. We observe that even with Bron Kerbosch pruning turned off (in our default settings), the vast majority of the calls are not explored, even when $\dens = 1.0$. This correlates with our observation that aggregator sizes usually drop sharply as $\dens$ decreases.

Figure~\ref{fig:all} supplements the membership distribution plots in Section~\ref{sec:redundancy}. Our observations from the main text hold true across datasets, with the plots showing observable differences at $\dens = 0.1$ and $\dens = 1.0$. We also note that the maximum membership for our aggregators are sharply lower for most datasets even at $\dens = 0.9$ compared to $\dens = 1.0$.
\newpage
\begin{table}[htb!]
\centering
\footnotesize
\begin{tabular}{|l|c|c|c|c|}
\hline
\textbf{Dataset} & \textbf{0.1} & \textbf{0.5} & \textbf{0.9} & \textbf{1.0} \\
\hline

euemail  & (988, 100\%) & (1k, 90\%) & (12k, 68\%) & (51k, 56\%) \\
Wiki-Vote & (7k, 100\%) & (30k, 90\%) & (202k, 69\%) & (556k, 58\%) \\
citHepTh & (27k, 100\%) & (69k, 85\%) & (283k, 73\%) & (627k, 62\%) \\
soc-Epinions1 & (76k, 100\%) & (183k, 93\%) & (715k, 71\%) & (2M, 58\%) \\
Slashdot & (85k, 99\%) & (327k, 91\%) & (512k, 81\%) & (1M, 67\%) \\
web-BerkStan & (764k, 100\%) & (1M, 97\%) & (3M, 68\%) & (4M, 67\%) \\
hollywood & (1M, 100\%) & (16M, 89\%) & - & - \\
youtube & (1M, 100\%) & (2M, 93\%) & (3M, 79\%) & (4M, 74\%) \\
pokec & (2M, 97\%) & (13M, 91\%) & (21M, 85\%) & (25M, 79\%) \\
skitter & (2M, 100\%) & (4M, 87\%) & (10M, 75\%) & (45M, 58\%) \\
wiki & (2M, 99\%) & (12M, 90\%) & (25M, 78\%) & (33M, 72\%) \\
WikiTalk & (2M, 100\%) & (3M, 96\%) & (8M, 76\%) & (114M, 55\%) \\
orkut & (10M, 95\%) & (67M, 94\%) & (286M, 74\%) & - \\
cit-Patents & (4M, 100\%) & (8M, 78\%) &(15M, 76\%) & (17M, 74\%) \\
livejournal & (4M, 99\%) & (13M, 90\%) & (30M, 78\%) & - \\
soc-twitter & (25M, 99\%) & (154M, 94\%) & - & - \\
\hline
\end{tabular}
\caption{Number of recursive calls for each dataset given different density thresholds. Note that in our algorithm, a recursion tree is explored if it is not pruned, or if the immediate neighborhood of the vertex is not already denser than our threshold. In brackets, we note the fraction of the calls that persisted. We round to the nearest integer percentage and a `100\%' does not mean that every call was pruned (which would lead to a trivial solution).} \label{tab:recursive-calls}
\end{table}

\end{document}